\tikzset{
>=stealth',
  punktchain/.style={
    rectangle,
    draw=black, very thick,
    text width=15em,
    minimum height=2em,
    text centered,
    on chain},
  line/.style={draw, thick, <-},
  element/.style={
    tape,
    top color=white,
    bottom color=blue!50!black!60!,
    minimum width=8em,
    draw=blue!40!black!90, very thick,
    text width=10em,
    minimum height=3.5em,
    text centered,
    on chain},
  every join/.style={->, thick,shorten >=1pt},
  decoration={brace},
  tuborg/.style={decorate},
  tubnode/.style={midway, right=7pt},
}
\newcommand{\CC}{\mathbb{C}}
\newcommand{\RR}{\mathbb{R}}
\newcommand{\QQ}{\mathbb{Q}}
\newcommand{\HH}{\mathbb{H}}
\newcommand{\ZZ}{\mathbb{Z}}
\newcommand{\M}{\mathcal{M}}
\newcommand{\N}{\mathcal{N}}
\newcommand{\xmod}{{\rm \;mod\;}}
\def\a{\alpha}
\def\c{\gamma}
\def\f{\phi}
\def\im{\mathrm{Im}}
\def\inf{\infty}
\def\l{\lambda}
\def\m{\mu}
\def\n{\nu}
\def\p{\pi}
\def\pa{\partial}
\def\s{\sigma}
\def\t{\tau}
\def\th{\theta}
\def\til{\tilde}
\def\D{\Delta}
\def\O{\Omega}
\def\Tr{\tr}
\newcommand{\resi}{\operatorname{\rm{Res}}}
\newcommand{\SL}{\operatorname{\textsl{SL}}}      
\newcommand{\GL}{\operatorname{\textsl{GL}}}      
\newcommand{\ex}{\operatorname{e}} 
\newcommand{\tr}{\operatorname{tr}}
\newcommand{\Ex}{\operatorname{Ex}}
\newtheorem{theorem}{Theorem}
\newtheorem{lemma}[theorem]{Lemma}
\theoremstyle{definition}
\newtheorem{defn}[theorem]{Definition}
\newcommand{\bea}{\begin{eqnarray}}
\newcommand{\eea}{\end{eqnarray}}
\newcommand{\bee}{\begin{eqnarray*}}
\newcommand{\eee}{\end{eqnarray*}}
\newcommand{\al}{\begin{align*}}
\newcommand{\eal}{\end{align*}}
\newcommand{\be}{\begin{equation}}
\newcommand{\ee}{\end{equation}}
\newcommand{\eq}[1]{(\ref{#1})}
\newcommand{\bem}{\begin{pmatrix}}
\newcommand{\eem}{\end{pmatrix}}
\numberwithin{equation}{section}
\begin{document}

\setstretch{1.2}

\centerline{\huge{3d Modularity}}
\bigskip
\bigskip
\centerline{\large{Miranda C. N. Cheng$^{a,b}$, Sungbong Chun$^c$, Francesca Ferrari$^{b,d}$, Sergei Gukov$^c$, Sarah M.~Harrison$^e$}}
\bigskip
\bigskip
\centerline{$^a$Institute of Physics and Korteweg-de Vries Institute for Mathematics}
\centerline{University of Amsterdam, Amsterdam, the Netherlands}
\medskip
\centerline{$^b$Institute of Physics, University of Amsterdam, Amsterdam, the Netherlands}
\medskip
\centerline{$^c$Walter Burke Institute for Theoretical Physics, California Institute of Technology,
}
\centerline{Pasadena, CA 91125, USA}
\medskip
\centerline{$^d$ International School for Advanced Studies (SISSA), Trieste, Italy
}
\medskip
\centerline{$^e$ Department of Mathematics and Statistics and Department of Physics, }
\centerline{McGill University, Montreal, QC, Canada}
\medskip

\bigskip
\begin{abstract}
We find and propose an explanation for a large variety of modularity-related symmetries
in problems of 3-manifold topology and physics of 3d $\N=2$ theories
where such structures {\it a priori} are not manifest.
These modular structures include: mock modular forms, $\SL(2,\ZZ)$ Weil representations,
quantum modular forms, non-semisimple modular tensor categories,
and chiral algebras of logarithmic CFTs.
\end{abstract}

\newpage
\tableofcontents

\newpage

\section{Introduction and summary}
\label{sec:intro}

This work relies on the interplay between different fields of research, including topology, physics and  number theory. As shown in Figure \ref{fig:bigpic}, each of the three fields asks different questions, brings in different results, and employs different techniques, which all turn out to be related and in fact crucial for one and other. The central object is a certain family of infinite $q$-series ``$\widehat Z_b(q)$'', which plays the role of supersymmetric indices, topological invariants, and quantum modular forms in physics, topology, and number theory respectively.
We hope the results can be of interest to the corresponding communities.
To facilitate this, the introduction is written from three points of view. That said, the readers are encouraged to read all of them to get a complete picture.

\begin{figure}[h]
\begin{center}
\includegraphics[width=5.5in]{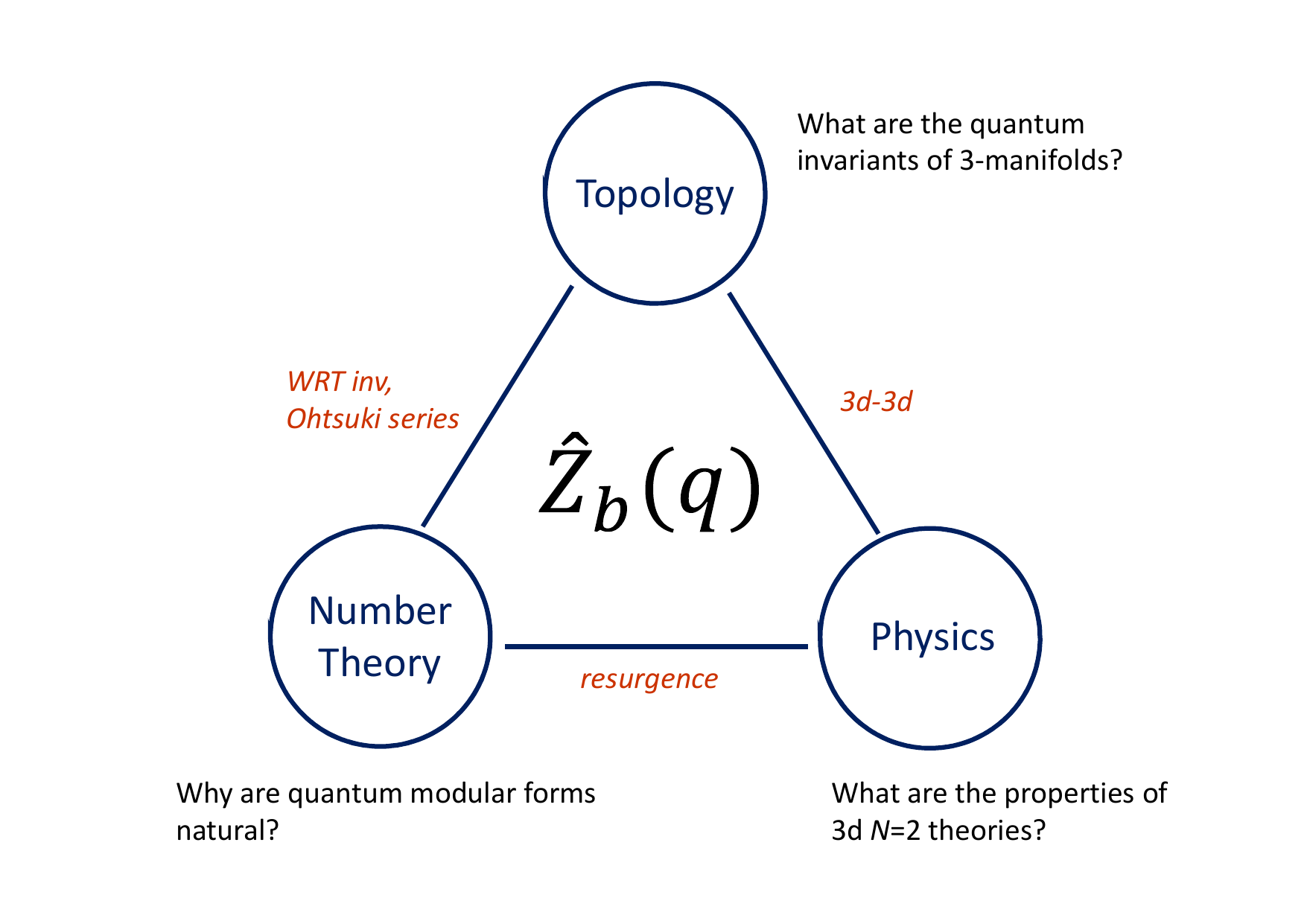}
\caption{The different topics involved in this paper.}
\label{fig:bigpic}
\end{center}
\end{figure}

\subsection{... for physicists}

In the past two decades, tremendous progress has been made in understanding strongly coupled
supersymmetric (SUSY) quantum field theories, even to the extent that insights coming from SUSY theories
motivate progress in non-supersymmetric theories.
In part, this progress is based on rapid development of
localization techniques in supersymmetric theories~\cite{Pestun:2016zxk},
which can be used to compute various partition functions and indices --- including the ones of interest in this paper --- exactly.

{}From the physics point of view, this paper is about a certain 3d analogue of
the famous elliptic genus \cite{Witten:1986bf}.
More precisely, we study the combined index of a 3d $\N=2$ supersymmetric
theory with a half-BPS boundary condition, originally introduced in \cite{Gadde:2013wq}.
While the elliptic genus of 2d $\N=(0,2)$ theories is known to be related to
the traditional theory of modular forms, the combined 3d-2d index
(sometimes called half-index or $D^2\times_q S^1$ partition function)
will be shown to exhibit more subtle and interesting types of modular behavior.
Specifically, in \S\ref{sec:3dphysics} we will discuss three types of modular-like behavior that can be displayed by the half-indices, with an increasing degree of subtlety as the bulk 3d theory becomes more and more non-trivial.

In the process, we also find a new and unexpected way in which 2d logarithmic conformal
field theories (log-CFTs) can arise from supersymmetric quantum field theories,
in fact, from three-dimensional theories!

\subsection{... for topologists}

{}From the point of view of topology, the present paper aims to make progress
on the following long-standing problem :
How can one extend $G_k$ quantum group (Witten--Reshetikhin--Tureav, or WRT in short) invariants of 3-manifolds
away from roots of unity, to the interior of the unit disk $|q|<1$?

Surprisingly, recent physics developments \cite{Gukov:2016gkn,Gukov:2016njj},
brought about by studying M5 branes wrapped on 3-manifolds,
predict that a solution to this problem involves not just one function $Z(q)$,
but rather a collection of functions labeled by elements of the finite set
\be
\label{eqn:flatconn}
\p_0 \, {\cal M}^{\text{ab}}_{\text{flat}}(M_3, \SL(2,\CC)) \quad \cong \quad {\text{Tor}} \, H_1(M_3,\ZZ)/\ZZ_2 \,,
\ee
written here for $G=SU(2)$.
Specifically, it was conjectured in \cite{GPPV} that there exist
new 3-manifold invariants $\widehat{Z}_b (q) \in q^{\Delta_b} \ZZ [[q]]$,
which in practice can be computed for a large class of 3-manifolds,
such that
\begin{equation}
\text{WRT}(M_3,k) \; = \; \sum_a e^{2\pi ik \text{CS} (a)} \left( \lim_{q \, \to \, e^{2\pi i /k}} \sum_{b} S^{(A)}_{ab} \, \widehat{Z}_b (q) \, \right),
\label{WRTviaSZ}
\end{equation}
where the sum runs over the connected components
of the moduli space of flat connections~\eqref{eqn:flatconn}.
Another form of this relation, with a few extra details, will appear below,
in \S\ref{sec:mod},
where the role of the $S$-matrix $S^{(A)}$ will also be clarified.
It has the following explicit form
\be
\label{def:Sabelian}
S^{(A)}_{ab} \; = \; { \sum_{a' \in \{\ZZ_2\text{-orbit of }a\}} \ex(2\lambda(a',b))  \over  \sqrt{|{\text{Tor}}H_1(M_3,\ZZ)|}},
\ee
and only depends on basic topological invariants of the 3-manifold,
such as $H_1(M_3,\ZZ)$ with its inner inner product $\lambda$,
on which the Weyl group $\ZZ_2$ acts by $a\mapsto -a$.

One of our main results in this paper is that $q$-series invariants $\widehat{Z}_a (M_3)$
have a ``hidden structure,''
namely the structure of a projective $\SL(2,\mathbb{Z})$ representation,
distinct from the role(s) modular group played in this context so far \cite{Gukov:2016gkn,GPPV}.
This new structure leads to powerful predictions:

\begin{itemize}

\item
The hidden modular structure helps to determine $\widehat{Z}_a (M_3)$
when $\widehat{Z}_a (-M_3)$ is known (\S\ref{sec:otherside}).
For example, it leads to the following new prediction:
\begin{multline}
\widehat{Z}_1 \left( -M(-2;\frac{1}{2},\frac{1}{3},\frac{1}{2}) \right) \; = \;
2 q^{\frac{5}{12}} - q^{\frac{9}{24}} + q^{\frac{9}{24}} \sum_{n\geq 1}{(-1)^nq^{n} \over (-q;q)_n} = \\
= 2 q^{\frac{5}{12}} - q^{\frac{9}{24}} \left( 1 + q - 2 q^2 + 3 q^3 + \ldots \right)
\label{proposal621}
\end{multline}
which so far was not accessible by any other methods.

\item
It also provides a clear picture of what happens --- at the level of $q$-series $\widehat{Z}_a (q)$
and at the level of the underlying representation theory --- when $q$ approaches a root of unity,
{\it cf.} Figure~\ref{fig:radial}.
In particular, it clarifies when and why one should expect ``corrections'' at the roots
of unity (\S\ref{sec:MTC} and \S\ref{sec:otherside}).

\item
It suggests why and explains in what ways the underlying algebraic structure is
more delicate and interesting in the case of hyperbolic $M_3$ (\S\ref{sec:logCFT}).

\item
Finally, it provides a very simple ``non-topological'' way to determine pretty much everything
one wants to know about flat connections
on a 3-manifold $M_3$ (\S\ref{sec:modularity} and \S\ref{sec:examples}):
the complete taxonomy,
including the type, stabilizer group, values of the Chern-Simons invariant,
transseries coefficients, explicit computations of the Ohtsuki series
and asymptotic expansions around non-trivial flat connections, {\it etc.}

\end{itemize}

\noindent
The text contains various other advances, developed independently of modularity.
For example, computation of $\widehat{Z}_a (M_3)$ for a large class of indefinite
plumbings is developed in \S\ref{sec:indef}.

\bigskip

All values of the Chern-Simons functional in this paper are defined modulo 1.

\subsection{... for number theorists}

To number theorists, the problems discussed in this paper could serve as a ``factory''
that produces infinitely many $q$-series of increasing complexity and
potentially interesting subtle modularity properties (see \S\ref{sec:3dphysics}).
In particular, one rich family of examples which can be handled explicitly is labeled by decorated graphs
(graphs whose vertices are decorated by integer numbers).
Turning it around, the results from number theory find the following important applications in topology and physics:
\begin{itemize}\vspace{-5pt}
\item making predictions on perturbative and non-perturbative three-manifold topological invariants (\S\ref{sec:resurgence}); \vspace{-5pt}
\item shedding light on the resurgence property of half-indices (\S\ref{sec:resurgence});\vspace{-5pt}
\item helping to determine the unknown $\widehat Z_b(q)$ whose computation is not accessible by other methods at present (\S\ref{sec:otherside}).
\end{itemize}

When complexity is moderate, the resulting $q$-series expressions produced by
our physical/topological ``factory''   turn out to be false theta functions, and their relevance to our problems lies in their  quantum modular structure. In this paper we mainly focus on this situation.
By scrutinizing these properties,
we advocate the important role played by the ``false--mock'' pair in the $\widehat Z_b(q)$ story.
Physically and topologically, the crucial requirements for the relevance of such a pair is
\begin{enumerate}\vspace{-6pt}
\item they are related by a $q\leftrightarrow q^{-1}$ transformation in the appropriate sense; \vspace{-4pt}
\item they have the same transseries expression near $q\to 1$ (or $\t\to 0$), in order to be consistent with requirements coming from Ohtsuki series/perturbative Chern--Simons.
\end{enumerate}

Interestingly, in his famous last letter to Hardy in which he introduced the notion of mock theta functions, Ramanujan wrote \cite{MR947735}
\begin{quote}{\em ``I discovered very interesting functions recently which I call ``Mock'' theta functions.
Unlike the ``False'' theta functions  they enter into mathematics as beautifully as ordinary theta functions.''},
\end{quote}
and went on to investigate their behaviour when $q$ approaches roots of unity, a property that is pertinent to the 2nd requirement above.
At the same time, it is precisely these two specific properties of the mock theta functions that he investigated -- the $q$-hypergeometric series expressions (\S\ref{subsec:qhyper}) and the radial limits, that led us to propose that false and mock theta functions in fact form a pair playing a starring role in the problems outlined in Figure \ref{fig:bigpic}.
To connect mock with false,
based on earlier works \cite{CLR,Cheng2011,Rhoades} we demonstrate
that mock modular forms and the corresponding Eichler integral give rise to the same asymptotic series near a cusp (Lemma \ref{lem:asymp}), and show that a Rademacher sum defines a function well-defined  in both the upper and lower half of the plane and equal to the two objects in question respectively (Theorem \ref{defined_bothupdown}).  The relation among different modular objects we discuss in this paper is summarised in Figure \ref{diagram_mockquantumfalse}.

The topological/physical ``factory'' also produces objects with higher complexity. At present we do not have a complete picture of exactly which types of modular behaviour they display. However we believe it should be a fruitful endeavor which could shed new light on the novel modular objects or even lead to the discovery of new natural modular-like structures.

Throughout the paper we use the standard notations:
$$
\HH := \{\tau \in \CC\lvert \im \tau >0\}
$$
for the upper-half plane, and
$$
\HH^- := \{\tau \in \CC\lvert \im \tau < 0\}
$$
for the lower-half plane, as shown in Figure \ref{eich_quantum}.
Cusps refer to the natural boundary $\QQ\cup\{i\infty\}$ of $\HH$ and similarly for $\HH^-$.
By mock modular forms we have in mind the modern definition  which defines them in terms of their non-holomorphic modular corrections (Definition \ref{def:mock}), and by mock theta functions we mean the $q$-series that are mock modular forms with theta function shadows up to the multiplication by some rational power of $q$  \cite{Zagiermock}.

\section{From mock to modular, via 3d $\mathcal{N}=2$ theories}
\label{sec:3dphysics}

Topological phases of matter have been actively studied in recent years,
especially in 2+1 dimensions where many interesting examples have been explored
(quantum Hall effect, topological insulators and superconductors, just to name a few).
A prototypical example of such a phase in 2+1 dimensions is a 3d system with a mass gap
which is nevertheless non-trivial and leads to gapless 2d excitations in the presence of boundaries.

A quantum field theory description of such topological phases often can be phrased
in terms of anomalies, which require 2d degrees of freedom to be present on the boundary
in order to compensate the anomaly of a 3d bulk theory.
In turn, the anomalies as well as the vacuum structure of a 3d gapped phase can be
conveniently described by a topological quantum field theory (TQFT) that encodes
the effects of the topological order and long-range entanglement.

A familiar example is the Chern-Simons gauge theory, which has no physical degrees of freedom
and can arise as a low-energy TQFT in a 2+1 dimensional physical system with a mass gap.
In the presence of a boundary, though, it requires 2d massless degrees of freedom charged under
the gauge group --- the so-called ``edge modes'' --- in order to make the combined 2d-3d system non-anomalous.

\begin{figure}[ht]
\centering
\includegraphics[width=3.0in]{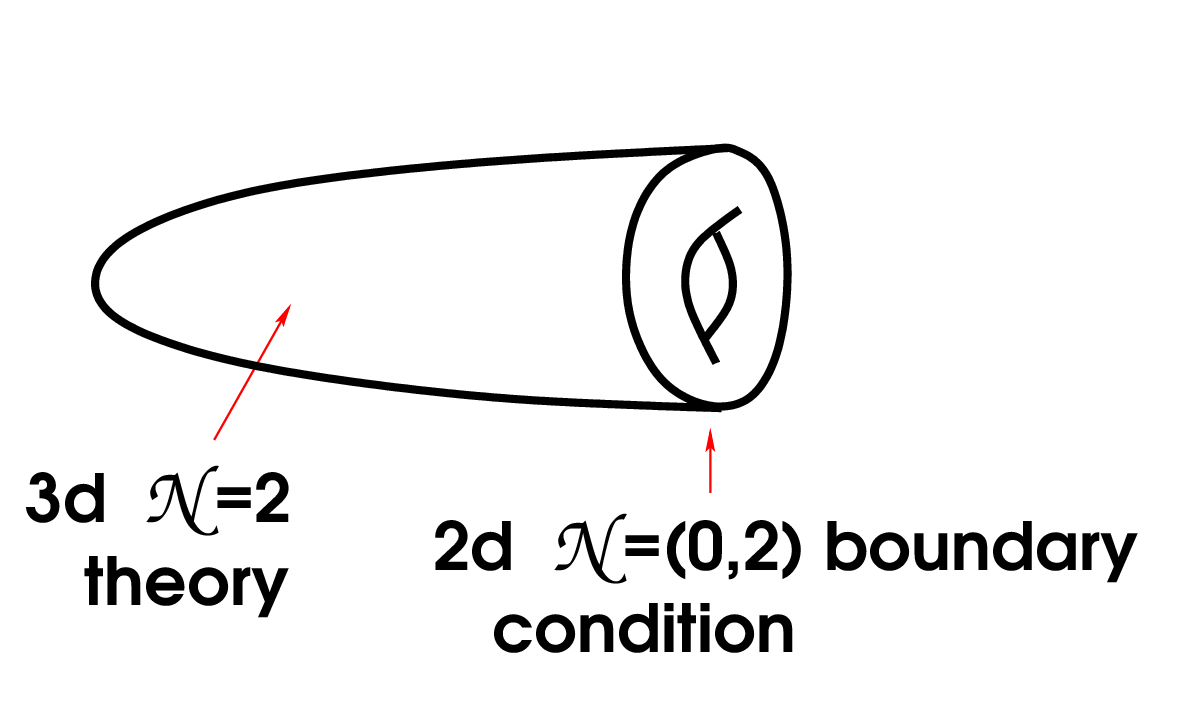}
\caption{A 3d $\mathcal{N}=2$ theory with a 2d $\mathcal{N}=(0,2)$ boundary condition $\mathcal{B}_a$.}
\label{fig:halfindex}
\end{figure}

In the present paper we will be interested in a supersymmetric version of this phenomenon,
where 3d theory with a mass gap has $\mathcal{N}=2$ supersymmetry and 2d ``edge modes'' on the boundary
preserve 2d $\mathcal{N}=(0,2)$ supersymmetry.
The advantage of supersymmetry is that it allows to study the dynamics of such combined 2d-3d system
through the quantities protected by supersymmetry. In the case of 2d $\mathcal{N}=(0,2)$ system,
the elliptic genus is a famous example of such a SUSY-protected quantity and will be our main tool \cite{Witten:1986bf}.
In our problem, illustrated in Figure~\ref{fig:halfindex}, the 2d elliptic genus has a natural
extension \cite{Gadde:2013wq} to the supersymmetric index of the entire 3d theory with
a 2d supersymmetric boundary condition $\mathcal{B}_a$ indexed by a label $a$,
\begin{equation}
\widehat{Z}_a (q) \; = \; Z(D^2\times_q S^1; \mathcal{B}_a) \,.
\label{BlockD2S1}
\end{equation}
A random choice of the 2d $\mathcal{N}=(0,2)$ boundary condition $\mathcal{B}_a$ does not lead to
a $q$-series \eqref{BlockD2S1} with integer powers of $q$ and integer coefficients.
But for a particular choice of boundary conditions --- which correspond to degenerate critical points
of the twisted superpotential $\widetilde{\mathcal{W}}$ when the theory is put on a circle --- the
half-index \eqref{BlockD2S1} does exhibit non-trivial integrality properties:
\be
\widehat{Z}_a (q) \; = \; q^{\Delta_a} \sum_n a_n q^n \,,\qquad\qquad a_n \in \mathbb{Z}
\label{Zan}
\ee
In the context of 3d-3d correspondence, that is for 3d $\mathcal{N}=2$ theories associated with 3-manifolds,
such expressions are sometimes called {\it homological blocks} since the integer coefficients $a_n$ are graded
Euler characteristics of certain homology groups. One of our goals in this paper is to study the modular properties of \eqref{Zan}.

Not only supersymmetry allows to define a protected quantity, it also helps to compute it,
via localization techniques in the regime of weak coupling. This leads to an expression
for the half-index in terms of the contour integral (in the complexified Cartan of the gauge group):
\begin{equation}
\widehat Z_a \; = \; \int \frac{dx}{2\pi i x} \, F_{3d} (x) \, \Theta^{(a)}_{2d} (x)
\label{Zalocalization}
\end{equation}
where the two factors in the integrand, $F_{3d} (x)$ and $\Theta^{(a)}_{2d} (x)$, correspond
to the contributions of 3d theory and 2d boundary degrees of freedom, respectively.

\subsection{The half-index and three-manifolds}

Now let us take a closer look at the definition and the structure of the vortex partition function / half-index \eqref{BlockD2S1},
especially for those boundary conditions $\mathcal{B}_a$ which lead to a power series in $q$ with integer powers and integer coefficients.
Such special boundary conditions were classified in \cite{GPPV} and the corresponding half-index of the combined 2d-3d system
in such cases is known as the homological block (for its relation to homological invariants).

Indeed, when $\widehat{Z}_a (q)$ has a $q$-series expansion \eqref{Zan} we can interpret it as a trace over
the Hilbert space $\mathcal{H}_a$ of the combined 2d-3d system on $\mathbb{R}^2 \cong (\text{cigar})$ (times the ``time circle'')
with boundary condition $\mathcal{B}_a$, as illustrated in Figure~\ref{fig:cigar1}.
Note, the integrality of the coefficients in \eqref{Zan} is crucial for this interpretation.

\begin{figure}[ht]
\centering
\includegraphics[width=3.0in]{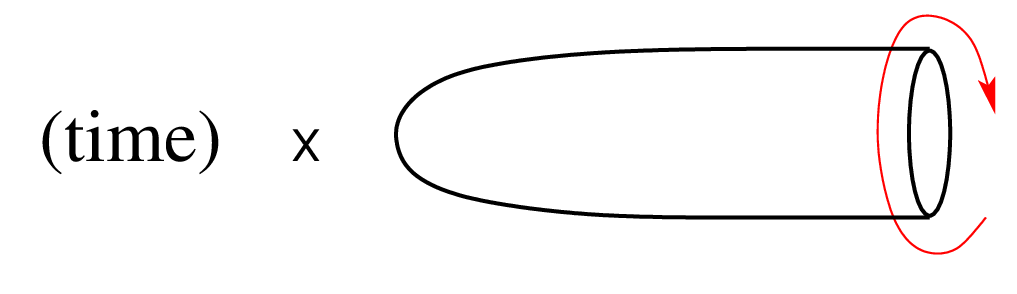}
\caption{A homological block (a.k.a. half-index) counts BPS states of 3d $\mathcal{N}=2$ theory on $(\text{time}) \times (\text{cigar})$.}
\label{fig:cigar1}
\end{figure}

This interpretation of the supersymmetric partition function $\widehat{Z}_a (q)$ is completely analogous
to a similar interpretation of the 3d $\mathcal{N}=2$ superconformal index which, likewise, can be formulated
as a supersymmetric partition function {\it \`a la} \eqref{BlockD2S1} where the 3d space-time $D^2\times_q S^1$
is replaced by $S^1 \times S^2$:
\begin{equation}
\mathcal{I} (q) \; := \; \Tr_{\mathcal{H}_{S^2}}(-1)^F q^{R/2+J_3} \; = \; Z (S^2\times_q S^1)
\end{equation}
In fact, these two supersymmetric indices / partition functions are closely related.
Conjecturally,
\begin{equation}
\mathcal{I} (q) \; = \; \sum_{a}\, |\mathcal{W}_a| \, \widehat{Z}_a (q) \widehat{Z}_a (q^{-1})
\qquad \in\mathbb{Z} [[q]]
\label{IndexFactorization}
\end{equation}
where $|\mathcal{W}_a|$ are certain symmetry factors \cite{GPPV}
and $\widehat{Z}_a (q^{-1})$ is an appropriate extension of $\widehat{Z}_a (q)$ to the region $|q|>1$
(or, equivalently, to $\text{Im} (\tau) < 0$).
Mathematically, the existence of such extension across the border $\text{Im} (\tau) = 0$
is completely non-obvious, but from the physics perspective can be understood as a result of orientation reversal
(parity) transformation on one of the hemispheres $D^2$ that upon gluing produce a 2-sphere $S^2$:
\begin{equation}
\mathcal{I} (q) \; = {\,\raisebox{-.5cm}{\includegraphics[width=6.0cm]{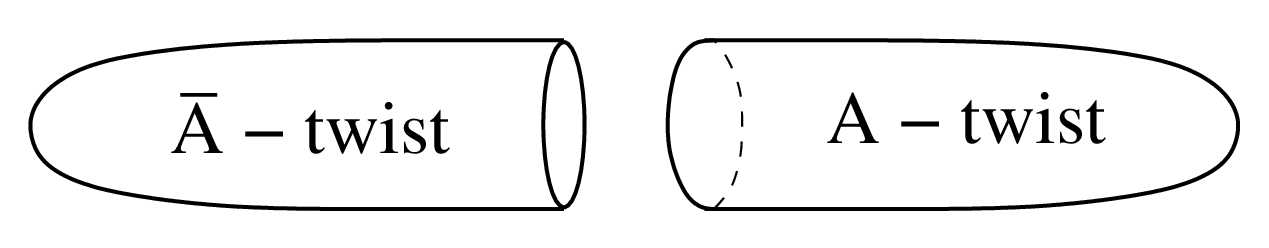}}\,}
\end{equation}

As we shall see in this paper, the question about extending $\widehat{Z}_a (q)$ across the border $\text{Im} (\tau) = 0$
and the search for $\widehat{Z}_a (q^{-1})$ is deeply inter-related to the (quantum) modular properties of the original $q$-series \eqref{Zan}.
The latter, in turn, are determined by the physical properties of the combined 2d-3d system.
There are roughly three qualitatively distinct cases one might consider, which correspond to progressively more
delicate modular properties:
\begin{itemize}

\item
3d ``bulk'' theory is completely gapped and its contribution to the half-index \eqref{BlockD2S1} is trivial, $F_{3d} (x) = 1$.
In this case, \eqref{Zalocalization} basically computes the elliptic genus of the 2d $\mathcal{N}=(0,2)$ boundary theory $\mathcal{B}_a$
and has the standard modular properties of a 2d elliptic genus. In particular, it involves the ordinary modular forms familiar from textbooks.
Examples of this type abound; any (non-relative) 2d $\N=(0,2)$ or $\N=(2,2)$ theory, together with the trivial 3d theory, is an example.

\item
The next case is when 3d $\mathcal{N}=2$ theory is gapped but nevertheless is in a non-trivial topological phase,
as described in the introduction. In this case, the ``dominant'' contribution to the half-index \eqref{BlockD2S1}
still comes from 2d massless degrees of freedom, but the nice modular behavior of the 2d elliptic genus is ``spoiled''
by the non-trivial contribution $F_{3d} (x) \ne 1$ of the 3d $\mathcal{N}=2$ theory.
This case of intermediate complexity in its modular properties is the main subject of the present paper;
in this case, the relevant modular objects are false theta functions and mock modular forms, as well as their close generalisations.
All examples in this paper apart from those presented in \S\ref{sec:4sing} are of this type.
In particular, in \S\ref{sec:examples} and \S\ref{sec:optimal} we present many explicit examples of half-indices
$\widehat{Z}_a$ for 3d $\N=2$ theories $T[M_3]$ that correspond to simple 3-manifolds.

\item
Finally, the most general case that one can consider is when both 2d boundary condition $\mathcal{B}_a$
and 3d $\mathcal{N}=2$ theory have massless degrees of freedom ({\it i.e.} no mass gap). In this case,
the standard modular properties of the 2d elliptic genus of $\mathcal{B}_a$ are considerably distorted by
non-modular behavior of the 3d bulk theory.
Although we expect the objects to be significantly more complicated in this case,
presumably they still exhibit the structure of quantum modular forms.
Clearly, the two previous cases are special instances of this more general behavior.
A typical example of this behavior is 3d $\N=2$ theory $T[S^1 \times \Sigma_g]$,
a.k.a. 3d $\N=2$ adjoint SQCD whose half-index is given by \eqref{Zalocalization}
with the integrand \eqref{Sungnitpick} for general $g>1$.
A slight modification gives a 3d $\N=2$ theory
\be
\begin{array}{c@{\;}|@{\;}c@{\;}|@{\;}c@{\;}|@{\;}c}
& \; ~SU(2)_{\text{gauge}} \; & \; ~\text{R-charge} \; & \; ~\text{boundary condition} \; \\\hline
\text{chiral} & \text{adj} & 2 & \text{Neumann} \\
N_f~\text{chirals} & \Box & 0 & \text{Neumann} \\
N_f~\text{chirals} & \Box & 0 & \text{Dirichlet}
\end{array}
\ee
whose half-index is almost identical, {\it cf.} \cite{GPPV},
\be
\widehat{Z} (q) \; =\; \frac{1}{2 (q;q)_{\infty}} \int \frac{dz}{2\pi i z}
\frac{(1-z^{2}) (1-z^{-2})}{(1-z)^{N_f} (1-z^{-1})^{N_f}} \sum_{n \in \ZZ} q^{n^2} z^{2n}
\label{ZSQCD}
\ee
but which does not arise, to the best of our knowledge, from any 3-manifold via 3d-3d correspondence.\footnote{For various other examples and applications of half-indices see {\it e.g.}~\cite{Gadde:2013wq,Gadde:2013sca,Gukov:2016gkn,GPPV,Dimofte:2017tpi,Costello:2018fnz,Feigin:2018bkf,Dedushenko:2018aox,Dedushenko:2018tgx,Jockers:2018sfl}.}

\end{itemize}

\noindent
This classification, of course, is only qualitative; its main purpose is to provide
an intuitive explanation of the deviation from traditional types of modularity.
In particular, the borderlines between different types of behavior are not sharp
and some examples may fall right in the middle, or one might find sub-classes in each type of behavior.

Although our considerations apply to arbitrary 3d $\mathcal{N}=2$ theories, a particularly large
class of examples comes from 3-manifolds via the so-called 3d-3d correspondence or, equivalently,
compactification of 6d $(0,2)$ fivebrane theory on a 3-manifold $M_3$.
The resulting 3d $\mathcal{N}=2$ theory, usually denoted $T[M_3]$, can therefore be a proxy for
a more general 3d $\mathcal{N}=2$ theory.

For 3d $\mathcal{N}=2$ theories $T[M_3]$, the BPS Hilbert space $\mathcal{H}_a[M_3]$
is a homological invariant of 3-manifolds, and
\begin{equation}
a \; \in \; \pi_0 \, \mathcal{M}_\text{flat}^{\text{ab}} (M_3, G_{\mathbb{C}})
\label{aHW}
\end{equation}
labels the connected components of the moduli space of {\it abelian} flat connections on $M_3$ ({\it c.f.}, \eqref{eqn:flatconn}).
Here, the requirement for the boundary condition $\mathcal{B}_a$ to represent abelian flat connections
is intimately related to the integrality of the resulting $q$-series \eqref{Zan}. As explained in \cite{Gukov:2016njj},
the information about non-abelian flat connections is not lost, but repackaged in the $q$-series $\widehat{Z}_a (q)$
and in its categorification $\mathcal{H}_a[M_3]$.

Aside from its applications in low-dimensional topology, the advantage of working with this class of
3d $\mathcal{N}=2$ theories $T[M_3]$ is that the homological blocks \eqref{BlockD2S1} can be explicitly
computed for many non-trivial examples.
The answer is often expressed as a contour integral \eqref{Zalocalization}, where (up to an overall power of $q$):
\begin{equation}
F_{3d} (x) \; = \; (x-x^{-1})^{2-2g} \,, \qquad\qquad \text{for degree-}p~S^1~\text{fibration over}~\Sigma_g
\label{Sungnitpick}
\end{equation}
\begin{equation}
F_{3d} (x) \; = \;
\prod_{v\;\in\; \text{Vertices} (\Gamma)} \left({x_v-1/x_v}\right)^{2-\text{deg}(v)} \,, \qquad\qquad \text{for plumbing } \Gamma
\label{Fxexamples}
\end{equation}
Note, that $F_{3d} (x)$ does not depend on the choice of 2d $\mathcal{N}=(0,2)$ boundary condition $\mathcal{B}_a$;
this dependence comes through the factor $\Theta^{(a)}_{2d} (x)$ which is basically the elliptic genus of $\mathcal{B}_a$.
For instance, in the above examples \eqref{Fxexamples}:
\begin{equation}\label{eqn:gentheta}
\Theta^{(a)}_{2d} (x) \; = \;
\sum_{\ell \in 2M\mathbb{Z}^L+a}q^{-\frac{(\ell,M^{-1}\ell)}{4}} \prod_{v\in \text{Vertices($\Gamma$)}} x_v^{\ell_v}
\end{equation}
is the theta function for the lattice determined by the linking form $M$ of $\Gamma$,
such that $H_1 (M_3,\mathbb{Z}) \cong \mathbb{Z}^L/M\mathbb{Z}^L$ \cite{GS}.
For a degree-$p$ $S^1$ fibration over $\Sigma_g$, we simply have
$\Theta^{(a)}_{2d} (x) = \sum_{n\in p\mathbb{Z}+a}q^{n^2/p}\,x^{2n}$.

\section{Three encounters of modularity}
\label{sec:mod}

In our journey we encounter three different $S$-matrices and three
corresponding ``$\SL(2,\ZZ)$ representations'':

\begin{itemize}

\item
One set of modular $S$ and $T$ matrices encodes the information about all {\it twisted indices}
of 3d $\N=2$ theories on \cite{Gukov:2016gkn}. In fact, this numerical data is a part of much
richer structure, namely the modular tensor category whose Grothendieck group is
the space of supersymmetric states of a 3d $\N=2$ theory on $\RR \times T^2$.
When combined with 3d-3d correspondence, it associates a modular tensor category (MTC for short), $\text{MTC} [M_3]$, to every closed 3-manifold $M_3$.

\item
A different, much simpler $S$-matrix $S^{(A)}$ already appeared in \eqref{def:Sabelian}.
It is trying to make an ``$\SL(2,\ZZ)$ representation'' out of
the set \eqref{eqn:flatconn} of {\it abelian} flat connections on $M_3$ and,
in the basic case $H_1(M_3,\ZZ) = \ZZ_p$, takes a simple form
\be
S_{ab} \; = \; \frac{\cos 2\pi \frac{ab}{p}}{1 + \delta_{a,0}}
\ee
This peculiar ``$\cos$" representation of $\SL(2,\ZZ)$ is suggestive of
a non-semisimple MTC common in logarithmic conformal field theory.
It appears to be related to another connection with logarithmic CFTs
which enters our story again in \S\ref{sec:logCFT}.

\item
The last but not least --- in fact, the most important to us here --- is the projective $\SL(2,\ZZ)$ representation
that describes modular properties of the $q$-series $\widehat Z_a (q)$.

\end{itemize}

\noindent
The main goal of this section is to describe each of these (close cousins of) $\SL(2,\ZZ)$ representations, and we devote each one a subsection.

\subsection{Twisted indices of 3d $\N=2$ theories}

Three-dimensional $\N=2$ theories, with or without a Lagrangian description, do not have sufficient supersymmetry
to admit a full topological twist on an arbitrary 3-manifold.
However, when $U(1)_R$ R-symmetry is unbroken, they can be twisted on $S^1 \times \Sigma_g$ or,
more generally, on a degree-$p$ circle bundle over a genus-$g$ surface $\Sigma_g$.
Such partition functions are sometimes called {\it twisted indices} of 3d $\N=2$ theories
and, for general $g$ and $p$, their entire structure is captured by a modular tensor category (MTC)
that can be assigned to a 3d $\N=2$ theory \cite{Gukov:2016gkn}.

Among other things, this rich structure involves modular $S$ and $T$ matrices,
whose values $S_{0 \alpha}$ and $T_{\alpha \alpha}$
allow to write a general formula for twisted indices in a succinct form:
\be
Z_{\text{twisted}} \; = \; \sum_{\alpha} (S_{0 \alpha})^{2-2g} (T_{\alpha \alpha})^p.
\ee
When 3d $\N=2$ theory in question admits a Lagrangian description,
the sum over $\alpha$ can be interpreted as a sum over solutions
to the Bethe ansatz equations using the standard localization technique,
whereas $S_{0 \alpha}$ and $T_{\alpha \alpha}$
can be identified with what sometimes are called
handle-gluing and twist/fibering operators:
\be
\begin{aligned}
S_{0 \alpha} &= \text{``handle-gluing operator''}
 \\
T_{\alpha \alpha} &= \text{``twist/fibering operator''}.
\end{aligned}
\ee

In the context of 3d-3d correspondence, {\it i.e.} for 3d $\N=2$ theories $T[M_3]$,
this modular tensor category is effectively assigned to a 3-manifold $M_3$
(plus a choice of the root system) and was dubbed $\text{MTC} [M_3]$ in \cite{Gukov:2016gkn}.
Correspondingly, the $S$ and $T$ matrices then admit interpretation in terms of
the topological data of the 3-manifold $M_3$. 
For instance, 
\be
T_{\alpha \beta} \; = \; \delta_{\alpha \beta} \, e^{2 \pi i \text{CS} (\alpha)}
\label{TviaCS}
\ee
where $\text{CS} (\alpha)$ is the Chern-Simons invariants of
a flat connection $\alpha : \pi_1 (M_3) \to G_{\CC}$,
defined modulo 1.
Similarly, $S_{0 \alpha}$ is related to the Reidemeister torsion of $M_3$
twisted by $\alpha$; this relation easily follows  from {\it e.g.} \cite{Gukov:2016njj},
where it also appears as the constant ($\hbar$-independent)
coefficient of the transseries $Z_{\text{pert}}^{(\alpha)}$.

If $\text{MTC} [M_3]$ is a representation category of some conformal field
theory (or, equivalently, vertex algebra) with diagonalizable $T$-matrix, we can use the standard relation
in conformal field theory, $T_{\alpha \alpha} = e^{2 \pi i (\Delta_{\alpha} - \frac{c}{24})}$,
to write \eqref{TviaCS} in terms of the conformal dimensions $\Delta_{\alpha}$:
\be
\text{CS} (\alpha) \; = \; \Delta_{\alpha} - \frac{c}{24}
\label{CSconfdim}
\ee
This relation plays an important role in various gluing formulae
of 4-manifold invariants \cite{Feigin:2018bkf}.

Note, the $S$ and $T$ matrices of $\text{MTC} [M_3]$ described here
have elements, $S_{\alpha \beta}$ and $T_{\alpha \beta}$, labeled by
$\alpha$ and $\beta$ which run over {\it all} flat connections on $M_3$,
abelian and non-abelian, reducible and irreducible:
\be
\alpha, \beta \; \in \; \pi_0 \, \M_{\text{flat}} (G_{\CC}, M_3)
\ee
This is in stark contrast with ``modular'' matrices that enter the relation \eqref{WRTviaSZ}
between $\widehat Z_a$ and Witten-Reshetikhin-Turaev invariants of $M_3$:
\be
\label{relation_CS_blocks}
\text{WRT}(M_3,k) = \frac{q^{\D}}{2^{c} \, (\sqrt{k})^{1-b_1(M_3)}}  \sum_{a,b} \ex(k\lambda(a,a)) S^{(A)}_{ab} \widehat{Z}_b(q) \lvert_{q\to \ex({1\over k})}
\ee
The peculiar $S$-matrix $S^{(A)}$ that appears here will be the subject
of the next subsection; $c$ and $\D$ are certain rational numbers, and
the sum runs over connected components of the moduli space of flat connections \eqref{eqn:flatconn}
equipped with a bilinear form $\lambda$ given by
the linking pairing on the torsion part of $H_1(M_3,\ZZ)$.

\begin{figure}[ht]
\centering
\includegraphics[width=2.6in]{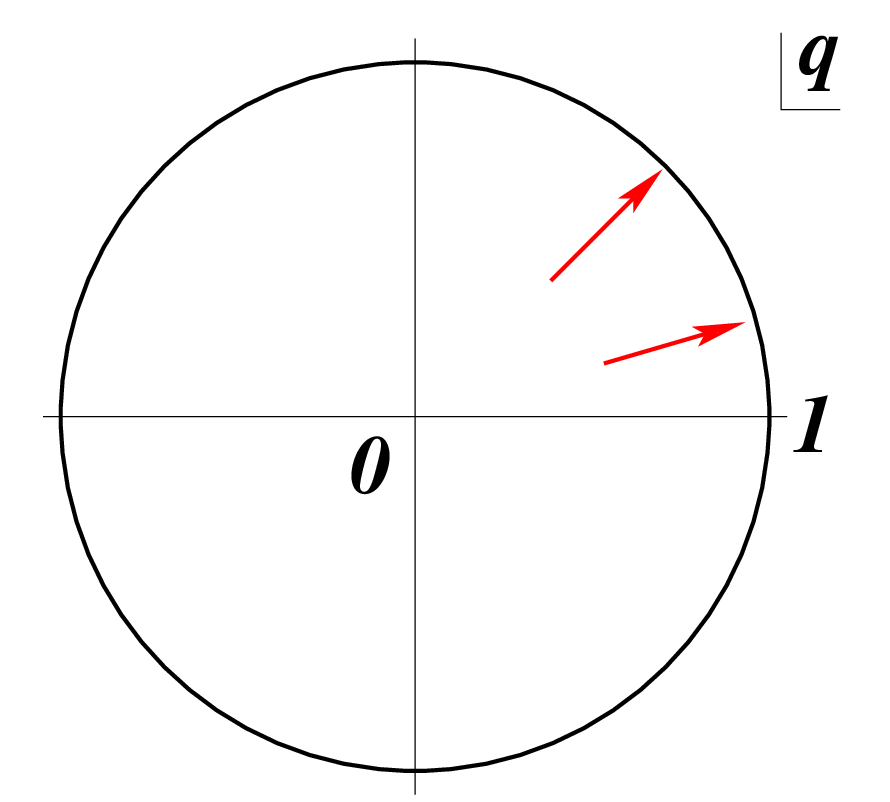}
\caption{The limit $q \to e^{2\pi i /k}$, with $k \in \ZZ$,
enters many aspects of our story:
the Kazhdan-Lusztig correspondence,
the relation between $\widehat{Z}_a (M_3)$ and WRT invariants,
the relation between mock modular forms and false thetas, {\it etc.}}
\label{fig:radial}
\end{figure}

\subsection{$\widehat{Z}_a$ and non-semisimple MTCs}
\label{sec:MTC}

Many examples of modular tensor categories arise as representation
categories of vertex operator algebras (VOAs).
In particular, rational VOAs give rise to semisimple representation categories,
whereas more esoteric logarithmic VOAs lead to non-semisimple MTCs,
in the sense of Lyubashenko~\cite{Lyubashenko}.

As the name suggests, a key feature of logarithmic CFTs (equivalently, the corresponding VOAs)
is that some correlation functions exhibit logarithmic behavior.
This happens when the Hamiltonian $L_0$ is non-diagonalizable (has non-trivial Jordan blocks)
and, therefore, necessarily requires representations which are {\it reducible},
but not {\it decomposable}.\footnote{{\it Indecomposable} means that
a representation can not be written as a direct sum of other non-trivial representations.
A good example to keep in mind is that of a finite-dimensional non-semisimple
algebra $\mathcal{A}$ with finitely many irreducible (simple) modules $M_i$:
\be
\mathcal{A} \; = \; \bigoplus_{i=1}^n (\dim M_i) \mathcal{P}_i
\label{AdecompviaMP}
\ee
where $\mathcal{P}_i$ denotes the indecomposable projective cover of $M_i$,
such that $\dim \text{Hom} (\mathcal{P}_i, M_j) = \delta_{ij}$.}
The converse also appears to be true, and another key feature of logarithmic CFTs is the presence of
irreducible representations which have non-trivial extensions among themselves. 

Perhaps the simplest and most well-known examples of logarithmic CFTs
with such properties are the so-called $(1,p)$ triplet models.
They have central charge\footnote{In the special case $p=2$, we have
$$
\mathcal{W}_2 \; \cong \; SF_1^+
$$
where $SF^+_d$ denotes the even part of the symplectic fermions $SF_d$,
another popular family of logarithmic vertex superalgebras,
with the central charge $c = -2d$.}
\be
c \; = \; 1 - 6 \frac{(1-p)^2}{p}
\; = \; 1 - 3 \alpha_0^2
\label{1pccharge}
\ee
where we use the standard CFT notations
\be
\alpha_0 = \alpha_+ + \alpha_-
\,, \qquad \qquad
\alpha_+ = \sqrt{2p}
\,, \qquad \qquad
\alpha_- = - \sqrt{\frac{2}{p}}
\label{aaa}
\ee
The name ``triplet'' comes from the fact that the corresponding vertex algebra,
usually denoted either $\mathcal{W}_p$ or $\mathcal{W} (2,(2p-1)^{\otimes 3})$,
is an extension of the Virasoro algebra by the $\frak{sl} (2)$ triplet of
the Virasoro primary fields $W^{\pm , 0} (z)$
of conformal dimension $2p-1$ \cite{Fuchs:2003yu}:
$$
W^- (z) \; = \; e^{- \alpha_+ \varphi} (z)
\,, \qquad
W^0 (z) \; = \; [S_+ , W^- (z)]
\,, \qquad
W^+ (z) \; = \; [S_+ , W^0 (z)]
$$
Here, $S_+$ is the ``long'' screening operator \eqref{ScreeningS}
that will be useful to us later.

The triplet algebra $\mathcal{W}_p$
has $2p$ irreducible representations $\mathcal{X}^{\pm}_s$, $s=1, \ldots, p$,
with conformal dimensions
\be
\Delta (\mathcal{X}^+_s) \; = \; \frac{(p-s)^2}{4p} + \frac{c-1}{24}
\ee
\be
\Delta (\mathcal{X}^-_s) \; = \; \frac{(2p-s)^2}{4p} + \frac{c-1}{24}
\ee
Unlike the familiar case of a rational CFT,
the characters of the irreducible representations $\mathcal{X}^{\pm}_s$,
\begin{multline}
\chi_{s}^+ (q) \; := \; \text{Tr}_{\mathcal{X}_{s}^+} q^{L_0 - \frac{c}{24}}
\; = \; \frac{q^{-1/24}}{\prod_{n=1}^{\infty} (1-q^n)}
\sum_{n \in \ZZ} (2n+1) q^{p (n + \frac{p-s}{2p})^2} \; = \;  \\
\; = \; \frac{1}{\eta (q)} \left( \frac{s}{p} \theta_{p-s} (q) + 2 \theta_{p-s}' (q) \right)
\end{multline}
\begin{multline}
\chi_{s}^- (q) \; := \; \text{Tr}_{\mathcal{X}_{s}^-} q^{L_0 - \frac{c}{24}}
\; = \; \frac{q^{-1/24}}{\prod_{n=1}^{\infty} (1-q^n)}
\sum_{n \in \ZZ} 2n q^{p (-n + \frac{s}{2p})^2} \; = \;  \\
\; = \; \frac{1}{\eta (q)} \left( \frac{s}{p} \theta_{s} (q) - 2 \theta_{s}' (q) \right)
\end{multline}
do not close under the action of the modular group $\SL(2,\ZZ)$.
This is a general feature of logarithmic CFTs.
Indeed, just like correlation functions, (modular transformations of) characters
in logarithmic theories involve logarithms and naively take values in $\ZZ [[q]] [\log q]$.
Then, formal manipulations that re-express $\log q$ terms
as power series in $q$ often lead to expressions
which are not modular in the traditional sense ({\it e.g.} they can be mock modular)
and also contain both positive and negative coefficients in the $q$-expansion.
This formal way of rewriting $\log q$ terms via $q$-series is precisely
what one encounters in the analytic continuation of WRT invariants
away from roots of unity~\cite{Pei:2015jsa,GPPV}.
This parallel between $\widehat{Z}_a (q)$ and (pseudo-)characters of log-CFTs
will be developed further in \S\ref{sec:logCFT}.

The modular properties of the characters can be restored by augmenting
them with a set of ``extended'' characters (or, ``pseudo-characters'').
In the case of the logarithmic $(1,p)$ triplet model, this means that,
in addition to the $2p$ characters $\chi_{s}^{\pm} (q)$,
one needs to introduce $p-1$ pseudo-characters,
which then altogether form a $(3p-1)$-dimensional
projective representation ${\frak Z}$ of $\SL(2,\ZZ)$.
This representation can be identified with the endomorphisms
of the identity functor in the category of VOA modules
and has the structure~\cite{Feigin:2005zx}:
\be
\frak{Z} \; = \;
\mathcal{R}_{p+1} \; \oplus \; \CC^2 \otimes \mathcal{R}_{p-1}
\label{QGcenter}
\ee
where $\mathcal{R}_{p-1}$ is the $(p-1)$-dimensional
``$\sin \frac{\pi rs}{p}$" representation of $\SL(2,\ZZ)$
on the unitary $\widehat{\frak{sl}(2)}_{p-2}$ characters,
and $\CC^2$ is the defining two-dimensional representation of $\SL(2,\ZZ)$.
Of most interest to us here is a non-unitary $(p+1)$-dimensional
``$\cos \frac{\pi rs}{p}$" representation $\mathcal{R}_{p+1}$
of $\SL(2,\ZZ)$ that does not come from any familiar rational CFT.
In particular, it has a non-diagonalizable $T$-matrix.

Much like $\CC [\mathcal{M} (G_{\CC}, M_3)]$ is isomorphic (as a set)
to the Grothendieck ring of $\text{MTC} [M_3]$ described in the previous subsection,
$\frak{Z}$ in \eqref{QGcenter} is related to the Grothendieck ring of
a non-semisimple MTC.\footnote{Here we find a connection
to non-semisimple MTCs based on non-perturbative arguments
and modular properties of the partition functions. These arguments are consistent
with braiding properties of Wilson lines in complex Chern-Simons theory and quantization
of the moduli space of flat $G_{\CC}$-connections~\cite{DEGVunpublished}.}
Its structure is most easily understood via the Kazhdan-Lusztig correspondence
which we describe next.
In particular, the Kazhdan-Lusztig correspondence helps to see the structure
of indecomposable modules which, as advertised earlier, are responsible for
the logarithmic nature of the CFT,\footnote{They are also responsible for
the additional mysterious pseudo-characters, which can be viewed as
modified traces $\tilde \chi_V (q) = \text{Tr}_{V} \, g \, q^{L_0 - \frac{c}{24}}$,
twisted by $g \in \text{End} (V)$.}
and which can be constructed as (iterative) extensions of irreducible (simple) modules.
In particular, in the end of this process one finds projective modules
with the following structure ($1 \le s \le p-1$):
\be
\label{PmodXXXX}
\xymatrix{
\\
\mathcal{P}_s^{\pm} : \\
}
\qquad
\xymatrixcolsep{4pc}
\xymatrix{
& \mathcal{X}_s^{\pm} \ar[ld] \ar[rd] & \\
\mathcal{X}_{p-s}^{\mp} \ar[rd] & & \mathcal{X}_{p-s}^{\mp} \ar[ld] \\
& \mathcal{X}_s^{\pm} & }
\ee
where, following \cite{Feigin:2005zx,Feigin:2005xs}, we denote extensions by
\be
\overset{\displaystyle{\mathcal{X}^{\pm}_s}}{\bullet}
\xrightarrow[~~~~]{~~~~}
\overset{\displaystyle{\mathcal{X}^{\mp}_{p-s}}}{\bullet}
\ee
so that arrow always points from the irreducible subquotient to the irreducible submodule.
A reflection of the diamond diagram \eqref{PmodXXXX} is a simple example
of an endomorphism of the projective module $\mathcal{P}_s^{\pm}$.

\subsubsection{Kazhdan-Lusztig correspondence}

It is a relatively well known and widely used fact that fusion rules of a WZW model
are related to representation theory of a quantum group at a primitive root of unity.
Much less appreciated, however, is the key aspect of this relation which involves {\it semisimplification}.
Namely, the semisimple MTC which describes the semisimple fusion in rational CFT is only a quotient of
the representation category of a quantum group by the ideal of indecomposable tilting modules.

Curiously, this correspondence --- called the Kazhdan-Lusztig correspondence \cite{KL12,KL3,KL4} --- between
fusion algebra of a CFT and the Grothendieck ring of the corresponding quantum group is actually more direct in the case of logarithmic CFTs.
While surprising at first, there is a simple reason for it:
the MTC associated to a logarithmic VOA is not
semisimple and, therefore, the corresponding category on the quantum-group side requires no
semisimplification.

The semisimplification is only necessary if we wish to make an additional step
and relate quantum groups at roots of unity (or logarithmic CFTs) to rational WZW models.
Its implication for 3-manifolds is that $q$-series invariants $\widehat{Z}_a (M_3)$
--- which, as we explain below, are naturally related to (characters of) logarithmic CFTs ---
may require certain corrections at roots of unity, when comparing to WRT invariants of $M_3$,
{\it cf.} Table~\ref{tab:dict}.

\begin{table}[htb]
\centering
\renewcommand{\arraystretch}{1.3}
\begin{tabular}{|@{\quad}c@{\quad}|@{\quad}c@{\quad}| }
\hline  {\bf 3-manifolds} & ~~~~{\bf Logarithmic CFTs}~~
\\
\hline
\hline flat connections & modules \\
\hline invariants $\widehat{Z}_a (q)$ & characters $\chi (q)$ \\
\hline ``mock side'' & KL ``positive zone'' \\
\hline ``false side'' & KL ``negative zone'' \\
\hline ``corrections'' at roots of unity & semisimplification \\
\hline
\end{tabular}
\caption{Mysterious duality between 3-manifolds and logarithmic CFTs.}
\label{tab:dict}
\end{table}

The restricted (a.k.a. ``baby'') quantum group $\overline{{\mathcal U}}_q (\frak{sl}_2)$
at the primitive $2p$-th root of unity $q = e^{\frac{i \pi}{p}}$ is defined by supplementing the usual relations\footnote{See \cite{Chun:2015gda} for a friedly introduction and a physical realization
of the Lusztig quantum groups in the setup of \cite{Gukov:2016gkn,GPPV}
that leads to $\widehat{Z}_a (M_3)$.
In a two-dimensional description of this setup, $E$ and $F$ generators of the quantum group
correspond to half-BPS interfaces of a 2d $\N=(2,2)$ CFT (Kazama-Suzuki model),
so that the quantum group emerges as an algebra of interfaces.}
\be
[E,F] \; = \; \frac{K - K^{-1}}{q - q^{-1}}
\,, \qquad
KEK^{-1} \; = \; q^2 E
\,, \qquad
KFK^{-1} \; = \; q^{-2} F
\ee
with
\be
E^p \; = \; 0 \; = \; F^p
\qquad , \qquad K^{2p} \; = \; {\bf 1}
\ee
The resulting quotient of the (perhaps) more familiar quantum group ${\mathcal U}_q (\frak{sl}_2)$ is,
in fact, finite-dimensional, namely $2p^3$-dimensional.\footnote{Its regular
representation has the structure \eqref{AdecompviaMP}:
$$
\text{Reg} \; = \;
\bigoplus_{s=1}^{p-1} s \mathcal{P}_s^+
\oplus \bigoplus_{s=1}^{p-1} s \mathcal{P}_s^-
\oplus p \mathcal{X}_p^+
\oplus p \mathcal{X}_p^-
$$
Dimensions of various pieces, then, add up as follows:
$2p \cdot \sum_{s=1}^{p-1} s + 2p \cdot \sum_{s=1}^{p-1} s + p \cdot p + p \cdot p = 2p^3$,
where we used \eqref{xmoddim}, \eqref{projmoddim} and \eqref{Steinberg}.}
It has $2p$ irreducible representations $\mathcal{X}^{\pm}_s$, $s=1, \ldots, p$,
with the highest weight $\pm q^{s-1}$:
\be
\dim \mathcal{X}_s^{\pm} \; = \; s \;,
\qquad \qquad
\text{h.w.} \left( \mathcal{X}_s^{\pm} \right) \; = \; \pm q^{s-1}
\label{xmoddim}
\ee
and a $(3p-1)$-dimensional center,
which carries a projective $\SL(2,\ZZ)$ representation~\cite{Feigin:2005zx,Feigin:2005xs}:
\begin{equation}
\dim {\frak Z} \; = \; 3p-1
\end{equation}
Under the Kazhdan-Lusztig correspondence,
$\mathcal{X}_s^{\pm}$ and ${\frak Z}$ are identified, respectively,
with the irreducible representations and the space \eqref{QGcenter}
of pseudo-characters of the triplet algebra $\mathcal{W}_p$,
denoted by the same letters.

According to the Kazhdan-Lusztig correspondence,
not only the projective $\SL(2,\ZZ)$ representations are supposed to match,
but the entire representation categories
of $\mathcal{W}_p$ and $\overline{{\mathcal U}}_q (\frak{sl}_2)$
should be equivalent as braided tensor categories.
In particular, apart from $2p$ irreducible modules $\mathcal{X}^{\pm}_s$
there are also $2p$ Verma modules $\mathcal{V}^{\pm}_s$, $1 \le s \le p$,
and $2p$ projective modules $\mathcal{P}^{\pm}_s$, $1 \le s \le p$,
of dimension
\be
\dim \mathcal{P}^{\pm}_s \; = \; 2p
\,,\qquad \qquad
\text{qdim} \mathcal{P}_s^{\pm} \; = \; 0
\qquad \qquad (1 \le s \le p-1)
\label{projmoddim}
\ee
For generic $s \ne p$, they are given by extensions
\be
0 \to \mathcal{X}^{\mp}_{p-s} \to \mathcal{V}^{\pm}_s \to \mathcal{X}^{\pm}_s \to 0
\ee
and
\be
0 \to \mathcal{V}^{\mp}_{p-s} \to \mathcal{P}^{\pm}_s \to \mathcal{V}^{\pm}_s \to 0
\ee
respectively.
This is precisely the structure depicted in \eqref{PmodXXXX}.
In the special case $s=p$, the two modules
\be
\mathcal{X}^{\pm}_p \; = \;
\mathcal{V}^{\pm}_p \; = \;
\mathcal{P}^{\pm}_p
\label{Steinberg}
\ee
are irreducible, Verma, and projective simultaneously.
They are called {\it Steinberg modules} by analogy with what happens
in the quantum group over $\mathbb{F}_p$.

As we already mentioned earlier, another statement of the Kazhdan-Lusztig correspondence is that fusion algebra
of the logarithmic CFT is supposed to match the Grothendieck ring of $\overline{{\mathcal U}}_q (\frak{sl}_2)$.
The latter is generated over $\mathbb{Z}$ by $x=\mathcal{X}^{+}_2$ \cite{Feigin:2005zx}:
\begin{equation}
\text{Gr} \; = \; \left. \mathbb{Z} [x] \right/ (x-2)(x+2) \prod_{j=1}^{p-1} \big( x - 2 \cos \tfrac{\pi j}{p} \big)^2
\end{equation}
Note, in the Grothendieck ring there is no difference between direct sums and non-trivial extensions,
so that $[\mathcal{P}_s^{\pm}] = 2 [\mathcal{X}_s^{\pm}] + 2 [\mathcal{X}_{p-s}^{\mp}]$, {\it etc.}

\begin{table}[h!]
\centering
\renewcommand{\arraystretch}{1.3}
\begin{tabular}{|@{\quad}c@{\quad}|@{\quad}c@{\quad}| }
\hline  {\bf 3d topology and 3d BPS states}  & {\bf Modularity}
\\
\hline
\hline {non--abelian $\SL(2,\CC)$ flat connections} & $S$-matrix condition \eq{rule_nonAb} \\
\hline {complex flat connections} & $S$-matrix condition \eq{rule_complex} \\
\hline {pole contributions in the Borel} & Weil representation \\
{resummation of $\widehat{Z}_a(M_3)$} & $S$-matrix ${\cal S}^{(B)}$ \\
\hline {Chern--Simons invariants} & $T$-matrix ${\cal T}^{(B)}$ \eq{rule_nonAb}\\
\hline homological blocks & false theta functions  \\
$\widehat{Z}_a(M_3)$ and $\widehat{Z}_a(- M_3)$ & ~~and mock theta functions ~ \\
\hline
\end{tabular}
\caption{The correspondence between modularity and topology/BPS states.}
\label{dic_table}
\end{table}

\subsection{The Weil representations}
\label{sec:WeilRep}

As mentioned in \S\ref{sec:intro}, via Chern--Simons theory a representation for (the metaplectic double cover of) $\SL(2,\ZZ)$ is attached to the 3-manifold $M_3$, and this representation plays an important role in the categorification of 3-manifold invariants. 
Its S-matrix $S^{(B)}$ captures the perturbative as well as non-perturbative data of the homological blocks $\widehat{Z}_a (M_3)$.
By relating the homological blocks to the Chern--Simons partition functions (or WRT invariants), we see that its $T$- and  $S$-matrices give sharp predictions for the data of non--abelian $\SL(2,\CC)$ flat connections, including their numbers, Chern--Simons invariants, and whether they are real or complex flat connections\footnote{We say that an $\SL(2,\CC)$ flat connection is real if it is conjugate to an $SU(2)$ flat connection and complex otherwise.}.
These relations are summarised in Table~\ref{dic_table}.
In this subsection we give explicit details of these representations. 

In the main classes of examples, which are various Seifert manifolds with three or four singular fibers, the relevant  representations are (based on) the so-called Weil representations. Given a positive-definite lattice, one can associate a Weil representation, which is a representation for  $\SL(2,\ZZ)$ when the rank is even and a representation for the metaplectic double cover $\widetilde {\SL(2,\ZZ)}$ of $\SL(2,\ZZ)$ when the rank is odd. Recall that $\widetilde {\SL(2,\ZZ)}$ consists of  elements which are the pairs $(\gamma,\upsilon)$, where $\gamma=\left(\begin{smallmatrix}a&b\\c&d\end{smallmatrix}\right)\in\SL(2,\ZZ)$, and $\upsilon:\HH\to\CC$ is a holomorphic function satisfying $\upsilon(\tau)^2=(c\tau+d)$. The multiplication is $(\gamma,\upsilon)(\gamma',\upsilon')=(\gamma\gamma',(\upsilon\circ\gamma')\upsilon')$. The elements $\widetilde{T}:=\left(\left(\begin{smallmatrix}1&1\\0&1\end{smallmatrix}\right),1\right)$ and $\widetilde{S}:=\left(\left(\begin{smallmatrix}0&-1\\1&0\end{smallmatrix}\right),\sqrt{\tau}\right)$ form a generating set.

In this article we focus on Weil representations associated to rank one lattices, labelled by a positive integer $m$ \cite{MR2512363}. Later this integer will be determined by the topological data \eq{eqn:mformula}. 
Concretely, consider the Weil representation $\varrho_m$ corresponding to the finite abelian group $\ZZ/2m$ equipped with a quadratic form $\ZZ/2m \to \QQ/\ZZ$ given by $x\mapsto {x^2\over 4m}$.
The unitary representation $\widetilde{\SL(2,\ZZ)}\to\GL_{2m}(\CC)$
generated by the assignments $\widetilde{S}\mapsto{\cal S}$ and $\widetilde{T}\mapsto{\cal T}$,
where
\begin{align}
\label{STmatricesWeil}
{\cal S}_{rr'} & := \frac{1}{\sqrt{2m}}\ex\left(-\frac{rr'}{2m}\right) \,,  \cr
{\cal T}_{rr'} & :=\ex\left(\frac{r^2}{4m}\right)\delta_{r,r'}.
\end{align}
Throughout the paper we set $\ex(x):=e^{2\pi i x}$, and $q:=\ex(\t)$, $y:=\ex(z)$.

The Weil representation $\varrho_m$ is realized by the familiar theta functions:
\be\label{def:usualtheta}
\theta_{m,r}(\tau,z):=\sum_{\ell=r\xmod 2m}q^{\ell^2/4m}y^\ell,
\ee
for $\tau\in \HH$ and $z\in \CC$.
When regarding $\theta_m:=(\theta_{m,r})_{r\xmod 2m}$ as a column vector, it transforms as
\begin{align}
\label{transf_theta}
\theta_m\left(-\frac1\tau,\frac{z}\tau\right)\frac{1}{\sqrt{\tau}}\ex\left(-\frac{mz^2}{\tau}\right)
& = {\cal S}\theta_m(\tau,z) \,,  \cr
\theta_{m}(\tau+1,z)
& ={\cal T}\theta_m(\tau,z)
\end{align}
under $\widetilde {\SL(2,\ZZ)}$, where ${\cal S}$ and 
${\cal T}$ are as in \eq{STmatricesWeil}.
As a result, Weil representations play an important role in the study of Jacobi forms (see \S5 of \cite{eichler_zagier}).

The above shows that $\theta_{m}$ spans a $2m$-dimensional representation of $\widetilde{\SL(2,\ZZ)}$, which we denote by $\Theta_m$.
This representation is reducible for all $m>1$. To see this,  note that the orthogonal group $O_m:=\{a\in \ZZ/2m ~\lvert~ a^2=1 \xmod(4m)\}$ has the natural action
\be \label{Om_action}\theta_{m,r}\cdot a :=\theta_{m,ra}\ee that commutes with $\widetilde {\SL(2,\ZZ)}$. As a result, one obtains a sub-representation by considering eigenspaces of $a\in O_m$.
In fact, for most examples we encounter in this paper, the relevant representations are irreducible!

To label the sub-representations we are interested in, it will be convenient to employ the isomorphism between $O_m$ and $\Ex_m$, the group of the exact divisors of $m$. Recall that a divisor $n$ of $m$ is said to be exact if $(n,{m\over n})=1$, and the groups operation is given by $n\ast n' = {nn'\over(n,n)'^2}$.
For $n\in \Ex_m$ write $a(n)$ for the unique $a\in O_m$ such that
\be\label{def:an}
a(n)=-1\xmod 2n, ~{\rm and }~a(n)=1\xmod 2m/n. \ee The assignment $n\mapsto a(n)$ defines an isomorphism of groups $\Ex_m\xrightarrow{\simeq} O_m$.  Explicitly, the isomorphism is implemented by  the {\em Omega matrix}, defined as
\begin{align}\label{def:OmegaMatrices}
\Omega_{m}(n)_{r,r'} := \begin{cases} 1 &\text{if $r=-r'\xmod 2 n$ and $r=r'\xmod {2m}/{n}$,} \\
0 &{\rm otherwise},~~ r,r'\in \ZZ/2m,
\end{cases}
\end{align}
which is familiar from the classification of modular invariant combinations of chiral and anti-chiral characters of the $SU(2)$ current algebra \cite{Cappelli:1987xt}.

In the main examples in this article (corresponding to Seifert manifolds with 3 singular fibers and involving weight 1/2 quantum modular forms), we always encounter representations which are the $-1$ eigenspaces of the operation \eq{Om_action} for $a(m)=-1$. As a result, we are interested in subrepresentations of $\Theta_m$, labelled by $K\subset \Ex_m$ with $m\not\in K$ (the so-called ``non-Fricke'' property), which is defined as the simultaneous eigenspace of $a(n), n\in K$ with eigenvalue 1, and of $-1=a(m)$ with eigenvalue $-1$. Only in \S\ref{sec:4sing} we will encounter the ``Fricke'' cases where $m\in K$.

In terms of notations, following a tradition initiated in \cite{conway_norton}, we denote the pair $(m,K)$  by $m+K = m+ n,n',\dots$ for $K = \{1,n,n',\dots\}$. Subsequently, we denote by $\Theta^{m+K}$ the corresponding  sub-representation defined above. Especially interesting choices of $K$ are those such that the above prescription renders a 
simultaneous eigenspace of all $O_m$. Concretely, this happens when $K$ is large enough such that 
$\Ex_m=K\cup (m\ast K)$. For such a choice of $K$, and when $m$ is not divisible by any square number\footnote{When $m$ is not square-free, the irreducible representation is given by taking the orthogonal complement of the images of operators $U_d: \Theta_{m} \to \Theta_{md^2}$ given by   $U_d(\f(\t,z)) = \f(\t,dz)$ with respect to the so-called Petersson metric in the space $\{\phi\in\Theta~\lvert~ \phi\cdot a = \alpha(a) \phi \}$ \cite{Sko_Thesis}.}, the resulting representation is {\em irreducible}. This will be the case in most of our examples.

Concretely, to implement the projection onto eigenspaces we introduce the projection operators, given by the matrices
\be
\label{def1:proj}
P_m^\pm(n)= ({\mathbb I}\pm\O_m(n))/2,
\ee
and
\be
\label{eqn:proj}
P^{m+K}  =\big(\prod_{n\in K} P_m^+(n) \big)P_m^-(m)
\ee
when $m$ is square-free. Extra care needs to be taken when $m$ is divisible by a square. For instance, when $m=p^2 m'$ where $m'$ is square-free and $p$ is prime, we have
\be
 P^{m+K}  =\big(\prod_{n\in K} P_m^+(n) \big)P_m^-(m) ({\mathbb I}- \O_m(p)/p).
\ee
Using the above projection operator, we define for $r\in \ZZ/2m$
\be
 \th^{m+K}_r =2^{|K|} \sum_{\ell \in \ZZ/2m} P^{m+K}_{r\ell}\theta_{m,\ell}.
\ee
Denote by $r\in\sigma^{m+K}$ the set of unequal (up to a sign) vectors $\th^{m+K}_r$.
A specific basis for $\Theta^{m+K}$ is then given by $\{\th^{m+K}_r , ~ r\in\sigma^{m+K}\}$.

Explicitly, the S-matrix of the  sub-representation $\Theta^{m+K}$ is given by
\be\label{Smatrix}
{\cal S}^{m+K}_{rr'} = \sum_{\ell \in \ZZ/2m} {{\cal S}_{r \ell}  P^{m+K}_{\ell r'}\over P^{m+K}_{r' r'}} , ~~ r,r'\in {\sigma}^{m+K},
\ee
which can be understood from the fact that, given an element $r$ in $\sigma^{m+K}$, the number of $\ell \in \ZZ/2m$ such that $ \th^{m+K}_\ell=\pm \th^{m+K}_r$ is precisely $1/P^{m+K}_{r r}$.
It is easy to check that  indeed $({\cal S}^{m+K})^2 = -{\rm Id}.$
As can be easily deduced  from \eq{STmatricesWeil}, the corresponding $T$ matrix is simply given by the diagonal matrix
\be\label{Tmatrix}
{\cal T}^{m+K}_{rr'} =\ex\left(\frac{r^2}{4m}\right)\delta_{r,r'} .
\ee

As an example, let us take $m=6$ and $K=\{1,3\}$. Since $\Ex_6=\{1,2,3,6\} = K \cup 6\ast K$, see that the resulting representation $\Theta^{6+3}$ is irreducible. Following the above discussion,
a simple calculation leads to $\sigma^{6+3}= \{1,3\}$ and the corresponding basis vectors are
\begin{gather}
\begin{split}
\theta^{6+3}_1 &=  \theta_{6,1}+\theta_{6,5}-\theta_{6,-1}-\theta_{6,-5}   \\
\theta^{6+3}_3 &=2\left( \theta_{6,3}-\theta_{6,-3}  \right),
\end{split}
\end{gather}
and the S-matrix is
\be
{\cal S}^{6+3} = {i\over \sqrt{3}} \bem -1 & -1 \\ -2 & 1\eem.
\ee

\section{Resurgence and modularity}
\label{sec:resurgence}

We will see in this section how the third type of modular representations discussed in the previous section -- the Weil representations -- are materialized in the form of false theta functions in our problem.
To see the connection to topology and physics, we discuss their origin as Eichler integrals, and
analyze their asymptotic expansions near the cusps, following \cite{lawrence1999modular}. Subsequently in \S\ref{subsec:resurgence_modular} we demonstrate the relation between  Eichler integrals and  resurgence analysis, and highlight the fact that the transseries coefficients are given by the S-matrix entries of the Weil representation.
Finally in \S\ref{sec:modularity} we use these properties of the false theta functions to deduce predictions for topological information on the $\SL(2,\CC)$ flat connections of the relevant three-manifolds.

\subsection{False theta functions and the asymptotic expansions}
\label{subsec:partial}

Associated to the Weil representations discussed earlier are also the weight $3/2$ unary theta functions, defined for $\tau \in \HH$ in the upper-half plane:
\be\label{unary theta}
\theta^1_{m,r}(\tau) =\sum_{\substack{\ell\in\ZZ\\\ell=r\xmod 2m}}\ell \, q^{\ell^2/4m} ,
\ee
related to the theta function by the operator $$\theta^1_{m,r}(\tau) : = {1\over 2\pi i} {\pa \over \pa z}\theta_{m,r}(\t,z)\lvert_{z=0}.$$

In the context of Seifert three-manifold one often encounters its Eichler integral.  The Eichler integral of a cusp form $g=\sum_{n>0} a_g(n) q^n$ of weight $w$, which can be either integer or half-integer, is defined as
\be\label{def:eichler}
\widetilde{g}(\tau) : =\sum_{n>0} n^{1-w} a_g(n) q^n .
\ee
Note that this is equal to the following integral for integral  $w$ \footnote{We choose the branch to be the principal branch $-\pi < {\rm arg}\, x \leq \pi$. }
\be\label{def:Eichler_int1}
\widetilde{g}(\tau) = C \int_{\tau}^{i\infty} g(z' ) (z'-\t)^{-2+w} dz',
\ee
where $C={(2\pi i)^{w-1}\over \Gamma(w-1)}$.
In our case of \eq{unary theta} we have $w=3/2$ and  the Eichler integral has the following Fourier expansion ({\it cf.} \S\ref{sec:WeilRep}):
\be
\label{def:partial1}
\Psi_{m,r}(\tau) :=\widetilde{ \theta^1_{m,r}}(\t) = 2 \sum_{n>0}
(P^-_m(m))_{r,n} \,q^{n^2/4m},
\ee
and is often referred to as a {\it false theta function}.
In the above we have written $(P^-_m(m))_{r,n}$ as the entry of the matrix \eq{def1:proj} corresponding to the  $r$ and $n \xmod {2m}$.  Explicitly,
we have
\be\label{def:coeff_partial1}
2(P^-_m(m))_{r,n} = \begin{cases}\pm 1 & n=\pm r\xmod{2m}\\ 0 &{\rm otherwise}\end{cases}~.
\ee
Note that $\theta^1_{m,r} = -\theta^1_{m,-r}$ and consequently $\Psi_{m,r} = - \Psi_{m,-r}$, and  this is the reason why in \S\ref{sec:WeilRep} we focus on sub-representations contained in the $-1$ eigenspace under the action \eq{Om_action} with $-1=a(m)$ (the ``non-Fricke'' type). Clearly, both $\theta_{m,r}^1(\tau)$ and $\Psi_{m,r}(\tau)$ have  Fourier expansions that converge in the unit disk $|q|<1$. The false theta function is not a modular form, 
but naturally leads to a quantum modular form as we will review later.

To explain the nomeclature,
note that the  functions defined in \eq{def:partial1} can also be expressed as
\be\label{def:partial2}\Psi_{m,r}(\tau) = \sum_{\substack{\ell\in\ZZ\\\ell=r\xmod 2m}} {\rm sgn}(\ell) \,q^{\ell^2/4m}.   \ee
In \cite{MR557539} Andrews defined a false theta function to be a function of the form $$\sum_{n\in \ZZ} (\pm 1)^n q^{kn^2+\ell n}. $$
Since without the sign factors these are just the usual theta functions $\theta_{m,r}(\t,0)$, they are also often referred to as false theta functions\footnote{As  is clear from \eq{def:partial1}, the functions $\Psi_{m,r}$ also have the property that they are just like  (linear combinations of) ordinary theta functions except for that the sum is performed only over part of the lattice. As a result, they are sometimes also referred to as partial theta functions \cite{BrFoMi,Creutzig:2013zza,BrFoRh}. }.

In what follows we will be interested in the Eichler integral of the basis vectors discussed in \S\ref{sec:WeilRep}, given by
\be\label{fouriercoeff_foldedpartial}
\Psi^{m+K}_r : =\widetilde{\th^{m+K,1}_r}= 2^{|K|}\sum_{n\geq 0} P^{m+K}_{r,n}\,q^{n^2/4m} .
\ee
where similarly to $\theta^{1}_{m,r}$, we have defined  $\th^{m+K,1}_r (\t) : = {1\over 2\pi i} {\pa \over \pa z}\theta_{r}^{m+K}(\t,z)\lvert_{z=0}.$
Later we will see how these false thetas come to life as $q$-series invariants $\widehat{Z}_a (M_3)$ attached to certain three-manifolds $M_3$ and how their transseries give non-trivial predictions about $\SL(2,\CC)$ flat connections on $M_3$.

The relation between these false theta functions and WRT invariants was first pointed out in \cite{lawrence1999modular}
and further developed in \cite{Hikami,hikami1,hikami2011decomposition} as stemming from the following two facts:
\begin{itemize}
\item The false theta functions give finite values in the radial limit $\t \to {c\over d}  \in \QQ$ from the upper-half plane, and when $\t\to1/k$ they reproduce the WRT invariants at level $k$.
\item  The asymptotic expansion of the false theta functions near $\tau = 0$ captures the perturbative expansion (Ohtsuki series, or $1/k$ expansion) around the trivial flat connection in Chern-Simons TQFT.
\end{itemize}
Next we briefly discuss the relevant number theoretic properties of the building block false theta $\Psi_{m,r}$ which are responsible for the above matching.
Note that the false theta functions defined in \eq{def:partial1} and \eq{def:partial2} have Fourier coefficients with certain periodicity property (see \eq{def:coeff_partial1}) which moreover have vanishing mean value. For such a function $C: \ZZ \to \CC$, it was shown \cite{lawrence1999modular} that the corresponding $L$-series  $L(s,C) = \sum_{n\geq 1} n^{-s} C(n)$, $\Re(s)>1$, can be holomorphically extended to all  $s\in \CC$ and the following two functions have the asymptotic expansions given by
\begin{align}
\label{asymp_Lfunc}
\sum_{n \geq 1} C(n) e^{-n t} & ~\sim~ \sum_{\ell \geq0}L(-\ell,C){(-t)^\ell \over \ell!}, \cr
\sum_{n \geq 1} C(n) e^{-n^2 t} & ~\sim~ \sum_{\ell \geq0}L(-2\ell,C){(-t)^\ell \over \ell!}
\end{align}
for $t>0$.
From the above, both the radial limit values at $\tau \to 1/k$ and the asymptotic series near 0, when approaching from the upper-half plane,   can be computed and compared to the known result on the WRT invariants of the corresponding three-manifold. In the former case, we take $(P^-_m(m))_{r,n} \, \ex({-r^2\over 4mk})$  to be $C(n)$
and we set it to be $C_{m,r}(n):=(P^-_m(m))_{r,n}$ in the latter case.
The result of the calculation yields the asymptotic series
\be\label{asym1}
\Psi_{r}^{m+K}({it\over 2\pi}) \sim  \sum_{\ell\geq 0} {L(-2\ell,C_{\ell}^{m+K})\over \ell!} \big({-t\over 4m}\big)^{\ell}.
\ee
where we have taken $C_{r}^{m+K}(n):=P^{m+K}_{r,n}$.

Moreover, the relevant $L$-values are conveniently captured by the ratios of the sinh functions :
\be\label{sinh_asymp}
{\sinh((m-r)z)\over \sinh(mz)} = \sum_{\ell \geq 0}  {L(-2\ell,C_{m,r}) \over (2\ell)!}\, z^{2\ell},
\ee
obtained from applying the  asymptotic expansion in \eq{asymp_Lfunc} to the identity
\be\label{sinh1_expand}
{(x^{m-r}- x^{-m+r}) \over (x^{m}- x^{-m})} = \sum_{n>0}
(P^-_m(m))_{r,n} \,x^{n}.
\ee

We will see that  the above relations to sinh functions play an interesting role
in the resurgence interpretation of the $q$-series invariants $\widehat{Z}_a (q)$.

\subsection{Resurgence and Eichler integrals}
\label{subsec:resurgence_modular}

Anticipating the role of the false theta functions $\Psi^{m+K}_{r}$ as homological blocks,
in this section we study the transseries expression of the false theta function \eq{def:partial1},
which admits a simple physical interpretation in the context of resurgence, as pointed out in \cite{Gukov:2016njj}.
Apart from the asymptotic series \eq{asym1} computed in the previous subsection, one can moreover compute the non-perturbative part of $\Psi_{m,r}(\tau=1/k)$ and obtain the whole transseries. 
The latter captures the important information regarding flat $\SL(2,\CC)$ connections on the 3-manifold $M_3$.
This was first done in \cite{lawrence1999modular}, where it was demonstrated that the false theta function is modular near rational points up to a smooth function.
The transseries calculation is closely related to the  quantum modular  structure of false theta functions, which we  will discuss  in  details  in \S\ref{sec:otherside}. In this subsection we focus on the resurgence point of view of  \cite{Gukov:2016njj}.
Moreover, we stress that the resurgence sheds light on the origin of the appearance of the Eichler integrals in our problem, as the structure of the Eichler integrals arises from the resurgence calculation quite naturally \eq{sihn_int_theta}.

Resurgence is a method to sum up the infinite perturbative series arising from perturbative quantum field theories, which are often asymptotic instead of convergent series, into a complete function by incorporating the non-perturbative contributions. It relies on the techniques of {\em Borel resummation}, which we now describe briefly. Given a non-convergent series
\be\label{eqn:borel1}
Z_{\rm pert}(k) = \sum_{n} {a_n \over k^n}
\ee
we consider its Borel transform
\be
BZ_{\rm pert}(z) = \sum_{n} {a_n\over \Gamma(n)}{z^{n-1}}
\ee
which typically defines a function that is analytic in a neighbourhood near the origin.
We are then interested in the Borel sum of  $Z^{\rm pert}$, given  by
\be\label{eqn:borel3}
\int e^{-\tau z}BZ_{\rm pert}(z) dz,
\ee
where we have on purpose left the contour of integration unspecified at this stage.

Due to the role of false theta functions as the half-indices $\widehat{Z}_b$, we are interested in applying the resurgence analysis to the building blocks $\Psi_{m,r}$ \cite{Gukov:2016njj}.
There is however an important subtlety: note that there is additional overall $k$ dependence in the Chern--Simons partition function not captured by the homological blocks \eq{relation_CS_blocks}.
As we have $b_1(M)=0$ for our three-manifolds $M$, there is an overall factor of $1/\sqrt{k}$ multiplying the false theta functions evaluated at $\tau \to -1/k$. From the modular point of view, this $1/\sqrt{k}$ factor stems from the fact that  $\Psi_{m,r}$ is a  weight  $1/2$ quantum modular form (see \S\ref{sec:QMF}).

Comparing \eq{asym1} and \eq{sinh_asymp}, we conclude that the corresponding Borel transform is
\be\label{exact_boreltrans}
B\left({1\over \sqrt{k}}\Psi_{m,r}(\tfrac{1}{k})\right)(z) = {1\over \sqrt{\pi z}} {\sin((m-r)\sqrt{2\pi z \over m}) \over \sin(m\sqrt{2\pi  z \over m})} .
\ee

On the other hand, by performing a Gaussian integral on both sides of \eq{sinh1_expand} we obtain the following identity
 \be\label{resurgence_equality}
 {1\over \sqrt{k}}\Psi_{m,r}(\tfrac{1}{k}) = {\sqrt{i}\over 2} \left(\int_{e^{i\delta}\RR_+}  +\int_{e^{-i\delta}\RR_+}  \right)  {dz\over \sqrt{\pi z}} {\sin((m-r)\sqrt{2\pi z \over m}) \over \sin(m\sqrt{2\pi  z \over m})}e^{-ikz},
 \ee
 which in light of \eq{exact_boreltrans} can be interpreted as an exact Borel resummation \cite{Gukov:2016njj}.
Note  that the integral has poles at $z= 2\p {n^2\over 4m}$, and the residue is given by
 \be
\underset{z=2\p {n^2\over 4m}}{\resi} \left({\sqrt{i}\over 2} {1\over \sqrt{\pi z} } {\sin((m-r)\sqrt{2\pi z \over m}) \over \sin(m\sqrt{2\pi  z \over m})}e^{-ikz}\right)
= -{  \sqrt{i} \over \pi \sqrt{2m} }
 \sin({r\pi n\over m}) \ex(-k{n^2\over 4m})
 \ee
 for $n\in \ZZ_{>0}$.
 Note that the right-hand side is, up to an overall constant, the S-matrix \eq{Smatrix} corresponding to the sub-representation of the Weil representation $\Theta_m$ specified by eigenvalue $-1$ of the action \eq{Om_action} of $-1=a(m)$  (equivalently, this is the S-matrix of the unary theta function $\theta^1_{m,r}$ in \eq{unary theta}, corresponding to $K=\{1\}$ in the notation of \eq{Smatrix}):
\be
 {\cal S}^{m}_{r,n} = ({\cal S}P_{m}^- (m))_{r,n} = {-i\over \sqrt{2m}}\sin\left( {rn  \pi\over m }\right)
 \ee
 To sum up the contribution from the infinitely many poles lying on the upper half of the imaginary axis, we use the regularization in \eq{asymp_Lfunc} and \eq{sinh_asymp}
 \be
  \sum_{n\geq 0} (P^-_m(m))_{r,n} =\lim_{t\to 0^+}  \sum_{n\geq 0} (P^-_m(m))_{r,n}e^{-nt} = \lim_{t\to 0^+}  {\sinh((m-r)t)\over \sinh(mt)}  =   1- {r\over m}.
 \ee
 Applying the above to $\Psi^{m+K}_{r}$ by taking the linear combination, we see that the corresponding integral has groups of poles labelled by the set $\s^{m+K}$, and their corresponding contribution to the integral is  given by
 \be\label{trans_false}
{1\over \sqrt{k}}\Psi_{r}^{m+K}(\tfrac{1}{k})  =
  -2 {\sqrt{i} }  \sum_{r'\in \s^{m+K}}{\cal S}_{rr'}^{m+K}c_{r'}\,  e^{-2\p i{k {r'^2\over 4m}}} + {\rm perturbative ~ part},
 \ee
 where
 \be \label{eq:c_r} c_{r}:=2^{|K|} \sum_{\ell=1}^{m-1}  P_{\ell r}^{m+K} (1-{\ell\over m}).\ee
  The first term in the formal transseries  expression \eq{trans_false} encodes the contributions from the poles of the Borel transform and captures the non-perturbative contribution to the path integral.
  The second term is given by \eq{asym1} in the limit $k\to \infty$, and
   captures the asymptotic expansions, corresponding to the Ohtsuki series on the topology/Chern-Simons side. In the next subsection we will expand on the physical and topological interpretation of the above transseries, and deduce non-trivial predictions about the flat connections on three-manifolds.

Note that the same regularization procedure gives the radial limit
\be
\Psi^{m+K}_{r}(-k)  = \ex(-k{r^2\over 4m}) c_r~, 
\ee
and we can write
\be\label{trans_false2}
{1\over \sqrt{k}}\Psi_{r}^{m+K}(\tfrac{1}{k})  = {2\over \sqrt{i}} \sum_{r'\in \s^{m+K}}  {\cal S}^{m+K}_{r,r'} \Psi^{m+K}_{r'}(-k)
 + {\rm perturbative ~ part}.
\ee
The above states that $\Psi_{r}^{m+K}$ has modular property up to a smooth function, which is precisely the statement that $\Psi_{r}^{m+K}$ gives rise to a quantum modular form. The above relation will be derived and explained in another context in \S\ref{sec:otherside}.

Finally we remark on the relation between the
 resurgence integral \eq{exact_boreltrans} and the Eichler integral \eq{def:Eichler_int1}, drawing on results in \cite{MR1874536}.
Note that, although the two expressions for the false theta function $\Psi_{m,r}$, evaluated at the cusp $\tau \to {1\over k}$, look very different, they are in fact extremely closely related. To see this, note that upon an obvious change of variables the resurgence integrals \eq{resurgence_equality} can be rewritten as an integral of
\be
{e^{-y^2/\tau}\over \sqrt{\tau}} {\sinh((m-r)\pi y) \over \sinh(m\pi y)} = C_m{e^{-y^2/\tau}\over \sqrt{\tau}} \lim_{n_\ast\to \infty} \sum_{ n= -n_\ast }^{n_\ast} {\sin(r\pi {n\over m})\over {y-i{n\over m}}}
\ee
where $C_m$ is a unimportant constant that depends only on $m$. Using the equality between two integrals
\be
\int_{-\infty}^{\infty} {e^{-\pi t y^2}\over y-ir} dy =
\pi i r \int_{0}^\infty {e^{-\p r^2 u} \over \sqrt{u+t} }du
\ee
and exchanging the sum and the integral, as was done in the proof for Lemma 3.3 of \cite{MR1874536}, one immediately see that
\be\label{sihn_int_theta}
\int_{0}^\infty dy\, {e^{-y^2/\tau}\over \sqrt{\tau}} {\sinh((m-r)\pi y) \over \sinh(m\pi y)} = c \int_0^{\infty} {du} {\theta^1_{m,r}(u)\over\sqrt{u+\tau}}
\ee
with some unimportant factor $c\in \CC$. Clearly, the above calculation also extends easily when $\Psi_{m,r}$ is replaced with the folded false theta $\Psi^{m+K}_r$.
As we will see, this is precisely the period integral \eq{periodEich} and whose appearance in the resurgence computation for the false theta function can be understood through the identity \eq{asymp_false} between the Eichler integral and the non-holomorphic Eichler integral as far as the asymptotic series are concerned.

\subsection{Flat connections from modularity}
\label{sec:modularity}

In this subsection we will explain how to extract information about flat connections on three-manifolds from
the modular-like properties of the false theta functions discussed in \S\ref{subsec:resurgence_modular}.
As mentioned earlier, our main class of examples is Seifert manifolds with three singular fibres, although similar ideas and methods are also applicable to more general examples.

The starting point is the observation (which will be profusely demonstrated in \S\ref{sec:examples}) that the $q$-series invariants $\widehat{Z}_b (M_3)$ can often be expressed as
\be
\label{def:Emb}
\widehat{Z}_b (M_3) \; = \; c  \left( q^\delta  \Psi_r^{m+K} + d \right)
\,,\qquad c\in \CC,\quad \delta\in \QQ
\,,\quad d \in \ZZ [q]
\ee
where $b$ denotes a boundary condition, as in Figure~\ref{fig:halfindex},
and $d$ is a polynomial in $q$ (typically, just a single monomial).\footnote{The origin of $d$ may seem
somewhat unclear, especially when compared to \eqref{Zan}. At a technical level, it originates from
a regularization of an infinite sum, which we expect to be cured by introducing $t$-dependence or, equivalently,
working at the categorified level, with the space of BPS states. One could also think of \eqref{def:Emb}
as a sum of two blocks $\widehat{Z}_b (M_3)$, which happen to have the same value of $\text{CS} (b)$
and such that one of them is $d$.}
Note that, for a given three-manifold $M_3$, changing the boundary condition $b$ changes the corresponding $r\in \sigma^{m+K}$ but the Weil representation labelled by $m+K$ remains the same. In other words, given the same three-manifold $M_3$ (or bulk 3d $\N=2$ theory)  we will have the analogous relation between $\widehat{Z}_{b'}(M_3)$ and $\Psi_{r'}^{m+K}$. For this reason, in the rest of this subsection it will be convenient to omit the label ${m+K}$ in order to avoid clutter.

The resulting homological blocks \eqref{def:Emb} are combined into $Z_a$,
labelled by abelian flat connections \eqref{eqn:flatconn} 
\be
\label{relateZhatZ}
Z_a(k) =  \ex(k\lambda(a,a)) \sum_b S^{(A)}_{ab} \widehat{Z}_b \lvert_{q\to \ex({1\over k})} . 
\ee
These functions $Z_a$ have the interpretation as the Borel resummed perturbative expansions near the corresponding abelian flat connections, and are further assembled into $SU(2)$ Chern-Simons partition function \eq{relation_CS_blocks} upon specialization $q \to \ex(\tfrac{1}{k})$ and a sum over  the abelian flat connections ``$a$''.

As we have seen in the previous subsection, a given $r\in \sigma^{}$ labels a group of poles in the integral expression for \eq{resurgence_equality} for $\sqrt{\t} \Psi_{m,s}(\t)$, each contributing a residue given by ${\cal S}_{rs}$ (up to an unimportant overall factor). From \eq{def:Emb} we see that they translate into poles giving contribution to the homological blocks, and hence should correspond to certain saddle point configuration of the path integral formulation of the half-index.
When combined into the physical quantities  $Z_a$ and $Z_{CS} (M_3)$ that arise from Chern--Simons theory,
{\it cf.} \eq{eqn:flatconn} and \eq{relateZhatZ}, the following things can happen to these  poles:\footnote{The Chern-Simons partition function $Z_{CS} (M_3)$ coincides with $\text{WRT}(M_3,k)$ when the gauge group is $SU(2)$ and the level is $k$. It is normalized such that: $$Z_{CS}(S^2 \times S^1) = 1, \quad \text{and} \quad Z_{CS}(S^3) = \sqrt{\frac{2}{k}}\sin\left({\frac{\pi}{k}}\right).$$}
\begin{enumerate}
\item\label{case1} A pole contributes to the transseries of $\widehat{Z}_b$, but this contribution vanishes
upon further re-assembly into $Z_a$.
\item A pole contributes  to the transseries of $\widehat{Z}_b$ and the contribution does not vanish when  re-combined into $Z_a$.
\begin{enumerate}
\item\label{case2} Moreover, the contribution to $Z_a$ do not vanish when combined further into $Z_{CS} (M_3)$ by summing over $a$ and  over all the (infinitely many) poles in the group.
\item \label{case3} The contributions to $Z_a$ sum up to zero after performing the two additional sums that gives $Z_{CS} (M_3)$.
\end{enumerate}
\end{enumerate}
\noindent
{}From the physical interpretation of $\widehat{Z}_b$, $Z_a$ and $Z_{CS} (M_3)$ in the 3d-3d correspondence,
we can give the following physical interpretation to the above types of poles:
\begin{enumerate}
\item
\label{case1}
They correspond to ``phantom'' saddles of the path integral for $\widehat{Z}_b$ that may not even correspond to flat $\SL(2,\CC)$ connections on $M_3$. (For example, ``renormalon'' saddles are familiar examples of this behavior in resurgent analysis of QFT.)
\item
They correspond to saddle points of the path integral for $\widehat{Z}_b$ that arise from non-abelian $\SL(2,\CC)$ flat connections on $M_3$. (Note that according to a Theorem in \cite{Gukov:2016njj}, only non-abelian flat connections can appear in transseries contributions to a Borel resummation of a perturbative expansion around an abelian flat connection.)
\begin{enumerate}
\item\label{case2} Moreover, they correspond to ``real'' non-abelian flat connections that can be conjugated inside $G=SU(2)$.
\item \label{case3} They correspond to ``complex'' non-abelian flat connections that can {\it not} be conjugated into $G=SU(2)$.
\end{enumerate}
\end{enumerate}
\noindent
As a result, from the modularity of the false theta functions we can read off the behaviour of the different poles of the integral expression for homological blocks, and thereby deduce predictions on non-abelian flat connections as above.

To turn words into equations, we define the following quantities.
Let $n_B = |\sigma|$ be size of the Weil representation described in \S\ref{sec:WeilRep},
and denote by $n_A$ the number of abelian $\SL(2,\CC)$ flat connections on $M_3$,
{\it i.e.} the size of the modular $S$-matrix \eqref{def:Sabelian}.
Consider the two matrices
\begin{gather}\label{M3matrices}
\begin{split}
S(M_3) = S^{(A)}.{\bf Emb}. (S^{(B)})^{-1} \\
T (M_3) = T^{(A)}.{\bf I}. (T^{(B)})^{-1}
\end{split}
\end{gather}
where {\bf Emb} and {\bf I} are $n_A \times n_B$ matrices.
The first is the embedding matrix defined by  {\bf Emb}$_{ar} =c$ iff \eq{def:Emb} holds.
In particular, {\bf Emb}$_{ar} =0$  when the $q$-series $\widehat{Z}_a$ does not involve the false theta function $\Psi_{m,r}$.
The second matrix ${\bf I}$ is the matrix with all entries equal to one.
The prediction, reflecting the interpretation 2-(i),  is then
\begin{gather}
\label{rule_nonAb}
\boxed{
\begin{split}
&\left\{~ \ex(\text{CS}({\mathfrak a}))~\lvert~ {\mathfrak a} {\text{ is a non-abelian $\SL(2,\CC)$ flat connection on $M_3$}}~\right\}  \\
&~~~~= ~\left\{~ T (M_3)_{ar} ~\lvert~ a, r ~\text{such that}~ S(M_3)_{ar}\neq 0  ~\right\}
\end{split}}
\end{gather}
Lets denote the elements of the set on the right-hand side by $\ex(\alpha)$, and write
\be
\sum_{a,r}  T(M_3)_{a,r} S(M_3)_{a,r} c_r = \sum_\a \ex(\alpha) \, {\bf c}_\alpha.
\ee
In other words, we have
\be\label{def:calpha}
{\bf c}_\a  = \sum_{ (a,r)  } S(M_3)_{a,r}c_{r}
\ee
where the sum on the right-hand side is over the pairs $(a,r)$ satisfying $T (M_3)_{ar}=\ex(\alpha)$.
In terms of these quantities, the interpretation 2-(ii) translates into the following prediction on complex
non-abelian flat connections:
\begin{gather}
\label{rule_complex}
\boxed{
\begin{split}
&\left\{~ \ex(\text{CS}({\mathfrak a}))~\lvert~ {\mathfrak a} {\text{ is a non-abelian  $SU(2)$ flat connection on $M_3$}}~\right\}  \\
&~~~~= ~\left\{~ \ex(\alpha) ~\lvert~ {\bf c}_{\alpha}\neq 0~\right\}
\end{split}}
\end{gather}

\begin{figure}[h!]
\begin{center}
\begin{tikzpicture}
  [node distance=.8cm,
  start chain=going below,]
     \node[punktchain, join] (step1) {Compute $\widehat{Z}_b$};
     \node[punktchain, join] (step2)      {Identify Weil representation $(m,K)$ \\ with $\widehat Z_b = cq^\delta (\Psi^{m+K}_r +d)$};
     \node[punktchain, join] (step3)      {Compute the modular matrices \\ $S(M_3)$ and $T(M_3)$ \eq{M3matrices} };
     \node[punktchain, join] (step4) {Find non-abelian flat \\ connections \eq{rule_nonAb}};
     \node[punktchain, join] (step5) {Compute $\ex(-\alpha)$ and ${\bf c}_\a$ \eq{def:calpha}};
     \node[punktchain, join ] (step6) {Find complex flat \\ connections \eq{rule_complex}};
\end{tikzpicture}
\caption{\small From plumbing data to flat connections.}
\label{fig:reverse}
\end{center}
\end{figure}
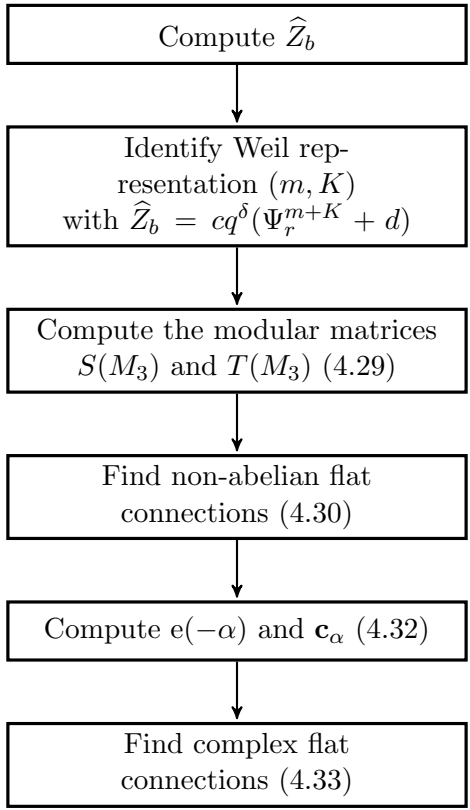

In operational terms, the steps of retrieving the information about non-abelian flat connections from the plumbing data of a three-manifold are summarized in Figure~\ref{fig:reverse}. We note that the above rules only give the set of the corresponding Chern--Simons invariants (mod $\ZZ$) and a priori cannot distinguish different flat connections with the same Chern--Simons invariants.

\section{Logarithmic CFTs from three dimensions}
\label{sec:logCFT}

Unusual modular transformations of the combined 3d-2d indices \eqref{BlockD2S1}
and 3-manifold invariants $\widehat{Z}_a (M_3)$ studied in this paper also
appear as one of the key features in logarithmic conformal field theories (log-CFTs, for short).
The goal of this section is to explain, qualitatively as well as quantitatively,
that this is not an accident and there are good reasons why half-indices of 3d-2d combined systems
and $q$-series invariants $\widehat{Z}_a (M_3)$ in many cases are expected to be related to log-CFTs.

Among other things, this offers a new way of looking at logarithmic CFTs, connecting them to
supersymmetric 3d $\N=2$ theories, including theories $T[M_3]$ coming from 3-manifolds.
We hope that, in the future, this new perspective will help to shed light on still rather
mysterious nature of log-CFTs.

\subsection{$\widehat{Z}_a (M_3)$ as characters of log-VOAs}

The first qualitative, yet conceptual indication that our setup illustrated in Figure~\ref{fig:halfindex}
has something to do with logarithmic CFTs comes from the fact that, in many cases, log-CFTs can be thought of
as ``deformations'' of more familiar ordinary conformal field theories, such as free theories and lattice VOAs.
For example, as we review shortly, this perspective has been very successful in constructing
various log-VOAs as kernels of screening operators \cite{Feigin:2005zx,Feigin:2005xs},
which are larger compared to cohomologies of the same screenings used in the construction
of minimal models \cite{Felder,TsuchiyaKanie}.

This is similar to how, as explained in \S\ref{sec:3dphysics}, the 
interaction with 3d degrees of freedom can ``deform'' the standard modular properties of $\Theta_{2d}^{(a)} (x)$ and give rise 
 to objects such as false theta functions, mock modular forms, or more general quantum modular forms.
Recall \cite{Gadde:2013wq}, that $\Theta_{2d}^{(a)} (x)$ is the elliptic genus of 2d $\N=(0,2)$ boundary theory $\mathcal{B}_a$.
When coupled to 3d $\N=2$ theory, its elliptic genus is no longer modular in the traditional sense and,
as a result, the combined index \eqref{BlockD2S1} can become a pseudocharacter of the type we already encountered in \S\ref{sec:MTC}.

For example, for logarithmic VOAs constructed from free fields and screening operators,
it is natural to expect that free fields describe $\mathcal{B}_a$, whereas screening operators
correspond to coupling with 3d $\N=2$ theory.
Relegating a more systematic study of this interpretation to future work, here we note that
concrete expressions for $\Theta_{2d}^{(a)} (x)$ in our examples indeed involve characters of lattice VOAs,
{\it cf.} \eqref{eqn:gentheta} and \eqref{Zunnormalized} below.

Simple examples of logarithmic VOAs constructed from free fields and screening operators
are the singlet and triplet $(1,p)$ models, originally introduced in \cite{Kausch:1990vg}.
In both cases, the starting point is a free scalar field $\varphi$ with the OPE
\be
\partial \varphi (z)
\, \partial \varphi (w)
\; \sim \; \frac{1}{(z-w)^2}
\ee
and the stress-tensor, {\it cf.} \eqref{aaa},
\be
T(z) \; = \; \frac{1}{2} \partial \varphi (z) \partial \varphi (z) + \frac{\alpha_0}{2} \partial^2 \varphi (z).
\ee
The modes of $\partial \varphi (z)$ generate the Heisenberg algebra $[a_m,a_n] = m \delta_{m+n,0} {\bf 1}$,
while the modes of $T(z)$ generate the Virasoro algebra with the central charge \eqref{1pccharge}.
There are two screening operators, often called ``long'' and ``short'' screening operators, respectively:
\be
S_+ \; = \; \oint e^{\alpha_+ \varphi}
\,, \qquad \qquad
S_- \; = \; \oint e^{\alpha_- \varphi}
\label{ScreeningS}
\ee
that commute with the stress-tensor, {\em i.e.} $[S_{\pm}, T(z)] = 0$.

Then, the singlet and triplet $(1,p)$ vertex algebras are realized
as kernels of the ``short'' screening
operator \cite{Adamovic:2002nj,Fuchs:2003yu,Feigin:2005zx,Feigin:2005xs,Nagatomo:2009xp}:
\be
\mathcal{M}_p \; = \; \text{Ker}_{\mathcal{F}_0} \, S_-
\ee
\be
\mathcal{W}_p \; = \; \text{Ker}_{\mathcal{V}_L} \, S_-
\ee
on the Heisenberg algebra $\mathcal{F}_0$ (= the Fock space of weight 0)
and on the lattice VOA $\mathcal{V}_L$ for $L = \alpha_+ \mathbb{Z} = \sqrt{2p} \mathbb{Z}$, respectively.
In other words, $\mathcal{W}_p$ is a maximal local subalgebra of $\mathcal{V}_L$
in the kernel of the ``short'' screening operator $S_-$,
and $\mathcal{M}_p = \mathcal{F}_0 \cap \mathcal{W}_p$ is the analogous subalgebra of $\mathcal{F}_0$.
This gives an alternative description of the tripled algebra $\mathcal{W}_p$
that we already discussed in \S\ref{sec:MTC},
and both algebras $\mathcal{M}_p$ and $\mathcal{W}_p$ have the central charge \eqref{1pccharge}.

The singlet $(1,p)$ algebra has Fock modules $\mathcal{F}_{\lambda}$ of highest weight $\lambda \in \CC$
(also called Feigin-Fuchs modules when understood as Virasoro modules),
and modules $M_{1,s}$ with $1 \le s \le p$.
Their characters take the form \cite{Flohr:1995ea,Adamovic:2007qs}:
\be
\chi({\mathcal{F}_{\lambda}};\tau) \; = \;
\frac{q^{\frac{1}{2} (\lambda - \frac{\alpha_0}{2})^2}}{\eta (q)}
\ee
\be
\chi ({M_{1,s}};\tau) \; = \;
\frac{1}{\eta(q)} \sum_{n \ge 0}
\left( q^{\frac{1}{4p} (2pn + p - s)^2} - q^{\frac{1}{4p} (2pn + p + s)^2} \right)
\; = \; \frac{\Psi_{p,p-s} (\tau)}{\eta (\tau)}
\label{singletchar}
\ee

Before we identify these characters with 3d-2d indices \eqref{BlockD2S1}
and $q$-series invariants of 3-manifolds, we should point out that,
following \cite{Gukov:2016gkn,GPPV}, throughout the paper we
suppress\footnote{{\it cf.} \eqref{ZSQCD} where this factor is, in fact, present.}
the factor of $(q;q)_{\infty}$ (cf. \eq{def:qPochhammer}) in the physical index\footnote{For more general gauge groups, 
this relation involves a factor of $\eta (q)^{\text{rank} (G) }$ in the denominator~\cite{Gukov:2016gkn}.}
\be
\widehat{Z}^{(\text{unred})}_a (q) \; = \; \frac{\widehat{Z}_a (q)}{(q;q)_{\infty}}
\label{Zunnormalized}
\ee
and instead use $\widehat{Z}_a (q)$, which often takes a more compact form.
This physical version \eqref{Zunnormalized} of the index $\widehat{Z}_a (q)$
is sometimes called {\it unreduced} or {\it un-normalized}.

Taking into account this normalization, we can rephrase our discussion in
\S\ref{sec:resurgence}, in particular \eqref{def:Emb},
by saying that in theories where the normalized index
\be
\widehat{Z}_a (q) \; = \; \Psi_{p,s_1}(\tau) + \Psi_{p,s_2} (\tau) + \ldots
\ee
is given by a linear combination of false theta functions \eqref{def:partial2},
we have, {\it cf.} Table~\ref{tab:dict}:
\be
\widehat{Z}^{(\text{unred})}_a (q)\; = \;
\chi \left( M_{1,p-s_1} \oplus M_{1,p-s_2} \oplus \ldots;\tau\right)
\ee
In other words, the properly normalized physical index \eqref{Zunnormalized}
is equal to the character of a $(1,p)$ singlet VOA module
\be
M_{1,p-s_1} \oplus M_{1,p-s_2} \oplus \ldots
\label{MMmodule}
\ee
Note, although this module looks reducible, perhaps it indicates existence of
an extension to a larger log-VOA, where $\widehat{Z}^{(\text{unred})}_a (q)$
can be identified with a character of a less reducible module.
A positive indication for this comes from the fact that, in many of our examples,
every term $M_{1,s}$ is always accompanied by $M_{1,p-s}$ in \eqref{MMmodule}.

In particular, when composed with 3d-3d correspondence, this intriguing duality
between logarithmic CFTs and 3d $\N=2$ theories with half-BPS boundary conditions
implies that all Seifert manifolds with 3 singular fibers correspond to modules
of $(1,p)$ singlet model.
The modules are determined by the data of the Weil representation corresponding to $M_3$,
which, in turn,
can be obtained using the general technique outlined in \S\ref{sec:resurgence}.
It will be illustrated in many examples in \S\ref{sec:examples}.

\begin{table}[htb]
\centering
\renewcommand{\arraystretch}{1.3}
\begin{tabular}{|@{\quad}c@{\quad}|@{\quad}c@{\quad}|@{\quad}c@{\quad}| }
\hline  {\bf 3-manifold} & ~~~~$m+K$~~ & ~~{\bf module of a singlet log-VOA}
\\
\hline
\hline $\Sigma (2,3,5)$ & ~~~~$30+6,10,15$
& $M_{1,1} \oplus M_{1,11} \oplus M_{1,19} \oplus M_{1,29}$ \\
\hline $\Sigma (2,3,7)$ & ~~~~$42+6,14,21$
& $M_{1,1} \oplus M_{1,13} \oplus M_{1,29} \oplus M_{1,41}$ \\
\hline
\end{tabular}
\caption{Weil representations and the corresponding modules
of the logarithmic $(1,p)$ singlet CFT for simple homology spheres.
The sum over modules is precisely the sum over elements
of the orthogonal group $O_m$ introduced above \eqref{Om_action}.}
\label{tab:logCFTmodules}
\end{table}

It would be interesting to study a relation between logarithmic VOAs assigned to
3-manifolds here, and vertex algebras $\text{VOA} [M_4]$ assigned to 4-manifolds
bounded by such 3-manifolds via
the duality \cite{Gadde:2013sca,Dedushenko:2017tdw,Feigin:2018bkf}.
Another natural question is: For which class of 3d $\N=2$ theories
(and boundary conditions $\mathcal{B}_a$) the combined 3d-2d half-indices \eqref{BlockD2S1}
produce characters of logarithmic CFTs?
And, conversely, which logarithmic CFTs arise in this correspondence?
We hope to explore these questions in the future work.

Here and in \S\ref{sec:MTC}, we found several connections relating 3-manifold invariants $\widehat{Z}_a (q)$
with logarithmic CFTs and non-semisimple MTCs.
On the other hand, in a parallel line of development, ``logarithmic'' 3-manifold invariants based on non-semisimple MTCs
were studied in~\cite{logBCGPMa,logBCGPMb,logBBG} which, therefore, we expect to be related to $\widehat{Z}_a (q)$.
We plan to pursue this direction in the future work.

\subsection{Hyperbolic $M_3$ and non-$C_2$-cofinite log-VOAs}

Already at this early stage, the connections between 3-manifolds and logarithmic CFTs can teach us a valuable lesson.
Namely, they can help us understand the answer to the following important question:
What is it about 3-manifolds whose invariants $\widehat{Z}_a (q)$ can be expressed
in terms of false theta functions and mock modular forms,
as opposed to more complicated modular objects?

If we combine several clues from the above, the answer seems to be triggered by whether
the corresponding log-VOA is $C_2$-cofinite or not, and whether a 3-manifold $M_3$
admits only $G_{\CC}$ flat connections with rational values of $\text{CS} (\alpha)$,
\be
\text{CS} (\alpha) \; \in \; \mathbb{Q}
\qquad\qquad \text{for all~}\alpha \in \mathcal{M}_{\text{flat}} \left( G_{\CC} , M_3\right).
\label{hyperbtest}
\ee
Indeed, anticipating a close relation between $\text{MTC} [M_3]$ described in \S\ref{sec:mod}
and the tensor category of a log-VOA associated to $M_3$ via the dictionary summarized in Table~\ref{tab:dict},
we expect that modules of the latter have conformal dimensions $\Delta_{\alpha}$ related to
values of $\text{CS} (\alpha)$ as in \eqref{CSconfdim}.

Then, if condition \eqref{hyperbtest} fails for some $\alpha$,
it means that the corresponding logarithmic CFT must have at least some representations
with irrational conformal dimensions $\Delta_{\alpha}$,
and such vertex algebras can not be $C_2$-cofinite\footnote{Among other things,
the $C_2$-cofiniteness means that VOA has finitely many inequivalent irreducible modules~\cite{MR1317233}.
See also \cite{MR2875849} for a nice exposition and various ways to understand this condition.}.
Indeed, Miyamoto proved \cite{Miyamoto:2002ar} (see \cite{Creutzig:2016fms} for a lucid review)
that values of conformal dimensions and the central charge in a $C_2$-cofinite VOA must all be rational.
Curiously, the condition \eqref{hyperbtest} holds for all examples of 3-manifolds considered in this paper,
which is probably why in all cases we find a relation to $C_2$-cofinite log-VOAs.

On the other hand, hyperbolic 3-manifolds have at least one $\SL(2,\CC)$ flat
connection $\alpha_{\text{geom}}$ --- sometimes called ``geometric'' or ``hyperbolic'' ---
and its complex conjugate, such that $\text{Im} \, \text{CS} (\alpha_{\text{geom}}) \ne 0$.
This necessarily violates the condition \eqref{hyperbtest} and, based on the above considerations,
we expect hyperbolic 3-manifolds to be related to logarithmic vertex algebras which are {\it not} $C_2$-cofinite.
In particular, this suggests what one should expect of the $q$-series invariants $\widehat{Z}_a (M_3)$
for hyperbolic $M_3$, assuming the relation between 3-manifold invariants $\widehat{Z}_a (M_3)$
and characters of logarithmic VOAs continues to hold in the hyperbolic case
as well.\footnote{If it indeed passes further tests, the condition \eqref{hyperbtest}
perhaps deserves the name ``$C_2$-cofiniteness for 3-manifolds.''}

\section{Examples}
\label{sec:examples}

In the first part of this section, we analyze the definition of the homological blocks provided in \cite{GPPV} for plumbed 3-manifolds and show that their convergence only depends on the sign of the diagonal entries of $M^{-1}$ corresponding to high-valency vertices (vertices with more than two edges incident to them, $\deg v > 2$).

This enables us to extend the definition of the $q$-series invariants $\widehat{Z}_b (q)$ to a wider range of plumbed 3-manifolds, including those with indefinite plumbings  related  to the negative-definite ones via Kirby moves.
For positive-definite plumbings and their Kirby-equivalents, a new procedure is proposed in \S\ref{sec:otherside} to define the corresponding $q$-series invariants $\widehat{Z}_b (q)$.

In the second part of this section, we explicitly compute the new invariants $\widehat{Z}_b (q)$ for some examples of Seifert manifolds with three singular fibers. In addition, we examine the properties of these manifolds through the modular perspective outlined in \S\ref{sec:resurgence}. In particular, we provide asymptotic expansions of WRT invariants (or equivalently, the transseries expansions of Chern-Simons partition functions as in \S\ref{sec:resurgence}) for selected examples:
$$Z_{CS}(M_3) \; \sim \; \displaystyle \sum_{\alpha} e^{2 \pi i k \text{CS}(\alpha)} Z_{\text{pert}}^{(\alpha)} (k)$$
where $\alpha$ runs over {\it all} flat connections on $M_3$. In the above formula, $Z_{\text{pert}}^{(\alpha)}$ will be referred to as the {\em transseries} of the saddle point $\alpha$ ({\it i.e.} flat connection $\alpha$).

\subsection{Definite and indefinite plumbings}
\label{sec:indef}

Given a plumbing graph with framing coefficients $a_i \in \mathbb{Z}$,
there is an associated surgery link, see Figure~\ref{fig:plumb}.
Performing surgery along the link, we obtain a ``plumbed'' manifold $M_3$ \cite{NR}. In particular, all Seifert manifolds $M_3 = M(b;\{q_i/p_i\}_i)$ are plumbed manifolds, because the rational surgery coefficients can be realized by a series of 3d Kirby moves and continued fraction:\footnote{The orientation convention is such that Poincare homology sphere is represented by a $-E_8$ plumbing graph, \textit{i.e.}, $M(-2;\frac{1}{2},\frac{2}{3},\frac{4}{5})$.}
\begin{equation}
\begin{array}{c} \overset{\displaystyle{-p/q}}{\bullet} \end{array} =
\begin{array}{cccc}
\overset{\displaystyle{a_1}}{\bullet}
\frac{\phantom{xxx}}{\phantom{xxx}}
& \overset{\displaystyle{a_2}}{\bullet}
\frac{\phantom{xxx}}{\phantom{xxx}}
& \overset{\displaystyle{a_3}}{\bullet}
& \cdots
\end{array}
\ \text{where} \quad \frac{q}{p} = - \dfrac{1}{a_1 - \dfrac{1}{a_2 - \dfrac{1}{a_3 - \cdots}}}.
\label{cfractsurg}
\end{equation}

\begin{figure}[htb] \centering
\includegraphics{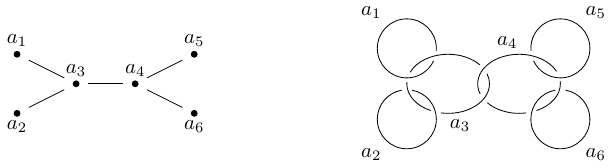}
\caption{A plumbing graph (left) and the associated surgery link (right).}
\label{fig:plumb}
\end{figure}

\noindent
Any Seifert manifold $M_3 = M(b;\{q_i/p_i\}_i)$ has a plumbing presentation illustrated in Figure~\ref{fig:SeifPlum}.
Such a plumbing graph has only one high-valency vertex and the rational surgeries along fibers are realized
by continued fractions, as in \eqref{cfractsurg}.
\begin{figure}[htb]
\centering
\includegraphics{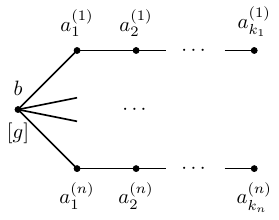}
\caption{\label{fig:SeifPlum} \small Plumbing graph for a Seifert manifold $M\big(b,g;\{\frac{q_i}{p_i}\}_{i=1}^n\big)$.}
\end{figure}

\noindent
A tree-shaped graph with $L$ vertices has a $L \times L$ adjacency matrix:
$$M_{ij} = \begin{cases} a_i &\text{if} \quad i = j \\ 1 &\text{if} \quad (i,j) \in \text{Edges} \\ 0 &\text{otherwise,} \end{cases}$$
which is precisely the linking form in \eq{eqn:gentheta}. Together with \eq{Fxexamples}, and \eq{eqn:gentheta}, we can compute the half-index \eq{Zalocalization} in the following form:\footnote{For convenience, we have chosen $M$ to be negative-definite. The condition can be relaxed, which is an interesting topic from the viewpoint of ``going to the other side.'' We will come back to this in Section~\ref{sec:otherside}. Also, to avoid clutter, we write \eqref{eqn:homblock} for $g=0$ Seifert manifolds; a more general expression for arbitrary genus $g$ involves the combination of \eqref{Sungnitpick} and \eqref{Fxexamples} as the integrand.}
\begin{multline}\label{eqn:homblock}
\widehat{Z}_b(q) = q^{-\frac{3L+\sum_v a_v}{4}} \cdot \mathrm{v.p.} \int_{|z_v|=1} \prod_{v \in \text{Vertices}} \frac{d z_v}{2 \pi i z_v} (z_v - 1/z_v)^{2 - \deg_v} \\
\times \sum_{\ell \in 2M\mathbb{Z}^L+b}q^{-\frac{(\ell,M^{-1}\ell)}{4}} \prod_{v\in \text{Vertices}} z_v^{\ell_v}.
\end{multline}
Here, $\mathrm{v.p.}$ indicates that we are performing a principal value integral, and $b \in 2 \mathrm{Coker}(M) + \delta$ modulo Weyl group action $b \leftrightarrow -b$. Although we have chosen $\delta \in \mathbb{Z}^L$ such that $\delta_v \equiv \deg_v \! \! \mod 2$, different choices of $\delta$ would only permute the homological blocks of a given plumbed manifold.

According to \cite{GPPV}, a particular combination of the homological blocks gives $SU(2)$ Chern-Simons partition function on $M_3$ in the radial limit $|q| \rightarrow 1$,
{\it cf.} \eq{relation_CS_blocks}. Specifically, for plumbed manifolds we have
\begin{equation}\label{eqn:ZCSdecomp}
\begin{gathered}
Z_{CS}(M_3) =\frac{1}{2i\sqrt{2k}}  \sum_{a} e^{2 \pi i k \text{CS}(a)} \sum_{b} S^{(A)}_{ab} \widehat{Z}_b(q), \\
a \in \mathrm{Coker}(M)/\mathbb{Z}_2 \ \overset{\text{setwise}}{=} \ \mathrm{Tor}H_1(M_3)/\mathbb{Z}_2,  \\
b \in (2\mathrm{Coker}(M)+\delta)/\mathbb{Z}_2, \quad \text{CS}(a) = - (a, M^{-1}a) \mod \mathbb{Z},\\
\text{and} \quad
S^{(A)}_{ab} = { \sum_{a' \in \{\ZZ_2\text{-orbit of }a\}} e^{2 \pi i (a',M^{-1}b)}  \over  \sqrt{|{\text{Tor}}H_1(M_3)|}}
\end{gathered}
\end{equation}
where the $\ZZ_2$ acts as the Weyl group on $H_1(M_3)$ by $a \leftrightarrow -a$.

When the plumbing graph is composed of a single high-valency vertex, \textit{i.e.}, when $M_3$ is a Seifert fibered manifold, the following theorem determines whether the homological blocks of $M_3$ defined via equation \eq{eqn:homblock} provide convergent $q$-series inside the unit disk and when they converge outside the unit disk.
\begin{lemma}
\label{thm:1-boundy}
Take $M_3$ to be a plumbed 3-manifold, whose plumbing graph $\mathcal{G}$ is a tree. Denote by $M$ the adjacency matrix of $\mathcal{G}$ and by $M^{-1}$ its inverse. Assume there is only one high-valency vertex and let $v_0$ denote the entry associated to this vertex in the adjacency matrix. Then, if $(M^{-1})_{v_0v_0}<0$ the homological blocks associated to $M_3$ are well-defined $q$-series, convergent for $|q| < 1$. On the other hand, if $(M^{-1})_{v_0v_0}>0$, the homological blocks converge for $|q|>1$.

More generally, when there are multiple high-valency vertices, let $\{v_i\}$ be the set of high-valency vertices in the plumbing graph of $M_3$. 
The homological blocks converge for $|q| < 1$ (respectively $|q| > 1$) when all 
 $(M^{-1})_{v_iv_i} <0$  (respectively $>0$) for all $v_i$'s. 
\end{lemma}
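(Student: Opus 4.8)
The plan is to read the convergence of \eqref{eqn:homblock} directly off the way the vertex factors $(z_v-1/z_v)^{2-\deg_v}$ interact with the principal-value contour integrals, and then to reduce the whole question to the behaviour of the quadratic form $-\tfrac14(\ell,M^{-1}\ell)$ along the infinite directions of the surviving lattice sum. First I would organize the integrand by vertex degree. A leaf ($\deg_v=1$) contributes the Laurent polynomial $(z_v-1/z_v)$, a degree-two vertex contributes the trivial factor $1$, and a high-valency vertex ($\deg_v\ge 3$) contributes the negative power $(z_v-1/z_v)^{2-\deg_v}$, which the $\mathrm{v.p.}$ prescription expands as the symmetric average of its $|z_v|<1$ and $|z_v|>1$ Laurent series. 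Taking the constant term in each $z_v$, as the contour integral does, then pins $\ell_v=\pm 1$ at every leaf, pins $\ell_v=0$ at every degree-two vertex, and leaves the components $n_i:=\ell_{v_i}$ at the high-valency vertices $\{v_i\}$ as the only free summation variables; because of the coset condition $\ell\in 2M\mathbb{Z}^L+b$ these range over a coset of a full-rank lattice in $\mathbb{Z}^{s}$, $s:=|\{v_i\}|$ (and, thanks to the symmetric expansion, in both signs of each $n_i$), while the coefficients are binomial numbers times finitely many leaf-sign choices and hence grow at most polynomially in $|n|$.

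With this reduction the block takes the shape $\sum_{n}c(n)\,q^{Q(n)}$ with $Q(n)=-\tfrac14\sum_{i,j}(M^{-1})_{v_iv_j}n_in_j+(\text{linear})+(\text{const})$, so that the region of convergence is dictated entirely by the principal submatrix $(M^{-1})|_{\{v_i\}}$ indexed by the high-valency vertices. In the case of a single high-valency vertex this submatrix is the $1\times 1$ number $(M^{-1})_{v_0v_0}$: if it is negative then $Q(n)=\tfrac14|(M^{-1})_{v_0v_0}|\,n^2+O(n)\to+\infty$, which is proper and bounded below, so each $q$-coefficient is a finite integer and the polynomial growth of $c(n)$ is swamped by the Gaussian factor, giving a series convergent on $|q|<1$; if it is positive the same estimate with the opposite sign gives convergence on $|q|>1$. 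This settles the first assertion, and in passing recovers the integrality $\widehat{Z}_b\in q^{\Delta_b}\mathbb{Z}[[q]]$.

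For several high-valency vertices the identical computation shows that the block converges for $|q|<1$ exactly when $Q\to+\infty$ along the full-rank lattice, i.e. exactly when $(M^{-1})|_{\{v_i\}}$ is \emph{negative definite} (and for $|q|>1$ when it is positive definite). The hard part will be to deduce this definiteness from the stated hypothesis that only the \emph{diagonal} entries $(M^{-1})_{v_iv_i}$ are negative. I would attack this using the combinatorial description of $M^{-1}$ for a tree: by Cramer's rule each entry $(M^{-1})_{vw}$ equals, up to sign, a product of determinants of the subtrees hanging off the $v$--$w$ path divided by $\det M$, and across a single bridge $v_iv_j$ one has the factorisation $\det M=\det M_{A}\det M_{B}-\det M_{A-v_i}\det M_{B-v_j}$ for the two sides $A,B$. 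Feeding these identities into the minors $(M^{-1})_{v_iv_i}(M^{-1})_{v_jv_j}-(M^{-1})_{v_iv_j}^{2}$ is meant to show that the sign pattern forced by negativity of the diagonal propagates to all principal minors of $(M^{-1})|_{\{v_i\}}$, so that Sylvester's criterion yields negative definiteness. Controlling the signs of these subtree determinants simultaneously for all high-valency vertices is where the argument is most delicate, and I expect this to be the genuine obstacle: it is plausible that here one must either invoke the negative-definite (or Kirby-equivalent) structure of $M$, or sharpen the hypothesis to outright definiteness of the whole submatrix $(M^{-1})|_{\{v_i\}}$.
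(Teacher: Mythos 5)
Your treatment of the single high-valency vertex case is, up to bookkeeping, exactly the paper's own proof: the paper performs the same reduction, rewriting the block as the lattice sum \eq{eqn:zHatRegul} with coefficients \eq{eqn:Fl}, noting that the expansions at degree-$\le 2$ vertices terminate so that only $\ell_{v_0}$ is unbounded on the support, and reading the domain of convergence off the sign of the dominant term $-\tfrac14 (M^{-1})_{v_0v_0}\ell_{v_0}^2$ in the exponent. So for one high-valency vertex your proposal is correct and is the same argument.

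For several high-valency vertices you have put your finger on a real problem which the paper steps over: its entire treatment of that case is the sentence ``the proof proceeds in an identical way,'' which only controls the diagonal terms of the quadratic form. You are right that what the argument actually delivers is convergence if and only if the principal submatrix $B:=(M^{-1})|_{\{v_i\}}$ is negative (resp.\ positive) definite, and you are also right to doubt that definiteness follows from negativity of the diagonal entries: it does not, even for trees, so the route through tree-determinant identities and Sylvester's criterion cannot be completed. Concretely, take the H-shaped tree with adjacent degree-three vertices $v_1,v_2$, framings $a_{v_1}=a_{v_2}=-1$, leaves of $v_1$ framed $-1,-1$ and leaves of $v_2$ framed $-1,-3$. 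Then $\det M=-2$, and Jacobi's complementary-minor identity gives
\begin{equation}
B \;=\; \begin{pmatrix} -\tfrac12 & \tfrac32 \\[2pt] \tfrac32 & -\tfrac32 \end{pmatrix},
\qquad
\det B \;=\; \frac{\det\big(M|_{\mathrm{leaves}}\big)}{\det M} \;=\; \frac{3}{-2}\;<\;0,
\end{equation}
so both diagonal entries of $M^{-1}$ at the high-valency vertices are negative while $B$ is indefinite. Consequently the exponents in \eq{eqn:zHatRegul} are unbounded below on the support (in a cone around the direction $\ell_{v_1}=\ell_{v_2}=n$ they behave as $-n^2/4+O(n)$) and unbounded above as well (for $\ell_{v_2}$ held fixed they grow like $+n^2/8$), so this block converges neither inside nor outside the unit circle, contrary to the second half of the lemma as stated.

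The obstacle you flagged is therefore not a deficiency of your proof but of the statement: in the multi-vertex case the hypothesis must be strengthened to exactly what your analysis requires, namely definiteness of $(M^{-1})|_{\{v_i\}}$ (the ``weakly definite'' condition adopted in subsequent work on $\widehat Z$-invariants). Note that when $M$ itself is definite this is automatic, since then $M^{-1}$ and all its principal submatrices are definite; the diagonal criterion only has independent content for indefinite plumbings --- precisely the setting this section is meant to extend to, and precisely where it fails. With the strengthened hypothesis, your argument goes through verbatim and is complete.
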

\begin{proof} To prove the above theorem we have to analyze the asymptotic growth of the formula for the homological blocks.
From equation~\eq{eqn:homblock} we have \cite{GPPV}
\be\label{eqn:zHatRegul}
\widehat{Z}_b(q)  = 2^{-L}q^{\Delta} \sum_{\ell \in 2M\mathbb{Z}^L+b}\, F_{1}^{\ell}\, q^{-\frac{(\ell,M^{-1}\ell)}{4}}, \quad b \in (2 \mathrm{Coker}M + \delta)/\mathbb{Z}_2
\ee
where the integer coefficients $F_{1}^{\ell}$ are generated as follows (note $2M\mathbb{Z}^L+b \subset 2\mathbb{Z}^L + \delta$):
\be\label{eqn:Fl}
\sum_{\ell \in 2\mathbb{Z}^L + \delta} F^\ell_1 \prod_v x_v^{\ell_v} =
\prod_v \left\{ \underset{\text{at} \ x_v \rightarrow 0}{\text{\scriptsize Expansion}} \frac{1}{(x_v - 1/x_v)^{\deg v-2}} +  \underset{\text{at} \ x_v \rightarrow \infty}{\text{\scriptsize Expansion}} \frac{1}{(x_v - 1/x_v)^{\deg v-2}} \right\}.
\ee
When $\deg v \leq 2$, the $x_v$-expansion on the RHS terminates at a finite order. Thus, $F^{\ell}_1$ vanishes when $|\ell_v|$ is large enough for all but one coordinates $\ell_v$ of $L$-dimensional vectors $\ell \in 2\mathbb{Z}^L + \delta$.

The only exception is $\ell_{v_0}$ which corresponds to the unique high-valency vertex $v_0$ ($\deg v_0 > 2$.) Explicitly,
$$F^{\ell}_1 \neq 0 \quad \Leftrightarrow \quad \ell_v = \begin{cases} \ell_{v_0} &\text{if} \ v = v_0, \ \ell_{v_0} \in \mathbb{Z} \\ 0 &\text{if} \ \deg v = 2 \\ 1 &\text{if} \ \deg v = 1. \end{cases} $$
Degree-zero vertices are irrelevant as we only consider connected graphs. This implies that the $q$-exponents in the RHS of \eq{eqn:zHatRegul} have the following  behavior: 
\be
q^{-\frac{(\ell,M^{-1}\ell)}{4} } = 
q^{-\frac{(M^{-1})_{v_0v_0}(\ell_{v_0})^2}{4} +O(1) } 
\ee
as $|\ell| \to \infty$ and if $F^{\ell}_1\neq 0$. This completes the proof for the first part. The proof proceeds in an identical way to the plumbing graphs with multiple high-valency vertices. 
\end{proof}

The above result shows that the validity of \eq{eqn:homblock} depends solely on the $M^{-1}$ entries at high-valency vertices.
Let us make a few remarks:
\begin{itemize}
\item \textit{orientation reversal.} It is important to note that the homological blocks in Lemma~\ref{thm:1-boundy} are computed from \eq{eqn:zHatRegul}, which is a result of a particular regularization (see Appendix A of \cite{GPPV}).
Therefore, when the formula does not define a convergent $q$-series inside the unit disc, an alternative computational scheme is required. In particular, given a 3-manifold $M_3 ({\mathcal{G}})$, the oppositely oriented manifold $- M_3 (\mathcal{G})$ provides the natural companion of $M_3 ({\mathcal{G}})$ on the other side of the $q$-plane. If the former has $(M^{-1})_{v_0v_0}<0$, the latter has $(M^{-1})_{v_0v_0}>0$ and \eq{eqn:zHatRegul} cannot be implemented to reproduce the associated homological blocks which are convergent for $|q|<1$. Therefore, when $(M^{-1})_{v_0v_0}>0$ we need to extend the definition of homological blocks outside the unit disc. This will be the central topic of \S\ref{sec:otherside}, where we conjecture a new procedure to derive the homological blocks of three manifolds with $(M^{-1})_{v_0v_0}>0$.

\item \textit{Kirby moves.} The signature of the plumbing data may not be invariant under 3d Kirby moves. An example is illustrated in \eq{eqn:indefPlumb} where two homeomorphic manifolds have different signatures: the LHS is neither positive nor negative definite, while the RHS is positive-definite.
\begin{equation}
\label{eqn:indefPlumb}
\begin{aligned}
\begin{array}{ccc}
& \overset{\displaystyle{-2}}{\bullet} &  \\
& \vline &   \\
\overset{\displaystyle{-2}}{\bullet}
\frac{\phantom{xxx}}{\phantom{xxx}}
& \underset{\displaystyle{-1}}{\bullet} &
\frac{\phantom{xxx}}{\phantom{xxx}}
\overset{\displaystyle{-3}}{\bullet}
\end{array} \quad
&\cong \quad
\begin{array}{c}
\text{orientation} \\
\text{reversal of}
\end{array}
\left(
\begin{array}{cccc}
& \overset{\displaystyle{-2}}{\bullet} & & \\
& \vline & &  \\
\overset{\displaystyle{-2}}{\bullet}
\frac{\phantom{xxx}}{\phantom{xxx}}
& \underset{\displaystyle{-2}}{\bullet} &
\frac{\phantom{xxx}}{\phantom{xxx}}
\overset{\displaystyle{-2}}{\bullet} &
\frac{\phantom{xxx}}{\phantom{xxx}}
\overset{\displaystyle{-2}}{\bullet}
\end{array} \right) \\[2ex]
M(-1;\tfrac{1}{2},\tfrac{1}{2},\tfrac{1}{3}) \quad &\cong \quad -M(-2;\tfrac{1}{2},\tfrac{1}{2},\tfrac{2}{3})
\end{aligned}
\end{equation}
Without Lemma~\ref{thm:1-boundy}, it is necessary to find for each indefinite manifold a homeomorphic, definite manifold to determine the convergence of homological blocks. In the above example, for instance, the RHS is positive-definite, so the homological blocks of $M(-1;\tfrac{1}{2},\tfrac{1}{2},\tfrac{1}{3})$ would converge \textit{outside} the unit disc. By the lemma, however, the domain of convergence can be immediately read off from $M^{-1}$ of the LHS (and of course, they converge for $|q| > 1$). It is also easy to see how 3d Kirby moves preserve the domain of convergence, as provided in Appendix~\ref{appendix:3dkirby}.
\item \textit{multiple high-valency vertices.} Lemma~\ref{thm:1-boundy} states that \eq{eqn:zHatRegul} does not reproduce convergent $q$-series when there appears multiple high-valency vertices whose $M^{-1}$ entries have different signs. This implies that homological blocks \eq{eqn:homblock} must be computed by other means than the regularization scheme \eq{eqn:zHatRegul}.
We will return to these examples in future work.
\end{itemize}

\subsection{Example: $M(-1;\frac{1}{2},\frac{1}{3},\frac{1}{9})$}

We first demonstrate the modularity dictionary and the steps outlined in Figure~\ref{fig:reverse} with the specific example of a Seifert manifold $M(-1;\frac{1}{2},\frac{1}{3},\frac{1}{9})$.

\subsubsection{$q$-series invariants}

The Seifert manifold has $\mathrm{Tor} H_{1}(M(-1;\frac{1}{2},\frac{1}{3},\frac{1}{9})) = \mathbb{Z}_{3}$ and the following plumbing graph:
\begin{equation}
\begin{array}{ccc}
& \overset{\displaystyle{-3}}{\bullet} & \\
& \vline & \\
\overset{\displaystyle{-2}}{\bullet}
\frac{\phantom{xxx}}{\phantom{xxx}}
& \underset{\displaystyle{-1}}{\bullet} &
\frac{\phantom{xxx}}{\phantom{xxx}}
\overset{\displaystyle{-9}}{\bullet}
\end{array}
\end{equation}
To compute its $q$-series invariants $\widehat Z_a (q)$, we first write down its adjacency matrix:
$$M = \begin{pmatrix} -1 & 1 & 1 & 1  \\ 1 & -2 & 0 & 0 \\ 1 & 0 & -3 & 0  \\ 1 & 0 & 0 & -9   \end{pmatrix}.$$
As is well known (see {\it e.g.} \cite{GS}), the cokernel of $M$ is isomorphic to $\mathrm{Tor} H_{1}(M(-1;\frac{1}{2},\frac{1}{3},\frac{1}{9}))$:
\begin{multline}
\mathrm{Coker}(M) = \mathbb{Z}^4/M\mathbb{Z}^4 = \big\langle (0,0,0,0), (1,0,-1,-6),(1,0,-2,-3) \big\rangle \\
 \cong \mathrm{Tor} H_{1}(M(-1;\tfrac{1}{2},\tfrac{1}{3},\tfrac{1}{9})) = \mathbb{Z}_3.
\end{multline}
The Weyl group action maps a cokernel element to its sign inverse. Therefore, the first element, $(0,0,0,0)$, is mapped to itself, while the others are conjugate to each other, \textit{i.e.}, $(1,0,-1,-6) = -(1,0,-2,-3) \in \mathbb{Z}^4/M\mathbb{Z}^4.$ Thus,
\begin{gather}
\mathrm{Coker}(M)/\mathbb{Z}_2 = \big\langle (0,0,0,0), (1,0,-1,-6) \big\rangle  \label{eqn:a}\\
(2\mathrm{Coker}(M)+\delta)/\mathbb{Z}_2 = \big\langle (1,-1,-1,-1), (3,-1,-3,-13) \big\rangle \label{eqn:b}
\end{gather}
where $\delta = (1,-1,-1,-1)$ is given by $\delta_v =\deg_v -2$, as in \cite{GPPV}.
Then, the $q$-series invariants $\widehat{Z}_b (M_3)$ are given by \eq{eqn:homblock}:
\begin{align}
 &\widehat{Z}_{(1,-1,-1,-1)}(q) = q+q^5-q^6-q^{18}+q^{20}+\ldots \label{eqn:Zhat1}\\
 &\widehat{Z}_{(3,-1,-3,-13)}(q) = -q^{4/3}(1+q^2-q^7-q^{13}+q^{23}+\ldots) \label{eqn:Zhat2}.
\end{align}

\subsubsection{Weil representation: 18+9}

To homological blocks of $M_3$ one can associate a Weil representation, labelled by the pair $m$ and $K$. Explicitly, they are related via \eq{def:Emb}.
Let us first determine $m$ via modularity dictionary.
Recall, that the non-perturbative part of the transseries \eq{trans_false} for $\Psi^{m+K}_r$, of the form $\sim e^{- 2 \pi i k (r')^2/4m}$, should capture the contributions from non-abelian flat connections.
For a Seifert manifold $M(b, \{ q_i/p_i \}_{i=1}^n)$, the denominator of $\text{CS}(a)$ for $a$ non-abelian is a l.c.m. of $4 p_i$, where $p_i$ are the orders of singular fibers in the Seifert invariant \cite{Auckly}. As a result, we claim that for $M_3= M(b, \{ q_i/p_i \}_{i=1}^n)$ we have
\be
4m = \mathrm{l.c.m.}\Big( 4\{ p_i \}_{i=1}^n \; \cup \; \{ \text{Denominators of $\text{CS}(a)$} \}_{0 \neq a \in \mathrm{Coker}M/\mathbb{Z}_2} \Big).
\label{eqn:mformula}
\ee

For the current example, we can easily compute $\text{CS}(a)$ for abelian flat connections from the cokernel elements computed in \ref{eqn:a}:
$$\text{CS}(a) = -(a,M^{-1}a) = \begin{cases} 0 \quad \mod \quad \mathbb{Z} & \text{for} \quad a = (0,0,0,0) \\ \frac{1}{3} \quad \mod \quad \mathbb{Z} & \text{for} \quad a = (1,0,-1,-6).  \end{cases}$$
Combining (\ref{eqn:mformula}) and the $\text{CS}(a)$ computed, we conclude:
$$4m = \mathrm{l.c.m.}(8,12,36,3) =72 \quad \Rightarrow \quad m = 18.$$
Correspondingly, the possible $K$ giving rise to irreducible representations are $K=\{1,2\}$ and $K=\{1,9\}$. A simple calculation reveals that the relevant irreducible representation is that labelled by $m+K=18+9$.

In summary, we have
\begin{equation}\label{summary189}
\begin{aligned}
&\sigma^{18+9} = \{1,3,5,7\}\\
&\widehat{Z}_{(1,-1,-1,-1)}(q) = q^{71/72}\Psi^{18+9}_1(\tau)  \\
&\widehat{Z}_{(3,-1,-3,-13)}(q) = -q^{71/72}\Psi^{18+9}_5(\tau).\\
\end{aligned}
\end{equation}

Next, we proceed to compute the composite matrices $S(M_3)$ and $T(M_3)$, defined in \eq{M3matrices}.

\subsubsection{Computing $S(M_3)$ and $T(M_3)$}

Let us write down all the relevant matrices for the current example.
First, recall that $S^{(A)}$ is the linking pairing on $\mathrm{Tor}H_1(M_3)$ in \eq{eqn:ZCSdecomp}. For $M_3 = M(-1;\frac{1}{2},\frac{1}{3},\frac{1}{9})$,
\begin{equation}
S^{(A)} = \frac{1}{\sqrt{3}}\begin{pmatrix} 1 & 1 \\ 2 & -1 \end{pmatrix}.
\end{equation}
Next, from \eq{def:Emb} and \eq{summary189} we can easily read off:
\begin{equation}
\text{\textbf{Emb}} = \begin{pmatrix} 1 & 0 & 0 & 0 \\ 0 & 0 & -1 & 0 \end{pmatrix}.
\end{equation}
The S-matrix of the Weil representation is easily computed from to be:
\begin{equation}
S^{(B)} = -\frac{2i}{3}
\begin{pmatrix}
A& \tfrac{3}{2}&B&C\\ \tfrac{1}{2}&0&\tfrac{1}{2}&-\tfrac{1 }{2}\\
B&\tfrac{3}{2}&-C&-A \\ C&-\tfrac{3}{2}&-A&B
\end{pmatrix}
\end{equation}
where $A,B,C = \sin(\tfrac{\p}{18}),\sin(\tfrac{5\p}{18}),\sin(\tfrac{7\p}{18})$ respectively.

Finally we combine $S^{(A)}, \text{\textbf{Emb}}$ and $S^{(B)}$ into $S(M_3)$:
\begin{equation}
S(M_3) = \begin{pmatrix} -0.23 i & 0 & 0.66 i & 0.43i \\ 0.43 i & 1.73i & 0.23 i & 0.66i \end{pmatrix},
\end{equation}
here evaluated numerically and rounded to the second decimal place.

Next, we compute the $T$ matrices. $T^{(A)}$ is the diagonal matrix with $e^{2 \pi i \text{CS}(a)}$ on the diagonal:
\begin{equation}
T^{(A)} = \exp \; 2\pi i \begin{pmatrix} 0 & 0 \\ 0 &  \frac{1}{3} \end{pmatrix}
.\end{equation}
{}From \eq{Tmatrix}, we also have
\begin{equation}
T^{(B)}=
\exp \; 2\pi i \begin{pmatrix} {1\over 72} & 0 &0 & 0 \\  0 &  {9\over 72}  &0 & 0 \\
 0 &0&  {25\over 72}   & 0 \\ 0 & 0 &0 &  {49\over 72}  \end{pmatrix}
\end{equation}
Combining all these elements, we obtain
\be\renewcommand*{\arraystretch}{1.5}
T(M_3) = \bem \ex(-\frac{1}{72}) & \ex(-\frac{9}{72}) &\ex(-\frac{25}{72}) & \ex(-\frac{49}{72}) \\ \ex(-\frac{49}{72}) & \ex(-\frac{57}{72})& \ex(-\frac{1}{72}) & \ex(-\frac{25}{72}) \eem.
\ee

\subsubsection{Non-abelian flat connections}

As advertised, we will now extract from $S(M_3)$ and $T(M_3)$ the set of Chern-Simons invariants for all non-abelian flat connections on $M_3$ and determine which of them are complex.

From $S(M_3)$ computed above, we observe that:
\begin{equation}
\{T(M_3)_{ar} | \text{$a,r$ such that $S(M_3) \neq 0$}\} = \{ \ex(-\tfrac{1}{72}), \ex(-\tfrac{25}{72}), \ex(-\tfrac{49}{72}), \ex(-\tfrac{57}{72})
 \}.
\end{equation}
From the rule \eq{rule_nonAb},
it follows that there are (at least) four non-abelian $\SL(2,\CC)$ flat connections, and the set of their Chern-Simons invariants is $\{ -\frac{1}{72},-\frac{25}{72},-\frac{49}{72},-\frac{57}{72} \}$ modulo $\ZZ$.

To determine which of them correspond to complex non-abelian flat connections, the next step is to compute $\mathbf{c}_\alpha$
via \eq{def:calpha}, which involves a sum over the pairs $(a,r)$ for which $T(M_3)_{a,r}=\ex(\alpha)$.
For example, when $\alpha = -\frac{1}{72}$, $(a,r) = (1,1)$ and $(2,4)$. Now, we can compute $\mathbf{c}_\alpha$:
\begin{equation}
\begin{cases}
\mathbf{c}_{-\frac{1}{72}} &=0\\
\mathbf{c}_{-\frac{25}{72}} &=1.17 i\\
\mathbf{c}_{-\frac{49}{72}} &= 0.76 i\\
\mathbf{c}_{-\frac{57}{72}} &= 1.03i .
\end{cases}
\label{eqn:calpha}
\end{equation}
So we conclude that $M_3 = M(-1;\frac{1}{2},\frac{1}{3},\frac{1}{9})$ must admit one complex non-abelian flat connection with $\text{CS} = -\frac{1}{72}$, and three $SU(2)$ non-abelian flat connections with $\text{CS} = -\frac{25}{72},-\frac{49}{72},-\frac{57}{72}.$

\subsubsection{Counting by A-polynomial}

Let us compare the above results with the computation based on a surgery presentation of $M_3$.
As explained in \cite{Gukov:2003na} and \cite[sec.5]{Gukov:2016njj},
when $M_3 = S^3_{r} (K)$ is a surgery on a knot $K \subset S^3$ with a surgery coefficient $r \in \QQ$,
flat $\SL(2,\CC)$ connections on $M_3$ are contained in the set of intersection points:
\be
\text{flat connections} \quad \hookrightarrow \qquad \{ s(x,y) := y x^r - 1 = 0 \} \,\cap\, \{ A_K (x,y)=0 \}
\label{intAsurg}
\ee
in $(\CC^* \times \CC^*) / \ZZ_2$ parametrized by $(x,y) \sim (x^{-1},y^{-1})$.
Here, $A_K (x,y)$ is the so-called A-polynomial of the knot $K$.
Note, some of the intersection points \eqref{intAsurg} may not lift to an actual representation $\pi_1 \to \SL(2,\CC)$.
Similarly, one might worry that accidental cancellations in the steps outlined in Figure \ref{fig:reverse}
could cause one to underestimate the number of flat connections on $M_3$.
Therefore, in practice, it is a good idea to compare the results produced by these two methods, when both are available.

In our present example of $M_3 = M(-1;\frac{1}{2},\frac{1}{3},\frac{1}{9})$ such an alternative method is indeed available,
thanks to a surgery presentation $M_3 = S^3_{-3} ({\bf 3}^{r})$, where $K = {\bf 3}^{r}$ is the right-handed trefoil knot.
The corresponding A-polynomial and the curve $s(x,y)=0$ are:
$$A(x,y) = (y-1)(y x^6 + 1), \quad s(x,y) = y x^{-3}-1.$$
Discarding the point $(x,y)=(-1,-1)$ that does not lift to a flat connection on $M_3$ \cite{Gukov:2016njj},
we obtain the following intersection points \eqref{intAsurg}, modulo the symmetry $(x,y)\sim(x^{-1},y^{-1})$:
$$ (x,y) \; = \; (1,1) \,, (e^{2\pi i /3},1) \,, (e^{\pi i /3},-1) \,, (e^{\pi i \frac{1}{9}},e^{\pi i /3}) \,, (-e^{\pi i \frac{4}{9}},e^{\pi i /3}) \,, (e^{\pi i \frac{7}{9}},e^{\pi i /3}) \,.$$
All abelian flat connections have $y=1$, and there are two such points in our list, in agreement with the above analysis.
The remaining four points are candidates for non-abelian flat connections, either real or complex.
Since the modularity analysis leads to the lower bound on the number of non-abelian flat connections also equal to 4 in this example,
combining the upper and lower bounds produced by these two methods we learn that the total number of non-abelian flat connections
indeed must be 4.

\subsubsection{Asymptotic expansions}

We conclude the analysis of this example by writing the asymptotic expansion of $Z_{CS}(M_3)$.
Combining the relation between the $q$-series invariants and \eq{relation_CS_blocks}
with the transseries expression for the false theta functions \eq{trans_false},
we obtain the transseries expressions at large $k$ for $Z_{CS}(M_3)$.
The results for various saddle points (flat connections on $M_3$)
are tabulated in Table~\ref{table:239}, where we omitted the overall factor $-iq^{71/72}/2\sqrt{2}$.

\begin{table}[h]\begin{center}\scalebox{0.9}{
\begin{tabular}{c c c c}
CS action & stabilizer & type & transseries \vspace{3pt}
\\ \toprule
$0$ & $SU(2)$ & central & $e^{2 \pi i k \cdot 0} \bigg( \frac{4 \pi i}{3 \sqrt{3}} k^{-3/2} + \frac{203 \pi^2}{27\sqrt{3}} k^{-5/2} + \mathcal{O}(k^{-7/2}) \bigg)$  \\[1.5ex]
$\frac{1}{3}$ & $U(1)$ & abelian & $e^{2 \pi i k \frac{1}{3}} \bigg( \sqrt{3}k^{-1/2} - \frac{11 \pi i}{4\sqrt{3}} k^{-3/2} + \mathcal{O}(k^{-5/2}) \bigg)$ \\[1.5ex]

$-\frac{25}{72}$ & $\pm1$ & non-abelian, real & $e^{-2\pi ik\frac{25}{72}} e^{\frac{3 \pi i}{4}}\bigg[ \frac{4}{3\sqrt{3}} (\cos\frac{2\pi}{9}+2\sin\frac{\pi}{18}) \bigg] $ \\[1.5ex]
$-\frac{49}{72}$ & $\pm1$ & non-abelian, real & $e^{-2\pi ik\frac{49}{72}} e^{\frac{3 \pi i}{4}} \bigg[ \frac{4}{3\sqrt{3}} (2\cos\frac{\pi}{9} + \sin\frac{\pi}{18}) \bigg] $ \\[1.5ex]
$-\frac{57}{72}$ & $\pm1$ & non-abelian, real & $e^{-2\pi ik\frac{57}{72}} e^{\frac{3 \pi i}{4}} \frac{2}{\sqrt{3}} $ \\[1.5ex]
$-\frac{1}{72}$ & $\pm1$ & non-abelian, complex & $0$ \vspace{2pt}\\ \bottomrule
\end{tabular}}
\caption{Transseries and classification of flat connections on $M(-2;\frac{1}{2},\frac{1}{3},\frac{1}{9})$. }
\label{table:239}
\end{center}
\end{table}

\subsection{Example: $M(-2;\frac{1}{2},\frac{1}{3},\frac{1}{2})$}

Let us look at one more example in detail, the Seifert manifold $M_3=M(-2;\frac{1}{2},\frac{1}{3},\frac{1}{2})$.
This example will also play a role in \S\ref{sec:otherside},
where the extension of $q$-series invariants $\widehat{Z}_a (q)$ to the lower-half plane is discussed.

Another new feature of this example is a ``center symmetry,'' a global $\mathbb{Z}_2$-symmetry distinct from the familiar Weyl group action.
We call it ``center symmetry'' because it acts on representations $\rho: \pi_1 (M_3) \to \SL(2,\CC)$
by multiplying some of the corresponding holonomies by the central elements $\pm {\bf 1}$ of $G=SU(2)$
or its complexification $G_{\CC} = \SL(2,\CC)$.
The role of this center symmetry will be discussed in details toward the end of this example.

\subsubsection{$q$-series invariants}

The manifold of interest has $\mathrm{Tor}H_1(M_3,\ZZ) = \mathbb{Z}_8$ and one of its plumbing
presentations looks like:
\begin{equation}
\begin{array}{ccc}
& \overset{\displaystyle{-3}}{\bullet} & \\
& \vline & \\
\overset{\displaystyle{-2}}{\bullet}
\frac{\phantom{xxx}}{\phantom{xxx}}
& \underset{\displaystyle{-2}}{\bullet} &
\frac{\phantom{xxx}}{\phantom{xxx}}
\overset{\displaystyle{-2}}{\bullet}
\end{array}
\end{equation}
Again, we write down the adjacency matrix and compute $a \in \mathrm{Coker}(M)$ and $b \in 2\mathrm{Coker}(M)+\delta$:
\begin{equation}
 M = \begin{pmatrix} -2 & 1 & 1 & 1  \\ 1 & -2 & 0 & 0 \\ 1 & 0 & -3 & 0  \\ 1 & 0 & 0 & -2   \end{pmatrix}
 \end{equation}
\begin{gather}\begin{split}
a \in \mathrm{Coker}(M)/\mathbb{Z}_2 =& \big\langle (0,0,0,0), (1,-1,0,-1), \\
&(0,-1,0,0), (0,0,-1,0), (0,0,0,-1) \big\rangle
\end{split}\end{gather}
\begin{gather}\begin{split}
b \in (2\mathrm{Coker}(M)+\delta)/\mathbb{Z}_2 = &\big\langle (3,-1,-5,-3), (3,-3,-5,-1), \\
&(1,-1,-1,-1), (3,-3,-1,-3), (1,-3,-1,-1) \big\rangle.
\end{split}\end{gather}
Using this input and the general tools described earlier,
we can now compute three kinds of topological invariants of $M_3=M(-2;\frac{1}{2},\frac{1}{3},\frac{1}{2})$:
1) the Chern-Simons invariants of abelian flat connections,
2) its $S^{(A)}$ matrix, and 3) its $q$-series invariants $\widehat{Z}_a (M_3)$:
\begin{equation}
\text{CS}(a) = -(a,M^{-1}a) = \begin{cases} 0 \quad \mod \quad \mathbb{Z} & \text{for} \quad a = (0,0,0,0), (1,-1,0,-1) \\ \frac{7}{8} \quad \mod \quad \mathbb{Z} & \text{for} \quad a = (0,-1,0,0), (0,0,0,-1) \\ \frac{1}{2} \quad \mod \quad \mathbb{Z} & \text{for} \quad a = (0,0,-1,0) \end{cases}.
\label{eqn:CS6+2}
\end{equation}
\begin{equation}\label{eqn:SA6plus2}
S^{(A)} = \frac{1}{\sqrt{8}} \begin{pmatrix} 1 & 1 & 1 & 1 & 1 \\ 1 & 1 & 1 & 1 & 1 \\ 2 & 2 & 0 & 0 & -2 \\ 2 & 2 & -2 & -2 & 2 \\ 2 & 2 & 0 & 0 & -2 \end{pmatrix}
 \end{equation}
 \be\label{eqn:Zhat6+2}
\begin{aligned}
&\widehat{Z}_{(3,-1,-5,-3)}(q) = q^{-1/4}(-1+q^4-q^8+q^{20}-q^{28}+q^{48}+\ldots) \\
&\widehat{Z}_{(3,-3,-5,-1)}(q) = q^{-1/4}(-1+q^4-q^8+q^{20}-q^{28}+q^{48}+\ldots) \\
&\widehat{Z}_{(1,-1,-1,-1)}(q) = q^{-3/8}(1+q-q^2+q^5-q^7+q^{12}+\ldots) \\
&\widehat{Z}_{(3,-3,-1,-3)}(q) = q^{-3/8}(-1+q-q^2+q^5-q^7+q^{12}+\ldots) \\
&\widehat{Z}_{(1,-3,-1,-1)}(q) = 2q^{1/4}(1-q^2+q^{10}-q^{16}+q^{32}-q^{42}+\ldots)
\end{aligned}
\ee
Plugging these into \eq{eqn:mformula}, we obtain:
\begin{equation}
4m = \mathrm{l.c.m.} ( 8,12, 1,2,8 ) = 24 ~ \Rightarrow
~m=6.
\end{equation}
Since $\mathrm{Ex}_6 = \{ 1, 2, 3, 6 \}$, $K$ can be either $\{1\}$, $\{ 1, 2 \}$ or $\{1, 3\}$, with the latter two corresponding to irreducible representations. With $m+K = 6+2$, we get:
\begin{equation}
\begin{aligned}
&\sigma^{6+2} = \{1,2,4\}\\
& \widehat{Z}_{(3,-1,-5,-3)}(q) = \widehat{Z}_{(3,-3,-5,-1)}(q) =  -\frac{1}{2}q^{-5/12}\Psi^{6+2}_2(\tau) \\
& \widehat{Z}_{(1,-1,-1,-1)}(q) = q^{-5/12}(2q^{1/24}- \Psi^{6+2}_1(\tau) ) \\
& \widehat{Z}_{(3,-3,-1,-3)}(q) = -q^{-5/12}\Psi^{6+2}_1(\tau)  \\
& \widehat{Z}_{(1,-3,-1,-1)}(q) =  q^{-5/12}\Psi^{6+2}_4(\tau)
\end{aligned}
\label{eqn:zhat6+2}
\end{equation}

\subsubsection{Computing $S(M_3)$ and $T(M_3)$}

Next, we proceed to compute the ``composite'' modular matrices $S(M_3)$ and $T(M_3)$.
The matrix $S^{(A)}$ has already been computed in \eq{eqn:SA6plus2}.
The embedding matrix can be read off from (\ref{eqn:zhat6+2}):
\begin{equation}
\text{\textbf{Emb}} = \begin{pmatrix} 0 & -\frac{1}{2} & 0 \\ 0 & -\frac{1}{2} & 0 \\ -1 & 0 & 0 \\ -1 & 0 & 0 \\ 0 & 0 & 1 \end{pmatrix}.
\end{equation}
The matrix $S^{(B)}$ can be computed from the projection matrix to be
\begin{equation}
S^{(B)} = -\frac{i}{2}
\begin{pmatrix}
0 & 1 & 1 \\[1.5ex]
2 & 1 & -1 \\[1.5ex]
2 & -1 & 1
\end{pmatrix}.
\end{equation}
and, when combined with $\text{\textbf{Emb}}$ and $S^{(A)}$, gives
\begin{equation}
S(M_3) = \frac{i}{\sqrt{2}}\begin{pmatrix} 0 & 1 & 0 \\ 0 & 1 & 0 \\ 2 & 0 & 0 \\ 0 & 0 & -2 \\ 2 & 0 & 0 \end{pmatrix}.
\end{equation}
Next, we compute the $T$ matrices. From (\ref{eqn:CS6+2}) we obtain
\begin{equation}
T^{(A)} = \exp \; 2\pi i  \begin{pmatrix} 0 & 0 & 0 & 0 & 0 \\  0 & 0 & 0 & 0 & 0 \\ 0 & 0 & \frac{7}{8} & 0 & 0 \\ 0 & 0 & 0 & \frac{1}{2} & 0 \\ 0 & 0 & 0 & 0 & \frac{7}{8} \end{pmatrix} \,.
\end{equation}
On the other hand, $T^{(B)} = e^{2 \pi i \frac{r^2}{4m}}\delta_{r,r'}$ for $r \in \sigma^{6+2} = \{1,2,4\}$:
\begin{equation}
T^{(B)} = \exp \; 2 \pi i \begin{pmatrix} \frac{1^2}{24} & 0 & 0 \\ 0 & \frac{2^2}{24} & 0  \\ 0 & 0 &  \frac{4^2}{24}\end{pmatrix}.
\end{equation}
Combining these two $T$-matrices with $\mathbf{I}$ (= $3 \times 5$ matrix with all entries equal to $1$), we  get:
\be\renewcommand*{\arraystretch}{1.5}
T(M_3) = \bem \ex(-\frac{1}{24}) & \ex(-\frac{4}{24}) & \ex(-\frac{16}{24}) \\ \ex(-\frac{1}{24}) & \ex(-\frac{4}{24}) & \ex(-\frac{16}{24}) \\ \ex(-\frac{4}{24}) & \ex(-\frac{7}{24}) & \ex(-\frac{19}{24}) \\  \ex(-\frac{13}{24}) & \ex(-\frac{4}{24}) & \ex(-\frac{4}{24}) \\ \ex(-\frac{4}{24}) & \ex(-\frac{7}{24}) & \ex(-\frac{19}{24}) \eem. 
\ee

\subsubsection{Non-abelian flat connections}

From the $S(M_3)$ computed in the previous subsection, we observe that:
\begin{equation} \{T(M_3)_{ar} | \text{$a,r$ such that $S(M_3) \neq 0$}\} = \{ \ex(-\tfrac{1}{6}) \}.\end{equation}
Therefore, using the rule \eq{rule_nonAb}, we predict (at least) one non-abelian $SL (2, \mathbb{C})$
flat connection with Chern-Simons invariant $-\frac{4}{24}$.
To determine whether it corresponds to a complex non-abelian flat connection, we compute $\mathbf{c}_{-\frac{1}{6}}$ via \eq{def:calpha}:
\begin{equation}
\mathbf{c}_{-\frac{1}{6}} = 2i\sqrt{2} \neq 0
\end{equation}
So we predict one $SU(2)$ non-abelian flat connection
with $\text{CS} = -\frac{4}{24}$, and no complex flat connections on $M_3=M(-2;\frac{1}{2},\frac{1}{3},\frac{1}{2})$.

\subsubsection{Asymptotic expansions}

As usual, we can assemble $\widehat{Z}_b$ into $Z_{CS}(M_3)$ to obtain the transseries for $M(-2;\frac{1}{2},\frac{1}{3},\frac{1}{2})$, summarized in Table~\ref{table:6+2} (where we omit an overall factor $-iq^{-5/12}/2\sqrt{2}$).

\begin{table}[h]\begin{center}\scalebox{0.9}{
\begin{tabular}{c c c c}
CS action & stabilizer & type & transseries \vspace{3pt}
\\ \toprule
$0$ & $SU(2)$ & central & $e^{2 \pi i k \cdot 0} \bigg( \frac{\pi i}{4\sqrt{2}}k^{-3/2} + \frac{7 \pi^2}{96\sqrt{2}}k^{-5/2} + \mathcal{O}(k^{-7/2}) \bigg)$  \\[1.5ex]
$0$ & $SU(2)$ & central & $e^{2 \pi i k \cdot 0} \bigg(\frac{\pi i}{4\sqrt{2}}k^{-3/2} + \frac{7 \pi^2}{96\sqrt{2}}k^{-5/2}+ \mathcal{O}(k^{-7/2}) \bigg)$  \\[1.5ex]
$\frac{7}{8}$ & $U(1)$ & abelian & $e^{2 \pi i k \frac{7}{8}} \bigg( -\sqrt{2}k^{-1/2} + \frac{2\sqrt{2} \pi i}{3} k^{-3/2} + \mathcal{O}(k^{-5/2}) \bigg)$  \\[1.5ex]
$\frac{7}{8}$ & $U(1)$ & abelian & $e^{2 \pi i k \frac{7}{8}} \bigg(  -\sqrt{2}k^{-1/2} + \frac{2\sqrt{2} \pi i}{3} k^{-3/2} + \mathcal{O}(k^{-5/2}) \bigg)$  \\[1.5ex]
$\frac{1}{2}$ & $U(1)$ & abelian & $e^{2 \pi i k \frac{1}{2}} \bigg( -\frac{2\sqrt{2}}{3}k^{-1/2} - \frac{11 \pi i}{54\sqrt{2}} k^{-3/2}  + \mathcal{O}(k^{-5/2}) \bigg)$  \\[1.5ex]
$-\frac{4}{24}$ & $\pm1$ & non-abelian, real & $e^{-2\pi ik\frac{4}{24}} e^{\frac{3 \pi i}{4}} 2\sqrt{2} $ \vspace{2pt}\\ \bottomrule
\end{tabular}}
\caption{Transseries for $M(-2;\frac{1}{2},\frac{1}{3},\frac{1}{2})$. }
\label{table:6+2}
\end{center}
\end{table}

\subsubsection{Center symmetry}

Note that there is a degeneracy in \eq{eqn:CS6+2}--\eq{eqn:Zhat6+2} due to an extra symmetry, {\it e.g.} the values $\text{CS}(a)$ are equal for $a = (0,0,0,0)$ and $a=(1,-1,0,-1)$, and the corresponding rows of $S^{(A)}$ also enjoy the same symmetry. From \eq{relateZhatZ} we see that the asymptotic expansions around these two abelian flat connections are, in fact, identical. Indeed, Table~\ref{table:6+2} explicitly shows several identical transseries.

Since not only the values $\text{CS}(a)$ but also the perturbative expansions around the flat connections are identical, we claim that the center symmetry is a symmetry of the moduli space. In order to understand this origin of this symmetry and to remove the degeneracy from $S^{(A)}$, we first study its action on the holonomy representations and then match the false theta functions with the ``folded'' version of the data \eq{eqn:CS6+2}--\eq{eqn:Zhat6+2} obtained by modding out the center symmetry.

The fundamental group of a Seifert manifold  $M_3=M(b, \{ q_i/p_i \}_{i=1}^n)$ is given by
$$\pi_1(M_3) = \langle x_1, x_2, x_3, h \ | \ h \ \text{central}, \ x_i^{p_i} = h^{-q_i}, \ x_1 x_2 x_3 = h^b \rangle.$$
We can classify $SU(2)$ flat connections by the $SU(2)$ representations of the fundamental group into $SU(2)$:
$$\rho : \big( \pi_1(M_3) \longrightarrow SU(2) \big)/\text{conj.}$$
modulo gauge transformations. Concretely, we can characterize such representations by the images of $\pi_1(M_3)$ generators.
In our present example, they are given by (before modding out by gauge transformations):
\begin{gather}
\begin{split}
\rho(x_i)  &= g_i  \Big( \begin{smallmatrix} \ex(\lambda_i) & 0 \\ 0 &  \ex(-\lambda_i)  \end{smallmatrix} \Big) g_i^{-1}, i=1,2,3 \\
\rho(h) &= \Big( \begin{smallmatrix} \ex(\lambda) & 0 \\ 0 &  \ex(-\lambda)  \end{smallmatrix} \Big)
\end{split}
\end{gather}
where $g_i$ represent arbitrary gauge transformations that are compatible with the group structure of $\pi_1(M_3)$.
The Weyl group acts on each $\rho(x_i)$ via conjugation by $\left( \begin{smallmatrix} 0 & -1 \\ 1 & 0 \end{smallmatrix} \right)$, hence
$\l_i \leftrightarrow -\l_i$. In what follows we will identify holonomy variables $\l_i$ related by the action of the Weyl group, as they correspond to the same flat connection. Table~\ref{table:6+2hol} shows holonomy variables $(\l, \l_1, \l_2, \l_3)$ which classify the group homomorphisms $\rho$ and their Chern-Simons invariants computed as in \cite{Auckly}.

\begin{table}[htb]
\centering
\begin{tabular}{c c c c}
CS invariant & type & $(\l, \l_1, \l_2, \l_3)$ & center symmetry\\
\hline \\
$0$ & abelian & $ ( 0,0,0,0 ) $ & $( 0,0,0,0 ) \mapsto ( 0,\frac{1}{2},0,\frac{1}{2} )$  \\[1.5ex]
$0$ & abelian & $ ( 0,\frac{1}{2},0,\frac{1}{2} ) $ & $( 0,\frac{1}{2},0,\frac{1}{2} ) \mapsto (0,0,0,0) $\\[1.5ex]
$\tfrac{1}{2}$ & abelian & $ (\frac{1}{2},\frac{1}{4},\frac{1}{2},\frac{1}{4} ) $ & $(\frac{1}{2},\frac{1}{4},\frac{1}{2},\frac{1}{4} ) \mapsto (\frac{1}{2},\frac{1}{4},\frac{1}{2},\frac{1}{4} )$ \\[1.5ex]
$\tfrac{7}{8}$ & abelian & $ (\frac{1}{4},\frac{5}{8},\frac{1}{4},\frac{1}{8} ) $ & $(\frac{1}{4},\frac{5}{8},\frac{1}{4},\frac{1}{8} )  \mapsto (\frac{1}{4},\frac{1}{8},\frac{1}{4},\frac{5}{8} ) $\\[1.5ex]
$\tfrac{7}{8}$ & abelian & $ (\frac{1}{4},\frac{1}{8},\frac{1}{4},\frac{5}{8} ) $ & $(\frac{1}{4},\frac{1}{8},\frac{1}{4},\frac{5}{8} ) \mapsto (\frac{1}{4},\frac{5}{8},\frac{1}{4},\frac{1}{8} ) $\\[1.5ex]
$-\tfrac{4}{24}$ & non-abelian & $( \frac{1}{2},\frac{1}{4},\frac{1}{6},\frac{1}{4} )$  & $( \frac{1}{2},\frac{1}{4},\frac{1}{6},\frac{1}{4} ) \mapsto ( \frac{1}{2},\frac{1}{4},\frac{1}{6},\frac{1}{4} )$
\end{tabular}
\caption{Holonomy variables and Chern-Simons invariants of $SU(2)$ flat connections on $M(-2;\frac{1}{2},\frac{1}{3},\frac{1}{2})$, along with the action of center symmetry on them.}
\label{table:6+2hol}
\end{table}

Apart from the Weyl group, we conjecture that there is an outer automorphism acting on the moduli space, which permutes different components of the moduli space. In terms of the holonomy angles $\lambda_i$, it acts by
\be
\quad (\l,\l_1,\l_2,\l_3) = (\l,\l_1+\tfrac{1}{2},\l_2,\l_3+\tfrac{1}{2}).
\ee
For instance, this maps one abelian flat connection to another as
\begin{equation}
\left(\tfrac{1}{4}, \tfrac{1}{8},\tfrac{1}{4},\tfrac{5}{8} \right) + \left(0,\tfrac{1}{2},0,\tfrac{1}{2} \right) \ \sim \ \left(\tfrac{1}{4}, \tfrac{5}{8}, \tfrac{1}{4},\tfrac{1}{8} \right),
\label{eqn:degNAfc232}
\end{equation}
where we have taken the action of the Weyl group into account. The orbits of center symmetry are shown in Table~\ref{table:6+2hol}.

We claim that the outer automorphism is not only a symmetry of flat connections but also of the moduli space of all connections.
First note that the center symmetry is also manifest in the data of the abelian flat connections \eq{eqn:CS6+2}--\eq{eqn:Zhat6+2}.
Indeed, the corresponding values $\text{CS}(a)$ are equal, \textit{e.g.}, for $a = (0,0,0,0)$ and $a=(1,-1,0,-1)$
and this symmetry is also manifest in the corresponding rows of the $S$-matrix $S^{(A)}$.
As a result, from \eq{relateZhatZ} we see that the asymptotic expansions around these two
abelian flat connections are identical. The prediction is consistent with the transseries in Table~\ref{table:6+2}.
Since not only the values $\text{CS}(a)$ but also the perturbative expansions around the flat connections are identical,
we conclude that the center symmetry is indeed a symmetry of the moduli space.

Next, let us see what happens when we identify flat connections related by the center symmetry.
The data of the abelian flat connections becomes:
\begin{equation}
\begin{gathered}
\text{CS}(a) =  \begin{cases} 0 \quad \mod \quad \mathbb{Z} & \text{for} \quad a = (0,0,0,0)\sim(1,-1,0,-1) \\
\frac{7}{8} \quad \mod \quad \mathbb{Z} &\text{for} \quad a=(0,-1,0,0) \sim (0,0,0,-1) \\
\frac{1}{2} \quad \mod \quad \mathbb{Z} &\text{for} \quad a = (0,0,-1,0) \end{cases} \\
S^{(A)} =  \frac{1}{\sqrt{2}} \begin{pmatrix} 2 & 2 & 1 \\ 4 & 0 & -2 \\ 2 & -2 & 1  \end{pmatrix}\\
\widehat{Z}_0(q) = \widehat{Z}_{(3,-1,-5,-3)}(q) + \widehat{Z}_{(3,-3,-5,-1)}(q) = -q^{-5/12}\Psi^{6+2}_2(\tau)  \\
\widehat{Z}_1(q) = \widehat{Z}_{(1,-1,-1,-1)}(q) + \widehat{Z}_{(3,-3,-1,-3)}(q) = 2q^{-5/12}(1-\Psi^{6+2}_1(\tau) ) \\
\widehat{Z}_2(q) = \widehat{Z}_{(1,-3,-1,-1)}(q) =  q^{-5/12}\Psi^{6+2}_4(\tau) .
\end{gathered}
\end{equation}
One can easily see now that $S^{(A)}$ is now non-degenerate and, furthermore, false theta functions match perfectly the ``folded'' homological blocks without degneracy.
Therefore, we may conclude that the modularity dictionary should be used after modding out by the symmetries of the moduli space.

\subsection{Example: $M(-1;\frac{1}{2},\frac{1}{3},\frac{1}{10})$}

We present another example with the center symmetry. This time, it is \textit{necessary} to mod out by center symmetry in order to find an appropriate Weil representation $m+K$.

\begin{table}[htb]
\centering
\begin{tabular}{c c c c}
CS invariant & type & holonomy angles & center symmetry \\
\hline \\
$0$ & abelian & $ ( 0,0,0,0 ) $ & $ ( 0,0,0,0 ) \mapsto ( 0,\tfrac{1}{2},0,\tfrac{1}{2}) $ \\[1.5ex]
$0$ & abelian & $ ( 0,\tfrac{1}{2},0,\tfrac{1}{2} ) $ & $ ( 0,\tfrac{1}{2},0,\tfrac{1}{2} ) \mapsto (0,0,0,0)$ \\[1.5ex]
$\tfrac{1}{4}$ & abelian & $ (\tfrac{1}{2},\tfrac{1}{4},\tfrac{1}{2},\tfrac{1}{4} ) $ & $(\tfrac{1}{2},\tfrac{1}{4},\tfrac{1}{2},\tfrac{1}{4} ) \mapsto (\tfrac{1}{2},\tfrac{1}{4},\tfrac{1}{2},\tfrac{1}{4} )  $  \\[1.5ex]
$-\tfrac{25}{60}$ & non-abelian & $( \frac{1}{2},\frac{1}{4},\frac{1}{6},\frac{1}{4} )$ & $( \frac{1}{2},\frac{1}{4},\frac{1}{6},\frac{1}{4} ) \mapsto ( \frac{1}{2},\frac{1}{4},\frac{1}{6},\frac{1}{4} )$ \\[1.5ex]
$-\tfrac{49}{60}$ & non-abelian & $( \frac{1}{2},\frac{1}{4},\frac{1}{6},\frac{3}{20} )$ & $( \frac{1}{2},\frac{1}{4},\frac{1}{6},\frac{3}{20} ) \mapsto ( \frac{1}{2},\frac{1}{4},\frac{1}{6},\frac{7}{20} )$ \\[1.5ex]
$-\tfrac{49}{60}$ & non-abelian & $( \frac{1}{2},\frac{1}{4},\frac{1}{6},\frac{7}{20} )$ & $( \frac{1}{2},\frac{1}{4},\frac{1}{6},\frac{7}{20} ) \mapsto ( \frac{1}{2},\frac{1}{4},\frac{1}{6},\frac{3}{20} )$
\end{tabular}
\caption{Holonomy angles and Chern-Simons invariants of $SU(2)$ flat connections on $M(-1;\frac{1}{2},\frac{1}{3},\frac{1}{10})$, along with the action of center symmetry.}
\label{table:24+10hol}
\end{table}
As before, we characterize flat connections by holonomy angles $\l$. The angles and their Chern-Simons invariants are summarized in Table~\ref{table:24+10hol}. The center symmetry acts by
$$(\l,\l_1,\l_2,\l_3) \ \mapsto \ (\l,\l_1+\tfrac{1}{2},\l_2, \l_3+\tfrac{1}{2}).$$

\subsubsection{$q$-series invariants}

The manifold of interest has $\mathrm{Tor}H_1(M_3) = \mathbb{Z}_4$ and the following plumbing graph:
\begin{equation}
\begin{array}{ccc}
& \overset{\displaystyle{-3}}{\bullet} & \\
& \vline & \\
\overset{\displaystyle{-2}}{\bullet}
\frac{\phantom{xxx}}{\phantom{xxx}}
& \underset{\displaystyle{-1}}{\bullet} &
\frac{\phantom{xxx}}{\phantom{xxx}}
\overset{\displaystyle{-10}}{\bullet}
\end{array}
\end{equation}
From its adjacency matrix, we can compute:
\begin{equation}
\begin{aligned}\notag
a \in \mathrm{coker}M/\mathbb{Z}_2 &= \langle (0,0,0,0), (1,-1,0,-5),(1,0,-1,-7) \rangle \\
b \in (2 \mathrm{coker} M + \delta)/\mathbb{Z}_2 &= \langle (1,-1,-1,-1), (3,-3,-1,-11), (3,-1,-3,-15) \rangle
\end{aligned}
\label{eqn:2310ab}
\end{equation}
\begin{gather}
\text{CS}(a) =  -(a,M^{-1}a) =  \begin{cases} 0 \quad \mod \quad \mathbb{Z} & \text{for} \quad a = (0,0,0,0), (1,-1,0,-5) \\
\frac{1}{4} \quad \mod \quad \mathbb{Z} &\text{for} \quad (1,0,-1,-7) \end{cases} \label{eqn:aMa2310} \\
S^{(A)} = \frac{1}{2} \begin{pmatrix} 1 & 1 & 1 \\ 1 & 1 & 1 \\ 2 & 2 & -2 \end{pmatrix}\\
\widehat{Z}_{(1,-1,-1,-1)}(q)=q^{5/4}(1+q^6-q^{28}+q^{62}+ \cdots) \label{eqn:2310Zhat1}\\
\widehat{Z}_{(3,-3,-1,-11)}(q) =q^{13/4}(-1-q^{12}+q^{14}+q^{38}-q^{82}+\cdots) \\
\widehat{Z}_{(3,-1,-3,-15)}(q) = -q^{3/2}(1-q^3+q^4-q^{11}+q^{19}-q^{32}-q^{52}+\cdots) \label{eqn:2310Zhat3}
\end{gather}
From which it follows that:
\begin{equation}
4m = \mathrm{l.c.m.} ( 8,12,40, 1,4 ) = 120 ~ \Rightarrow ~ m = 30.
\end{equation}

\subsubsection{Folding with the center symmetry}

Unlike what happens in the previous example, here the homological blocks~\eq{eqn:2310Zhat1}--\eq{eqn:2310Zhat3} do not correspond to any level 30 false theta function (although they do correspond to certain level 60 false theta functions). In what follows we show how this problem is resolved by folding with the center symmetry.

First note that the center symmetry is also manifest in the data of the abelian flat connections \eq{eqn:aMa2310}--\eq{eqn:2310Zhat3}.
Indeed, the values $\text{CS}(a)$ are equal for $a = (0,0,0,0)$ and $a=(1,-1,0,-5)$.
Moreover, the corresponding rows of $S^{(A)}$ enjoy this symmetry as well.
As a result, from \eq{relateZhatZ} we see that the asymptotic expansions around these two abelian flat connections are identical. Since not only $\text{CS}(a)$ but also the perturbative expansions around the flat connections are identical, this indicates that the center symmetry is indeed a symmetry of the moduli space.

Next, let us see what happens when we identify flat connections related by the center symmetry.
The data of the abelian flat connections becomes:
\begin{equation}
\begin{gathered}
\text{CS}(a) =  \begin{cases} 0 \quad \mod \quad \mathbb{Z} & \text{for} \quad a = (0,0,0,0)\sim(1,-1,0,-5) \\
\frac{1}{4} \quad \mod \quad \mathbb{Z} &\text{for} \quad a=(1,0,-1,-7) \end{cases} \\
S^{(A)} =  \begin{pmatrix} 1 & 1 \\ 1 & -1 \end{pmatrix}\\
\widehat{Z}_0(q) = \widehat{Z}_{(1,-1,-1,-1)}(q) + \widehat{Z}_{(3,-3,-1,-11)}(q) = q^{5/4}(1-q^2+q^6-q^{14}+q^{16}+\cdots) \\
\widehat{Z}_1(q) = \widehat{Z}_{(3,-1,-3,-15)}(q) = -q^{3/2}(1-q^3+q^4-q^{11}+q^{19}-q^{32}-q^{52}+\cdots) .
\end{gathered}
\end{equation}
As expected, now $S^{(A)}$ is non-degenerate and, furthermore, false theta functions perfectly match the ``folded'' $q$-series invariants $\widehat{Z}_a (M_3)$. This supports our proposal for applying the modularity dictionary {\it after} modding out by the symmetries of the moduli space. The resulting Weil representation is $m+K = 15+5$:
\begin{equation}
\begin{aligned}
&\sigma^{15+5} =\{1,2,4,5,7,10\} ~~(\text{irrep, genus 0}) \\
&\widehat{Z}_0(q) = q^{37/30}\Psi^{15+5}_1(\tau) \\
&\widehat{Z}_1(q) = -q^{37/30}\Psi^{15+5}_4(\tau) \,.
\end{aligned}
\end{equation}

\subsubsection{$S(M_3)$, $T(M_3)$, and the asymptotic expansions}

As before, we can proceed to compute the (numeric values of the) composite  matrices $S(M_3)$ and $T(M_3)$:
\begin{equation}
\begin{aligned}
\text{\textbf{Emb}} &= \begin{pmatrix} 1 & 0 & 0 & 0 & 0 & 0 \\ 0 & 0 & -1 & 0 & 0 & 0 \end{pmatrix},  \quad T^{(A)} = \exp \; 2 \pi i \begin{pmatrix}0 & 0 \\  0 & \frac{1}{4} \end{pmatrix} \\
S^{(B)} &= i \begin{pmatrix} 0.20 & -0.51 & -0.20 & -0.32 & -0.51 & -0.32 \\ -0.51 & -0.20 & -0.51 & -0.32 & 0.20 & 0.32 \\ -0.20 & -0.51 & -0.20 & 0.32 & 0.51 & -0.32 \\ -0.63 & -0.63 & 0.63 & 0.32 & -0.63 & 0.32 \\ -0.51 & 0.20 & 0.51 & -0.32 & 0.20 & -0.32 \\ -0.63 & 0.63 & -0.63 & 0.32 & -0.63 & -0.32 \end{pmatrix} \\
T^{(B)} &= \exp \; 2 \pi i \cdot \mathrm{diag} \big( \tfrac{1}{60}, \tfrac{4}{60}, \tfrac{16}{60},\tfrac{25}{60}, \tfrac{49}{60},\tfrac{100}{60} \big)
\end{aligned}
\end{equation}
It follows that
\begin{equation}
\begin{aligned}
S(M_3) &= i \begin{pmatrix} -0.39 & 0 & 0 & 0.63 & 1.02 & 0 \\ 0 & 1.02 & 0.39 & 0 & 0 & 0.63 \end{pmatrix} \\
\renewcommand*{\arraystretch}{1.5}
T(M_3) &= \bem \ex(-\frac{1}{60}) & \ex(-\frac{4}{60}) & \ex(-\frac{16}{60}) & \ex(-\frac{25}{60}) & \ex(-\frac{49}{60}) & \ex(-\frac{100}{60}) \\[1.5ex] \ex(-\frac{46}{60}) & \ex(-\frac{49}{60}) & \ex(-\frac{1}{60}) & \ex(-\frac{10}{60}) & \ex(-\frac{34}{60}) & \ex(-\frac{25}{60}) \eem,
\end{aligned}
\end{equation}
from which we conclude that the Chern-Simons invariants of non-abelian flat connections are $-\tfrac{1}{60},-\tfrac{25}{60},-\tfrac{49}{60}.$
As only $\mathbf{c}_{-\frac{1}{60}}$ vanishes, we predict that there are two real non-abelian flat connections with $\text{CS} = -\frac{25}{60},-\frac{49}{60}$ and one (or two, but related by the center symmetry) complex flat connections with $\text{CS} = -\frac{1}{60}$. The asymptotic expansions are computed and summarized in Table~\ref{table:15+5center}, where we have omitted the overall factor $ -iq^{-37/30}/2\sqrt{2}$.

\begin{table}[h]\begin{center}\scalebox{0.9}{
\begin{tabular}{c c c c}
CS action & stabilizer & type & transseries \vspace{3pt}
\\ \toprule
$0$ & $SU(2)$ & central & $e^{2 \pi i k \cdot 0} \bigg( \pi i k^{-3/2} + \frac{283 \pi^2}{60} k^{-5/2} + \mathcal{O}(k^{-7/2}) \bigg)$ \\[1.5ex]
$\frac{1}{4}$ & $U(1)$ & abelian & $e^{2 \pi i k \frac{1}{4}} \bigg( \frac{4}{3}k^{-1/2} - \frac{49 \pi i}{135}k^{-3/2} + \mathcal{O}(k^{-5/2}) \bigg)$\\[1.5ex]
$-\frac{25}{60}$ & $\pm1$ & non-abelian, real & $e^{-2\pi ik \frac{25}{60}} e^{\frac{3 \pi i}{4}} \cdot\frac{1}{\sqrt{10}} $\\[1.5ex]
$-\frac{49}{60}$ & $\pm1$ & non-abelian, real & $e^{-2\pi ik \frac{49}{60}} e^{\frac{3 \pi i}{4}} \cdot \frac{4\sqrt{2}}{\sqrt{15}} (\cos\frac{\pi}{30}+\sin\frac{2\pi}{15})  $  \\[1.5ex]
$-\frac{1}{60}$ & $\pm1$ & non-abelian, complex & $0$ \vspace{2pt}\\ \bottomrule
\end{tabular}}
\caption{Transseries and classification of flat connections on $M(-1;\frac{1}{2},\frac{1}{3},\frac{1}{10})$, \textit{after} modding out the center symmetry.}
\label{table:15+5center}
\end{center}
\end{table}

Note that Table~\ref{table:15+5center} is obtained \textit{after} modding out by the center symmetry.
In particular, the transseries of the ``central'' flat connection stands for the sum of two identical transseries around $a = (0,0,0,0)$ and $a=(1,-1,0,-5)$. As mentioned before, we must multiply the above answer by a factor of $\frac{1}{2}$ in order to recover the contribution from each of the two central flat connections.

Likewise, in Table~\ref{table:6+2hol} we see that there are two real non-abelian flat connections that get identified by the center symmetry.
As a check, we compute the Chern-Simons invariants from the holonomy variables using the formula
\begin{multline}
\text{CS}[(\l,\l_i);M(b, \{ q_i/p_i \}_{i=1}^n)] = - \Big( \sum_{i = 1}^3 p_i r_i \l_i^2 - q_i s_i \frac{1}{2^2} \Big) \\
= \begin{cases} -\frac{49}{60} &\text{for} \ (\l_1, \l_2, \l_3) = (\frac{1}{4},\frac{1}{6},\frac{3}{20}) \ \text{and} \ (\frac{1}{4},\frac{1}{6},\frac{7}{20}) \\ -\frac{25}{60} &\text{for} \ (\l_1, \l_2, \l_3) = (\frac{1}{4},\frac{1}{6},\frac{5}{20}). \end{cases}
\end{multline}
In the first line, $r_i$ and $s_i$ are any integers satisfying $p_i s_i - q_i r_i = 1$.
It follows that degenerate non-abelian flat connections have $\text{CS} = -\tfrac{49}{60}$. As a result, we predict that our manifold has
\begin{itemize}
\item one complex flat connection with $\text{CS} = -\frac{1}{60}$
\item two real non-abelian flat connections with $\text{CS} = -\frac{49}{60}$
\item one real non-abelian flat connection with $\text{CS} = -\frac{25}{60}.$
\end{itemize}

\subsubsection{Comparison with A-polynomial}

Note that we have not ruled out the possibility that there can be extra complex flat connections related by the center symmetry.
To investigate this, recall that since $M_3 = M(-1;\frac{1}{2},\frac{1}{3},\frac{1}{10})$
is a $-4/1$ surgery along the right-handed trefoil,
we can compute the total number of real/complex non-abelian flat connections by studying its A-polynomial.
Counting the intersection points of algebraic curves defined by equations
$$A(x,y) = (y-1)(y x^6 + 1) \quad \text{and} \quad s(x,y) = y x^{-4}-1,$$
we find a total of \textit{four} non-abelian flat connections, which agrees with the number found in the previous section.
Therefore, the complex flat connections are non-degenerate with respect to the action of center symmetry,
and we can finalize the transseries as in Table~\ref{table:15+5}. (Again, the overall factor $-iq^{-37/30}/2\sqrt{2}$ is omitted.)
\begin{table}[h]\begin{center}\scalebox{0.9}{
\begin{tabular}{c c c c}
CS action & stabilizer & type & transseries \vspace{3pt}
\\ \toprule
$0$ & $SU(2)$ & central & $e^{2 \pi i k \cdot 0} \bigg( \frac{\pi i }{2} k^{-3/2} + \frac{283 \pi^2}{120} k^{-5/2} + \mathcal{O}(k^{-7/2}) \bigg)$ \\[1.5ex]
$0$ & $SU(2)$ & central & $e^{2 \pi i k \cdot 0} \bigg( \frac{\pi i }{2} k^{-3/2} + \frac{283 \pi^2}{120} k^{-5/2} + \mathcal{O}(k^{-7/2}) \bigg)$ \\[1.5ex]
$\frac{1}{4}$ & $U(1)$ & abelian & $e^{2 \pi i k \frac{1}{4}} \bigg( \frac{4}{3}k^{-1/2} - \frac{49 \pi i}{135}k^{-3/2} + \mathcal{O}(k^{-5/2}) \bigg)$\\[1.5ex]
$-\frac{25}{60}$ & $\pm1$ & non-abelian, real & $e^{-2\pi ik \frac{25}{60}} e^{\frac{3 \pi i}{4}} \cdot\frac{1}{\sqrt{10}} $\\[1.5ex]
$-\frac{49}{60}$ & $\pm1$ & non-abelian, real & $e^{-2\pi ik \frac{49}{60}} e^{\frac{3 \pi i}{4}} \cdot \frac{2\sqrt{2}}{\sqrt{15}} (\cos\frac{\pi}{30}+\sin\frac{2\pi}{15})  $  \\[1.5ex]
$-\frac{49}{60}$ & $\pm1$ & non-abelian, real & $e^{-2\pi ik \frac{49}{60}} e^{\frac{3 \pi i}{4}} \cdot \frac{2\sqrt{2}}{\sqrt{15}} (\cos\frac{\pi}{30}+\sin\frac{2\pi}{15})  $  \\[1.5ex]
$-\frac{1}{60}$ & $\pm1$ & non-abelian, complex & $0$ \vspace{2pt}\\ \bottomrule
\end{tabular}}
\caption{Transseries and classification of flat connections on $M(-1;\frac{1}{2},\frac{1}{3},\frac{1}{10})$.}
\label{table:15+5}
\end{center}
\end{table}

\subsection{Example: $M(-1;\frac{1}{2},\frac{1}{3},\frac{1}{8})$}

\subsubsection{$q$-series invariants}

The manifold of interest has $\mathrm{Tor}H_1(M_3) = \mathbb{Z}_2$ and the following plumbing graph:
\begin{equation}
\begin{array}{ccc}
& \overset{\displaystyle{-3}}{\bullet} & \\
& \vline & \\
\overset{\displaystyle{-2}}{\bullet}
\frac{\phantom{xxx}}{\phantom{xxx}}
& \underset{\displaystyle{-1}}{\bullet} &
\frac{\phantom{xxx}}{\phantom{xxx}}
\overset{\displaystyle{-8}}{\bullet}
\end{array}
\end{equation}
From its adjacency matrix, we can compute:
\begin{equation}
\begin{aligned}
a \in \mathrm{coker}M/\mathbb{Z}_2 &= \langle (0,0,0,0), (1,-1,0,-4) \rangle \\
b \in (2 \mathrm{coker} M + \delta)/\mathbb{Z}_2 &= \langle (1,-1,-1,-1), (3,-3,-1,-9) \rangle
\end{aligned}
\label{eqn:238ab}
\end{equation}
\begin{gather}
\text{CS}(a) =  -(a,M^{-1}a) =  \begin{cases} 0 \quad \mod \quad \mathbb{Z} & \text{for} \quad a = (0,0,0,0) \\
\frac{1}{2} \quad \mod \quad \mathbb{Z} &\text{for} \quad (1,0,-1,-4) \end{cases} \label{eqn:aMa238} \\
S^{(A)}
= \frac{1}{2} \begin{pmatrix} 1 & 1 \\ 1 & 1 \end{pmatrix}\\
\widehat{Z}_{(1,-1,-1,-1)}(q) =-q^{3/4}(-1 + q^3 - q^{10}+q^{23}-q^{25}+q^{44} + \cdots) \label{eqn:238Zhat1}\\
\widehat{Z}_{(3,-3,-1,-9)}(q) =q^{5/4}(-1+q^5-q^6+q^{17}-q^{31}+q^{52}-q^{55}+\cdots)
\label{eqn:238Zhat2}
\end{gather}
From which it follows that:
\begin{equation}
4m = \mathrm{l.c.m.} ( 8,12,32, 1,2 ) = 96 ~ \Rightarrow ~ m = 24.
\end{equation}
One observes that $q$-series invariants $\widehat{Z}_b (M_3)$ correspond naturally to the irreducible Weil representation $m+K = 24+8$:
\begin{equation}
\begin{aligned}
&\sigma^{24+8} =\{1,2,5,7,8,13\}\\
&\widehat{Z}_{(1,-1,-1,-1)}(q) = q^{71/96}\Psi^{24+8}_1(\tau) \\
&\widehat{Z}_{(3,-3,-1,-9)}(q) = -q^{71/96}\Psi^{24+8}_7(\tau).
\end{aligned}
\end{equation}

\subsubsection{Computing $S(M_3)$, $T(M_3)$ and the asymptotic expansions}
As before, one can proceed to compute the (numeric values of the) composite  matrices $S(M_3), T(M_3)$. The results are

\begin{equation}
\begin{aligned}
\text{\textbf{Emb}} &= \begin{pmatrix} 1 & 0 & 0 & 0 & 0 & 0 \\ 0 & 0 & 0 & -1 & 0 & 0 \end{pmatrix},  \quad T^{(A)} = \exp \left( 2 \pi i \begin{pmatrix}0 & 0 \\  0 & \frac{1}{2} \end{pmatrix} \right) \\
S^{(B)} &= i \begin{pmatrix} 0.19 & -0.71 & -0.46 & -0.19 & -0.5 & -0.46 \\ -0.35 & 0 & -0.35 & -0.35 & 0 & 0.35 \\
-0.46 & -0.71 & -0.19 & 0.46 & 0.5 & -0.19 \\
-0.19 & -0.71 & 0.46 & 0.19 & -0.5 & 0.46 \\
-0.5 & 0 & 0.5 & -0.5 & 0 & -0.5 \\
-0.46 & 0.71 & -0.19 & 0.46 & -0.5 & -0.19  \end{pmatrix} \\
T^{(B)} &= \exp \; 2 \pi i \cdot \mathrm{diag} \left( \tfrac{1}{96}, \tfrac{4}{96}, \tfrac{25}{96},\tfrac{49}{96}, \tfrac{64}{96},\tfrac{169}{96} \right)
\end{aligned}
\end{equation}
From which it follows that
\begin{equation}
\begin{aligned}
S(M_3) &= i \begin{pmatrix} -0.54 & 0 & 1.31 & 0.54 &0 & 1.31 \\ -0.54 & 0 & 1.31 & 0.54 & 0 & 1.31 \end{pmatrix} \\
\renewcommand*{\arraystretch}{1.5}
T(M_3) &= \bem \ex(-\frac{1}{96}) & \ex(-\frac{4}{96}) & \ex(-\frac{25}{96}) & \ex(-\frac{49}{96}) & \ex(-\frac{64}{96}) & \ex(-\frac{169}{96}) \\[1.5ex] \ex(-\frac{49}{96}) & \ex(-\frac{52}{96}) & \ex(-\frac{169}{96}) & \ex(-\frac{1}{96}) & \ex(-\frac{64}{96}) & \ex(-\frac{25}{96}) \eem,
\end{aligned}
\end{equation}
from which we conclude that the Chern--Simons invariants of non-abelian flat connections are $-\tfrac{1}{96}$,$-\tfrac{25}{96}$,$-\tfrac{49}{96}$, and $-\tfrac{169}{96}$.
Since $\mathbf{c}_\alpha$ vanishes for $\alpha = -\frac{1}{96},-\frac{49}{96}$, we predict that there are two real non-abelian flat connections with $\text{CS} = -\frac{25}{96}, -\frac{169}{96}$ and two complex flat connections with $\text{CS} = -\frac{1}{96},-\frac{49}{96}$. The asymptotic expansion is computed and summarized in Table~\ref{table:24+8}. We have omitted the overall factor $-iq^{-71/96}/2\sqrt{2}$.

\begin{table}[h]\begin{center}\scalebox{0.9}{
\begin{tabular}{c c c c}
CS action & stabilizer & type & transseries \vspace{3pt}
\\ \toprule
$0$ & $SU(2)$ & central & $e^{2 \pi i k \cdot 0} \bigg( \frac{\pi i}{2} k^{-3/2} + \frac{359 \pi^2}{96} k^{-5/2} + \mathcal{O}(k^{-7/2}) \bigg)$ \\[1.5ex]
$\frac{1}{2}$ & $SU(2)$ & central & $e^{2 \pi i k \frac{1}{2}} \bigg( \frac{\pi i}{2} k^{-3/2} + \frac{359 \pi^2}{96} k^{-5/2} + \mathcal{O}(k^{-7/2}) \bigg)$ \\[1.5ex]
$-\frac{25}{96}$ & $\pm1$ & non-abelian, real & $e^{-2\pi ik \frac{25}{96}} e^{\frac{3 \pi i}{4}} \cdot\frac{1}{\sqrt{3}}\cos\frac{\pi}{6}\cos\frac{\pi}{8} $\\[1.5ex]
$-\frac{169}{96}$ & $\pm1$ & non-abelian, real & $e^{-2\pi ik \frac{169}{96}} e^{\frac{3 \pi i}{4}} \cdot\frac{1}{\sqrt{3}}\cos\frac{\pi}{6}\cos\frac{\pi}{8} $  \\[1.5ex]
$-\frac{1}{96}$ & $\pm1$ & non-abelian, complex & $0$ \\[1.5ex]
$-\frac{49}{96}$ & $\pm1$ & non-abelian, complex & $0$ \vspace{2pt}\\ \bottomrule
\end{tabular}}
\caption{Transseries and classification of flat connections on $M(-1;\frac{1}{2},\frac{1}{3},\frac{1}{8})$.}
\label{table:24+8}
\end{center}
\end{table}

Note the degeneracy in Table~\ref{table:24+8}, which arises due to the degeneracy in $S^{(A)}_{ab}$. Therefore, we verify the presence of center symmetry by studying the holonomy angles of $SU(2)$ flat connections. The angles and their Chern-Simons invariants are summarized in Table~\ref{table:24+8hol}. The center symmetry acts by an addition of $(0,\tfrac{1}{2},0,\tfrac{1}{2})$. But this time, the Chern-Simons invariants are shifted by $1/2$ due to center symmetry. Therefore, Table~\ref{table:24+8} shows transseries without ambiguity.

\begin{table}[htb]
\centering
\begin{tabular}{c c c c}
CS invariant & type & holonomy angles & center symmetry \\
\hline \\
$0$ & abelian & $ ( 0,0,0,0 ) $ & $ ( 0,0,0,0 ) \mapsto  ( 0,\tfrac{1}{2},0, \tfrac{1}{2} ) $ \\[1.5ex]
$\tfrac{1}{2}$ & abelian & $ ( 0,\frac{1}{2},0,\frac{1}{2} ) $ & $ ( 0,\frac{1}{2},0,\frac{1}{2} ) \mapsto (0,0,0,0)$ \\[1.5ex]
$-\tfrac{25}{96}$ & non-abelian & $( \frac{1}{2},\frac{1}{4},\frac{1}{6},\frac{3}{16} )$ & $( \frac{1}{2},\frac{1}{4},\frac{1}{6},\frac{3}{16} ) \mapsto ( \frac{1}{2},\frac{1}{4},\frac{1}{6},\frac{5}{16} )$\\[1.5ex]
$-\tfrac{169}{96}$ & non-abelian & $( \frac{1}{2},\frac{1}{4},\frac{1}{6},\frac{5}{16} )$ & $( \frac{1}{2},\frac{1}{4},\frac{1}{6},\frac{5}{16} ) \mapsto ( \frac{1}{2},\frac{1}{4},\frac{1}{6},\frac{3}{16} )$
\end{tabular}
\caption{Holonomy angles and Chern-Simons invariants of $SU(2)$ flat connections on $M(-1;\frac{1}{2},\frac{1}{3},\frac{1}{8})$, and the center symmetry among them.}
\label{table:24+8hol}
\end{table}

\subsection{Infinite families}

In this section, we discuss two sets of infinite families of Seifert manifolds with three singular fibers for which the steps outlined in Figure \ref{fig:reverse} can be carried out for all 3-manifolds in the family at once. These examples are the Brieskorn homology spheres and manifolds whose plumbing diagram is a D-type Dynkin diagram with ``-2" at all nodes.

\subsubsection{Brieskorn spheres}
\label{subsec:brieskorn}
A simple class of Seifert manifolds with three singular fibers are the Brieskorn spheres
\be
\Sigma (p_1,p_2,p_3) := S^5 \cap \{ (x,y,z) \in \CC^3 \; \vert \; x^{p_1} + y^{p_2} + z^{p_3} = 0 \}
\label{Brieskorn}
\ee
labeled by a triple of relatively prime integers $(p_1,p_2,p_3)$. As discussed in \cite{NR}, the Brieskorn sphere $\Sigma (p_1,p_2,p_3)$ can be associated with the Seifert data $M\left (-1; {q_1\over p_1},{q_2\over p_2},{q_3\over p_3}\right )$, satisfying \footnote{This holds for all ${1 \over p_1} +{1 \over p_2}  +{1 \over p_3}<1$, which is satisfied for all Brieskorn spheres except the Poincare homology sphere $\Sigma(2,3,5)$ which has Seifert data $M(-2;{1 \over 2},{2 \over 3},{4 \over 5})$.}
\be {q_1 \over p_1} +{q_2 \over p_2}  +{q_3 \over p_3}=1-  {1\over p_1p_2p_3}. \ee

The standard choice of orientation is that Brieskorn spheres are boundaries of negative definite plumbings. The connection between false theta functions and the WRT invariants for this class of examples was discussed in details in \cite{Hikamibrieskorn}, building on \cite{lawrence1999modular}.
These results can be understood in terms of the $q$-series invariants $\widehat{Z}_b (M_3)$ later introduced in \cite{GPPV}. In what follows we present
 them using the language of irreducible Weil representations discussed in \S\ref{sec:WeilRep}, and discuss the resurgence analysis for these manifolds.

All Brieskorn spheres are integral homology spheres, {\it i.e.} have $H_1 (M_3) = 0$.
In particular, this means that there is only one $q$-series invariant $\widehat Z_a (q)$ with $a=0$. Furthermore, there is a simplified modularity dictionary for this class of examples because the $\SL(2,\ZZ)$ representation acting on abelian flat connections is the trivial representation. This is summarized in table \ref{tab:Brieskorndictionary}.
First of all, the homological block $\widehat Z_0(q)$ is given by the false theta function $\Psi^{m+K}_r(\tau)$  (up to an overall power of $q$)
where $m=p_1p_2p_3$ and $K=\{1,p_1p_2,p_2p_3,p_1p_3\}$, and $r=m -p_1p_2-p_2p_3-p_1p_3$.
In the notation from \S \ref{sec:modularity}, we have that $S^{(A)}=T^{(A)}=I_{1\times 1}$, the 1-by-1 identity matrix. Therefore, the composite matrix $S(M_3)$ is simply the $1\times d$ matrix, where
$$d=|\sigma^{m+K}|={1\over 4}(p_1-1)(p_2-1)(p_3-1)$$ is the dimension of the Weil representation $m+K$.
Using a natural map from $\{1, \ldots , d\}$ to $\sigma^{m+K}$, and write the image of $k$ as $r_k$,
the matrix $S(M_3)$ is given by
\be
S(M_3)_{1k}= (\mathcal S^{m+K})^{-1}_{rr_k}
\ee
where $r=m -p_1p_2-p_2p_3-p_1p_3$ is fixed and $k$ runs from $1, \ldots d$.

The matrix $T(M_3)$ is $1\times d$ given by
\be
T(M_3)_{1k}=\ex(-{r_k^2\over 4m}).
\ee
From equation (\ref{rule_nonAb}), we see that $M_3$ will have a non-abelian flat connection $\alpha$ with ${\rm CS}(\alpha)=-{r_k^2\over 4m}$ as long as $(\mathcal S^{m+K})^{-1}_{rk}\neq 0.$ Furthermore, when this is the case, from (\ref{rule_complex}) the connection will be real as long as $c_r\neq 0$, where $c_r$ is as defined in equation (\ref{eq:c_r}). Note that for $\Psi^{m+K}_r= a_1\Psi_{m,r_1} + a_2 \Psi_{m,r_2} + \ldots + a_n \Psi_{m,r_n}$, $c_r=0$ iff $a_1(m-r_1)+ a_2(m-r_2)+\ldots + a_n(m-r_n)=0$. Thus we can read off directly from the components of the irreducible Weil representation $m+K$ the number of real and complex non-abelian flat connections for the corresponding Brieskorn sphere.
In the following we illustrate this explicitly with a simple example.

\begin{table}[h]
\centering
\setstretch{1.5}
\hspace{-1cm}
\begin{tabular}{cc}
\toprule
Weil representation $m+K$ & $m=p_1p_2p_3$ and $K=\{1,p_1p_2,p_2p_3,p_1p_3\}$
\\\hline
$q$-series invariant $\widehat Z_0 (q)$ & $ \Psi^{m+K}_{r}$, where $r=m -p_1p_2-p_2p_3-p_1p_3$ \\\hline
Number of (real and complex) & \multirow{2}*{$|\sigma^{m+K}|= \frac{1}{4}(p_1-1)(p_2-1)(p_3-1)$}\vspace{-3pt} \\
non-abelian flat connections & \\
\hline
CS invariants of (real or complex) & \multirow{2}*{$\text{CS}=- \tfrac{r^2}{4m}~ \forall ~r\in \sigma^{m+K}$}\vspace{-3pt}\\
non-abelian flat connections &\\\hline
CS invariants of complex& \multirow{2}*{ $\text{CS}=- \tfrac{r^2}{4m}~ s.t. ~\sum_{\ell=1}^{m-1} P_{\ell r}^{m+K} (1-{\ell\over m})=0$} \vspace{-3pt}\\ non-abelian flat connections \\\bottomrule
\end{tabular}
\caption{The modularity dictionary for Brieskorn spheres $\Sigma(p_1,p_2,p_3)$.}
\label{tab:Brieskorndictionary}
\end{table}

\paragraph{Example:}
Resurgence for the Brieskorn spheres $\Sigma(2,3,5)$ and $\Sigma(2,3,7)$ was discussed in detail in \cite{Gukov:2016njj} and for $\Sigma(2,5,7)$ in \cite{Chun:2017dbf}. We will see the reappearance of these examples in \S \ref{sec:optimal}, when we discuss going to the lower half-plane. For now, we briefly discuss resurgence for a new example to explicitly illustrate the procedure we have outlined above.

Let $M_3$ be the Brieskorn sphere $\Sigma(3,4,5)$, which can be represented by the following plumbing graph:
\be
\begin{array}{cccc}
& \overset{\displaystyle{-4}}{\bullet} & & \\
& \vline & & \\
\overset{\displaystyle{-3}}{\bullet}
\frac{\phantom{xxx}}{\phantom{xxx}}
& \underset{\displaystyle{-1}}{\bullet} &
\frac{\phantom{xxx}}{\phantom{xxx}}
\overset{\displaystyle{-3}}{\bullet} &
\frac{\phantom{xxx}}{\phantom{xxx}}
\overset{\displaystyle{-2}}{\bullet}
\end{array}
\ee

From equation (\ref{eqn:homblock}), one can easily compute the single homological block corresponding to the trivial flat connection as
\be
\widehat Z_0(q) \; = \; q^{1/2} \,(1 - q^5 - q^7 - q^{11} + q^{18}+ \ldots ).
\ee
In terms of false theta functions, this is given by,
\be
\widehat Z_0(q) \; =q^{-49/240}\Psi^{60+12,15,20}_{13}(\tau) =q^{-49/240}(\Psi_{60,13}-\Psi_{60,37}-\Psi_{60,43}-\Psi_{60,53})(\tau).
\ee
The irreducible $\SL(2,\ZZ)$ representation is given by $m+K=60+12,15,20$. This has dimension $d= |\sigma^{60+12,15,20}|= {1\over 4}(3-1)(4-1)(5-1)=6$, and contains elements $\sigma^{60+12,15,20}=\{1,2,7,11,13,14\}.$
The corresponding set of false theta functions is given by,
 \begin{gather}
  \begin{split}\nonumber
   \Psi^{60+12,15,20}_{1}(\tau) &=  (\Psi_{60,1}-\Psi_{60,31}-\Psi_{60,41}-\Psi_{60,49})(\tau)\\\nonumber
     \Psi^{60+12,15,20}_{2}(\tau) &=  (\Psi_{60,2}+\Psi_{60,22}+\Psi_{60,38}+\Psi_{60,58})(\tau)\\\nonumber
        \Psi^{60+12,15,20}_{7}(\tau) &=  (\Psi_{60,7}+\Psi_{60,17}+\Psi_{60,23}-\Psi_{60,47})(\tau)\\\nonumber
           \Psi^{60+12,15,20}_{11}(\tau) &=  (\Psi_{60,11}+\Psi_{60,19}+\Psi_{60,29}-\Psi_{60,59})(\tau)\\ \nonumber
      \Psi^{60+12,15,20}_{13}(\tau) &=  (\Psi_{60,13}-\Psi_{60,37}-\Psi_{60,43}-\Psi_{60,53})(\tau)\\\nonumber
 \Psi^{60+12,15,20}_{14}(\tau) &=  (\Psi_{60,14}+\Psi_{60,26}+\Psi_{60,34}+\Psi_{60,46})(\tau).
     \end{split}
\end{gather}

     From this we find that the embedding matrix is simply
     \be
     \text{\bf Emb}=\begin{pmatrix} 0&0&0&0&1&0\end{pmatrix},
     \ee
     which leads to a matrix $S(M_3)$ given by
     \bea
     S(M_3)= \text{\bf Emb}.({\cal S}^{60+12,15,20})^{-1}
     \eea

     Furthermore, the matrix $T(M_3)$ is  given by,
\begin{gather}
\begin{split}
T(M_3)&={\bf 1}_{1\times 6}.({\cal T}^{60+12,15,20})^{-1}\\
&=\begin{pmatrix}\ex\left (-{1\over 240}\right)&\ex\left(-{4\over 240}\right)&\ex\left(-{49\over 240}\right)&\ex\left(-{121\over 240}\right)&\ex\left(-{169\over 240}\right)&\ex\left(-{196\over 240}\right)\end{pmatrix}.
\end{split}
\end{gather}
As each entry of $S(M_3)$ is nonzero, it follows that this manifold has six non-abelian flat connections with CS invariants given by the entries of $T(M_3)$. Furthermore, we see that four of them are real and two of them are complex, as $$(60-1)-(60-31)-(60-41)-(60-49)=(60-13)-(60-37)-(60-43)-(60-53)=0.$$ The complex flat connections have ${\rm CS}=-{1\over 240}$ and ${\rm CS}=-{169\over 240}$.

\subsubsection{D-type manifolds}

   In this section we will consider negative-definite plumbing diagrams whose graph takes the shape of a $D_{k+3}$, $k\geq 1$ Dynkin diagram. The simplest plumbing for such a graph assigns a weight of ``-2" to all nodes, as pictured below:
\be
\begin{array}{ccccccc}
& \overset{\displaystyle{-2}}{\bullet} &&&&& \\
& \vline &&& &&\\
\overset{\displaystyle{-2}}{\bullet}
\frac{\phantom{xxx}}{\phantom{xxx}}
& \underset{\displaystyle{-2}}{\bullet} &
\frac{\phantom{xxx}}{\phantom{xxx}}
\underbrace{\overset{\displaystyle{-2}}{\bullet}\frac{\phantom{xxx}}{\phantom{xxx}}\ldots\frac{\phantom{xxx}}{\phantom{xxx}}\overset{\displaystyle{-2}}{\bullet}}_\text{$k$ nodes}
\end{array}
\ee

This describes a Seifert manifold with three singular fibers and Seifert invariants  $M\left (-2; {1\over 2},{1\over 2}, {k\over k+1}\right).$ This manifold can also be represented as an intersection of a $D_{k+3}$ singularity with a unit sphere in $\mathbb C^3$: 
\be
M\left (-2; {1\over 2},{1\over 2}, {k\over k+1}\right):= S^5 \cap \{ (x,y,z) \in \CC^3 \; \vert \; x^{k} + xy^2 + z^{2} = 0 \}.
\label{Dsing}
\ee
The connection between WRT invariants and false theta functions for these manifolds was considered in \cite{hikami1}; here we analyze them from the point of view of resurgence and (irreducible) Weil representations.
When $k$ is odd, $H_1(M_3)= \ZZ_2\oplus \ZZ_2$ and when $k$ is even, $H_1(M_3) =\ZZ_4$.
In both cases,  the relevant $\SL(2,\ZZ)$ representation is $m+K= k+1$, with $m=k+1$ and $K=\{1\}$ the trivial group. This is an irrep whenever $m$ is a prime to some power; i.e. $m= p^N$. As we will see in \S \ref{sec:optimal}, this includes optimal examples for $m=2,3,4,5,6,7,8,9,10,12,13,16,18,25.$
Interestingly, we will see the phenomenon of center symmetry which played a role in some of our previous examples also reappears here.

We  consider the case of even and odd $k$ separately:
\begin{itemize}
\item $k$ odd:

There are four ${\widehat Z}_b(q)$, none of which are related by Weyl symmetry. With some choice of basis for $H_1(M_3)$, we have
\begin{gather}
\begin{split}
q^{m^2+1\over 4m}\widehat Z_0(q)&=2q^{1/4m}-\Psi^m_1(\t)\\
q^{m^2+1\over 4m}\widehat Z_1(q)&=-\Psi^m_{m-1}(\t)\\
q^{m^2+1\over 4m}\widehat Z_2(q)&=-\Psi^m_{m-1}(\t)\\
q^{m^2+1\over 4m}\widehat Z_3(q)&=-\Psi^m_1(\t).
\end{split}
\end{gather}
Corresponding to the above are the following $(k+3)$-dimensional vectors in $2 {\rm Coker}( M)+ \delta$, up to multiplication by $(2M)^{-1}$:
\be
\begin{gathered}
(\tfrac{1}{2},\cdots,\tfrac{1}{2}) 
\\
(\tfrac{1}{2},0,\tfrac{1}{2},0,\cdots,\tfrac{1}{2},0) 
 \\
(\tfrac{1}{2},0,0,\tfrac{1}{2},\tfrac{1}{2}, \cdots, \tfrac{1}{2}) 
 \\
(\tfrac{1}{2},\tfrac{1}{2},0,0,\tfrac{1}{2},0,\tfrac{1}{2},0,\cdots,\tfrac{1}{2},0) 
\end{gathered}
\ee
where ``$\ldots$" signifies a repetition of ${1\over 2}$ in the first and third lines, and a repetition of $({1\over 2},0)$ in the second and fourth lines.

The center symmetry acts on the above vectors by adding:
\be
\begin{gathered}
(0,0,\tfrac{1}{2},\tfrac{1}{2},0,\tfrac{1}{2},0,\tfrac{1}{2},\cdots,0,\tfrac{1}{2}) 
\\
(0,\tfrac{1}{2},\tfrac{1}{2},0, \cdots 0) 
\end{gathered}
\ee
From this action we can infer that it is possible to fold the homological blocks by taking the linear combinations $\widehat{Z}'_0(q)=\widehat{Z}_0(q) + \widehat{Z}_3(q)$ and $\widehat{Z}'_1(q)=\widehat{Z}_1(q) + \widehat{Z}_2(q)$, and that the center symmetry group is $\mathbb{Z}_2 \oplus \mathbb{Z}_2$.

The Chern-Simons invariants of the abelian connections are
\be
{\rm CS}(a)=\{0, {m+2\over 4}, {m+2\over 4}, 1\}.
\ee
Note that for $m/2$ odd this is just ${\rm CS}(a)=\{0, 0,0,0\} \pmod \ZZ$ and for $m/2$ even this is ${\rm CS}(a)=\{0, {1\over 2},{1\over 2},0\} \pmod \ZZ$.

Associated to these abelian connections are a set of $k+3$-dimensional vectors $a \in {\rm Coker}(M)$ which we can take to be (up to multiplication by $M^{-1}$) :
\be
\begin{gathered}
(0,\cdots,0) 
 \\
(0,\tfrac{1}{2}, \cdots, 0,\tfrac{1}{2}) 
\\
(0,\tfrac{1}{2},\tfrac{1}{2},0,\cdots,0) 
\\
(0,0,\tfrac{1}{2},\tfrac{1}{2},0,\tfrac{1}{2},\cdots,0,\tfrac{1}{2}) 
\end{gathered}
\ee
where now ``$\ldots$" signifies a repetition of $0$ in the first and third lines, and a repetition of $(0,{1\over 2})$ in the second and fourth lines.
The center symmetry acts on these vectors by adding:
\be
\begin{gathered}
(0,0,\tfrac{1}{2},\tfrac{1}{2},0,\tfrac{1}{2},\cdots,0,\tfrac{1}{2}) 
\\
(0,\tfrac{1}{2},\tfrac{1}{2},0, \cdots 0) 
\end{gathered}
\ee
From which we can infer that the corresponding ${\rm CS}(a)$ should be grouped as $\{0,1\}$ and $\{\tfrac{m+2}{4},\tfrac{m+2}{4}\}$.

The matrix $S^{(A)}$ is
\begin{equation}
S^{(A)} = \frac{1}{2}\begin{pmatrix} 1 & 1&1&1 \\  1 & 1&1&1\\ 1 & 1&1&1\\ 1 & 1&1&1\end{pmatrix}
\end{equation}
and the matrix $T^{(A)}$ is
\begin{equation}
T^{(A)} = \begin{pmatrix} 1 & 0&0&0 \\  0 & \pm1&0&0\\ 0 & 0&\pm1&0\\ 0 & 0&0&1\end{pmatrix}
\end{equation}
where the ``$+$" is for $m/2$ odd and the ``$-$" is for $m/2$ even. After folding by the center symmetry these matrices become
\begin{equation}
S'^{(A)} = \begin{pmatrix} 1 & 1 \\  1 & 1\end{pmatrix}
\end{equation}
and
\begin{equation}
T'^{(A)} = \begin{pmatrix} 1 & 0 \\  0 & \pm1\end{pmatrix}
\end{equation}

Upon reverse-engineering, we observe no complex flat connection but $m/2$ real non-abelian flat connections with Chern-Simons invariants $-\frac{r^2}{4m}$ for odd $r \in (0,m)$. They are quite degenerate, mostly due to the fact that Chern-Simons invariants are defined modulo one. Therefore, it can be delicate to distinguish the contributions of two non-abelian flat connections with the same Chern-Simons invariants to the asymptotic expansion of $Z_{CS}(M_3)$. Nevertheless, we can reverse-engineer the perturbatitve expansion without ambiguity:
$$\frac{\pi i }{2}k^{-3/2} + \frac{(m^2-2)\pi^2}{8m}k^{-5/2} + \cdots $$
which is identical for all four abelian flat connections due to the center symmetry.

\item $k$ even:

There are four $\widehat Z_b(q)$, two of which are related by Weyl symmetry.  With some choice of basis, after modding out by the Weyl action, we have
\begin{gather}
\begin{split}
q^{m^2+1\over 4m}\widehat Z_0(q)&=2q^{1/4m}-\Psi^m_1(\tau)\\
q^{m^2+1\over 4m}\widehat Z_1(q)&=-\Psi^m_{1}(\tau)\\
q^{m^2+1\over 4m}\widehat Z_2(q)&=-2\Psi^m_{m-1}(\tau).
\end{split}
\end{gather}

Corresponding to the above homological blocks are the following elements $b \in 2 {\rm Coker}( M)+ \delta$, up to multiplication by $(2M)^{-1}$:
\be
\begin{gathered}
(\tfrac{1}{2},\cdots,\tfrac{1}{2}) 
\\
(\tfrac{1}{2},0,0,\tfrac{1}{2},\cdots,\tfrac{1}{2}) 
\\
(0,\tfrac{1}{4},\tfrac{3}{4},\tfrac{1}{2},0,\cdots,\tfrac{1}{2},0) 
\end{gathered}
\ee
where ``$\ldots$" corresponds to repetition of ${1\over 2}$ in the first two lines, and repetition of $({1\over 2},0)$ in the third line.
The center symmetry acts on these vectors through the addition of
$$(0,\tfrac{1}{2},\tfrac{1}{2},0,\cdots,0).$$
From this action we deduce that the center symmetry group is $\mathbb{Z}_2$ for these cases and one can fold the homological blocks by this $\ZZ_2$ by the grouping $\widehat Z'_0(q)=\widehat{Z}_0(q) + \widehat{Z}_1(q)$ and $\widehat Z'_1(q)=\widehat{Z}_2(q)$.

The Chern-Simons invariants of the abelian connections are
\be
{\rm CS}(a)=\{0, 1, {m+2\over 4}\}.
\ee
Note that for $m= 1\mod 4$  this is just ${\rm CS}(a)=\{0, 0,{3\over 4}\} \pmod \ZZ$ and for $m=3 \mod 4$  this is $\text{CS}(a)=\{0, 0, {1\over 4}\} \pmod \ZZ$.

Corresponding to these abelian connections are a set of $k+3$-dimensional vectors $a \in {\rm Coker}(M)$ which we can take to be (up to multiplication by $M^{-1}$):
\be
\begin{gathered}
(0,\cdots,0) 
\\
(0,\tfrac{1}{2},\tfrac{1}{2},0, \cdots, 0) 
 \\
(\tfrac{1}{2},\tfrac{1}{4},\tfrac{3}{4},0,\tfrac{1}{2},\cdots,0,\tfrac{1}{2}) 
\end{gathered}
\ee
where now ``$\ldots$" corresponds to repetition of $0$ in the first two lines, and repetition of $(0,{1\over 2})$ in the third line.
The center symmetry acts on these vectors through the addition of the vector
$$(0,\tfrac{1}{2},\tfrac{1}{2},0, \cdots, 0), $$
and we deduce that upon modding out by the center symmetry group, ${\rm CS}(a)$ are grouped as $\{0,1\}$ and $\{\tfrac{m+2}{4}\}$.

The matrix $S^{(A)}$ is
\begin{equation}
S^{(A)} = \frac{1}{2}\begin{pmatrix} 1 & 1&1 \\  1 & 1&1\\ 2&2&-2\end{pmatrix}
\end{equation}
and the matrix $T^{(A)}$ is
\begin{equation}
T^{(A)} = \begin{pmatrix} 1 & 0&0 \\  0 & 1&0\\ 0 & 0&\ex({m+2\over 4})\end{pmatrix}
\end{equation}

After modding out the center symmetry we obtain the matrices
\begin{equation}
{S'}^{(A)} = \begin{pmatrix} 1 & 1 \\  1 & -1 \end{pmatrix}
\end{equation}
\begin{equation}
{T'}^{(A)} = \begin{pmatrix} 1 & 0 \\  0 & \ex({m+2\over 4}) \end{pmatrix}
\end{equation}
and folded homological blocks
\begin{gather}\begin{split}
q^{m^2+1\over 4m}\widehat{Z}'_0(q) &=q^{m^2+1\over 4m}\left( \widehat{Z}_0(q) + \widehat{Z}_1(q) \right)= 2q^{1/4m}-2\Psi^m_1(\tau) \\
q^{m^2+1\over 4m}\widehat{Z}'_1(q) &= q^{m^2+1\over 4m}\widehat{Z}_2(q) = -2\Psi^m_{m-1}(\tau).\end{split}
\end{gather}
Upon reverse-engineering, we observe no complex flat connection but $(m-1)/2$ real non-abelian flat connections with Chern-Simons invariants $-\frac{r^2}{4m}$ for odd $r \in (0,m)$. We also obtain the following perturbative expansions which are pairwise identical:
\begin{equation}
\begin{gathered}
\frac{\pi i}{2} k^{-3/2} + \frac{(m^2-2)\pi^2}{8m}k^{-5/2} + \cdots \\
\frac{\pi i}{2} k^{-3/2} + \frac{(m^2-2)\pi^2}{8m}k^{-5/2} + \cdots \\
\frac{2}{m}k^{-1/2} + \frac{i (m^2+2) \pi}{6m^2} k^{-3/2} + \cdots \\
\frac{2}{m}k^{-1/2} + \frac{i (m^2+2) \pi}{6m^2} k^{-3/2} + \cdots
\end{gathered}
\end{equation}
\end{itemize}

\section{Going to the other side}
\label{sec:otherside}

In this section we explore what happens to the $q$-series invariants $\widehat{Z}_a (M_3)$
when the orientation of the three-manifold $M_3$ is reversed.
As will be explained shortly, this operation is expected to have the effect of (formally) replacing $q\leftrightarrow q^{-1}$
and then re-expanding the result again as a $q$-series.
Luckily, precisely this question was independently asked by Rademacher \cite{rad_exp_zero} and his followers in the context of
(mock) modular objects and their extension from the upper half-plane (or, $|q|<1$) to the lower half-plane (respectively $|q|>1$), and has gained more attention since the introduction of the notion of quantum modular forms by Zagier \cite{MR2757599}. 

In particular, we mostly focus on the families of 3-manifolds whose $q$-series invariants $\widehat{Z}_a (M_3)$
are given by false theta functions discussed in \S\ref{sec:mod} and illustrated by examples in \S\ref{sec:examples}.
In these cases, we propose that the $q$-series invariants $\widehat{Z}_a (- M_3)$ of a manifold $- M_3$
are given by mock modular forms with shadows (see \S\ref{sec:QMF} for definitions)
associated to the false theta functions. We summarize the relation in Figure~\ref{diagram_mockquantumfalse}.

Our proposal is supported by the following three facts:
\vspace{-5pt}
\begin{itemize}
\item In some cases the false theta functions admit expressions as $q$-hypergeometric series, which converge not only inside but also outside the unit circle. In those cases one can establish that the expression outside the unit circle is given by a mock theta function.
\item The mock theta function and the corresponding false theta function have the same asymptotic expansions transseries structure near $x\in \QQ$ (cf. \eq{trans_false2}, \eq{quantum_mock}), up to 
$x\leftrightarrow -x$, 
precisely as $\widehat{Z}_a (M_3)$ and $\widehat{Z}_a (- M_3)$ should.
\item When the mock modular form can be expressed as a so-called Rademacher sum, one can prove in general that the same Rademacher sum, now performed in the lower rather than upper half-plane, yields precisely the corresponding Eichler integral. In other words, the Rademacher sum yields a function defined on both $\HH$ and $\HH^-$, where they coincide with the mock respectively false theta function.
\end{itemize}
After explaining the physics and topology motivation to go between the upper- and lower-half planes, we explain the above three points in \S\ref{subsec:qhyper}, \S\ref{sec:QMF} and \S\ref{subsec:Rad} respectively.

\subsection{The physics of the other side}
\label{subsec:otherphysics}

As we already mentioned earlier, around \eqref{IndexFactorization}, it is natural to compare
the appropriate extension, or ``leakage'',  of $\widehat{Z}_a (q)$ to $\text{Im} (\tau) < 0$
with the half-index \eqref{BlockD2S1} obtained by orientation reversal (parity) transformation
applied to the original 2d-3d system on $D^2\times_q S^1$.
The two are expected to be closely related, if not simply equal.

In other words, we wish to compare $\widehat{Z}_a (q^{-1})$, understood as a $q$-series expansion,
with the half-index of 2d-3d system where all Chern-Simons coefficients of the 3d theory have opposite signs
and where the 2d $\mathcal{N}=(0,2)$ boundary condition $\mathcal{B}_a$ is replaced by $\widetilde{\mathcal{B}}_a$:
\begin{equation}
\text{parity} : \quad \mathcal{B}_a \; \mapsto \; \widetilde{\mathcal{B}}_a
\label{BBparity}
\end{equation}
The resulting $q$-series --- which, abusing notations, we denote $\widehat{Z}_a (q^{-1})$ --- together
with the original homological blocks $\widehat{Z}_a (q)$ are expected to combine into another $q$-series \eqref{IndexFactorization}
which does not depend on the choice of boundary conditions.
Namely, \eqref{IndexFactorization} gives the superconformal index $\mathcal{I} (q)$
of the 3d $\mathcal{N}=2$ theory which, moreover, can be computed independently, by other means.
In the context of 3d-3d correspondence, to which we turn momentarily,
it is believed that the integer coefficients of the $q$-series $\mathcal{I} (q)$
count normal surfaces in the 3-manifold \cite{Garoufalidis:2016ckn}.

When applied to 3d $\N=2$ theories $T[M_3]$,
this parity reversal is equivalent to changing the orientation of the 3-manifold,
{\it i.e.} replacing $M_3$ by $- M_3$.
Therefore, formally, we expect
\begin{equation}
\widehat{Z}_a(- M_3, q^{-1})
\;\;\stackrel{\text{re-expand}}{=\joinrel=\joinrel=}\;\;
\widehat{Z}_a(M_3, q)
\label{ZZformal}
\end{equation}
While in what follows we present further physics arguments for this relation,
the challenge is to turn them into a concrete computational algorithm.
Note, based on our experience in \S\ref{sec:examples},
we do not expect this algorithm to be simple.
For example, the orientation reveral turns negative-definite
plumbings (for which $\widehat{Z}_a(q)$ can be systematically computed in full generality)
into positive-definite ones (for which no general algorithms were available until now).

{}From the viewpoint of WRT invariants or quantum Chern-Simons theory,
the behavior \eqref{ZZformal} is rather clear, and therefore one might say that $q$-series
invariants $\widehat{Z}_a (M_3)$ simply inherit it through the relation~\eqref{WRTviaSZ}.
Indeed, the Chern-Simons partition function of $M_3$ is defined as a (formal) infinite-dimensional integral
\be
Z_{CS}(M_3,k) \; = \; \int \, e^{ i k \text{CS} (A)} \,\mathcal{D} A
\label{ZCSpathint}
\ee
over the space of gauge connections $A$. At least formally, from this expression
it follows that an orientation reversal $M_3 \to - M_3$ is equivalent to changing the sign $k \to - k$.
If we now recall the standard relation between $k$, $\hbar$, and $q$, {\it cf.} \eqref{WRTviaSZ},
\be
q \; = \; e^{\hbar} \; = \; e^{2 \pi i \tau} \; = \; e^{2\pi i /k}
\ee
then we conclude that $M_3 \to - M_3$ should be equivalent to $q \to q^{-1}$.

While this argument, based on Feynman path integral, may sound a little formal, it is easy to see that it should
hold to all orders in perturbation theory by expanding \eqref{ZCSpathint} into Feynman diagrams around a given flat
connection $\alpha \in \mathcal{M}_{\text{flat}} (G_{\CC},M_3)$.
In Chern-Simons theory with complex gauge group $G_{\CC}$
such perturbative expansion is carried out explicitly {\it e.g.} in \cite{Dimofte:2009yn},
and the coefficient of each term in the $\hbar$-expansion is given by a finite-dimensional integral.
The result is an asymptotic expansion, {\it cf.} \eqref{eqn:borel1},
\be
Z_{\rm pert}^{(\alpha)} (M_3, \hbar) \; = \; \sum_{n} a_n \hbar^n
\label{Zperta}
\ee
which generalizes the Ohtsuki series of $M_3$ (the latter corresponds to $\alpha = 0$.)
Even in complex Chern-Simons theory, where $q$ and $\hbar$ are complex variables,
the perturbative expansion has a symmetry
$Z_{\rm pert}^{(\alpha)} (M_3, \hbar) = Z_{\rm pert}^{(\alpha)} (- M_3, - \hbar)$,
{\it cf.} \cite[sec.2.3]{Dimofte:2009yn}, so that
\be
Z_{\rm pert}^{(\alpha)} (- M_3, \hbar) \; = \; \sum_{n} (-1)^n a_n \hbar^n.
\label{Zpertb}
\ee
Since the $q$-series invariants ${Z}_a (M_3)$ and ${Z}_a (- M_3)$
are obtained by Borel resummation of \eqref{Zperta} and \eqref{Zpertb} for abelian $\alpha = a$,
they too are expected to enjoy the property \eqref{ZZformal}.

Even though \eqref{Zperta} and \eqref{Zpertb} look very similar and appear on the same footing,
in practice, so far it was much easier to compute only one of the $q$-series invariants,
$\widehat{Z}_a (M_3)$ or $\widehat{Z}_a (- M_3)$, while the other remained elusive.
This asymmetry between $M_3$ and $- M_3$ may seem surprising from the topology viewpoint.
However, from the viewpoint of resurgent analysis, it is relatively well known that
among two asymptotic expansions, \eqref{Zperta} and \eqref{Zpertb},
usually one may have a relatively simple Borel resummation,
whereas the other one can be much more complicated~\cite{Marino:2012zq}.
Similarly, from the viewpoint of their modular behavior, which will occupy the rest of this section,
the two sides also usually play rather different asymmetric role.

In particular, in the rest of this section we use a variety of methods and recent developments in number theory
to answer a question in topology: Given $\widehat{Z}_b(M_3)$ and, possibly, some basic topological invariants of $M_3$,
can one determine $\widehat{Z}_b(- M_3)$?

\subsection{Examples: $q$-hypergeometric series}
\label{subsec:qhyper}

In this subsection, we give examples illustrating how certain false theta functions, which play the role of homological blocks for certain three-manifolds (cf. \S\ref{sec:examples}), can be defined in the other side of the plane using their expressions as $q$-hypergeometric series.
Surprisingly, on the other side of the plane they turn out to coincide with some of Ramanujan's famous mock theta functions. This establishes in a very direct way the connection between mock modular forms, false theta functions, and three-manifolds, at least for these examples.
After reviewing the examples, we will also describe the ambiguities when extending a function to the lower-half plane via $q$-hypergeometric series.

\subsubsection{Example: $M(-2;\frac{1}{2},\frac{1}{3},\frac{1}{2})$ and the order three mock theta function $f$}

In \eq{eqn:zhat6+2} we have seen that the false theta function $\Psi^{6+2}_1(\tau)$ coincides, up to additive and multiplicative simple $q$ factors, with the homological blocks $\widehat{Z}_{(1,-1,-1,-1)}(q)$ and $\widehat{Z}_{(3,-3,-1,-3)}(q)$, for the three-manifold $M(-2;\frac{1}{2},\frac{1}{3},\frac{1}{2})$.

To write down the relevant $q$-hypergeometric series, we use the {\em q-Pochhammer symbol}
\be\label{def:qPochhammer}
(a;x)_n := \prod_{k=0}^{n-1} (1-a x^{k})
\ee
satisfying
\be\label{qhyperid}
(a;q^{-1})_n = (-1)^n a^n q^{-{n(n-1)\over 2}} (a^{-1};q)_n.
\ee

 Note that this false theta function admits the expression\cite{BrFoRh}
  \be\label{6p21}
  \Psi^{6+2}_1(\t) =  \psi^{6+2}_1(q),~~ {\psi}^{6+2}_1(q)=  {q^{1\over 24}\over 2}\left(1- \sum_{n\geq 1}{(-1)^nq^{n(n-1)\over 2} \over (-q;q)_n}\right),
  \ee
  for $|q|<1 \Leftrightarrow \t\in \HH$.
 Moreover, the series ${\psi}^{6+2}_1$ converges both for $|q|>1$ and $|q|<1$.
 Using \eq{qhyperid} one obtains
 \be
{\psi}^{6+2}_1(q^{-1}) = {q^{-{1\over 24}}\over 2}\left(1- \sum_{n\geq 1}{(-1)^nq^{n} \over (-q;q)_n}\right).
  \ee
 It turns out that this is a mock modular form, related to the celebrated order three mock theta function $f(q)$  as
 \be\label{6p22}
2 q^{1\over 24} {\psi}^{6+2}_1(q^{-1})  = f(q) =  1+q-2q^2+3q^3+O(q^4) .
 \ee
 As we will see in \S\ref{sec:optimal}, it belongs to a family of special vector-valued mock modular forms $h^{m+K}=(h^{m+K}_r)$; in the notation of \S\ref{sec:optimal} the relation is simply
  \be\label{6p22opt}
 {\psi}^{6+2}_1(q^{-1})  = -{1\over 2} h^{6+2}_1(\t) . 
 \ee

From the argument in \S\ref{subsec:otherphysics}, we hence propose that the mock theta function $f(q)$ plays a role as the homological block for the three-manifold that is related to $M(-2;\frac{1}{2},\frac{1}{3},\frac{1}{2})$ via an orientation reversal. As we will discuss shortly, this example also illustrates the intrinsic ambiguity of the $q$-hypergeometric approach (see \eq{ambiguity_example}).

\subsubsection{Example: $M(-2;\frac{1}{2},\frac{1}{2},\frac{3}{5})$ and the order ten mock theta function $X$}

As we will see in \S\ref{sec:optimal}, the  false theta function
\be
\Psi^{10+2}_1(\t) = \Psi_{10,1}(\t)-\Psi_{10,9}(\t) = q^{1/40}(1-q^2+q^3-q^9+O(q^{10}))
\ee
plays the role of homological blocks for $M_3=M(-2;\frac{1}{2},\frac{1}{2},\frac{3}{5})$.

Note that this false theta function admits the expression
  \be
  \Psi^{10+2}_1(\t) =  \psi^{10+2}_1(q),~~ {\psi}^{10+2}_1(q)=  {q^{1\over 40}}  \sum_{n\geq 0}{(-1)^nq^{n(n+1)} \over (-q;q)_{2n}}
  \ee
  for $|q|<1$.
Similar to $  \psi^{6+2}_1$, the series ${\psi}^{10+2}_1$ converges both for $|q|>1$ and $|q|<1$ and has the following relation to the optimal mock Jacobi form (\S\ref{sec:optimal}) and order 10 mock theta function $X$:
\begin{gather}
\begin{split}
 \psi^{10+2}_1(q^{-1}) &= {q^{-\frac{1}{40}}} \sum_{n\geq 0}{(-1)^nq^{n^2} \over (-q;q)_{2n}}= -h^{10+2}_1(\tau) \\ &=  {q^{-\frac{1}{40}}}X(q)  = {q^{-\frac{1}{40}}} \left(1-q+q^2+O(q^4)\right)
 \end{split}
\end{gather}

\subsubsection{Example: $\Sigma(2,3,5)$ and the order five mock theta function $\chi_0$}

As we have discussed in \S\ref{subsec:brieskorn}, the  false theta function
\be
\Psi^{30+6,10,15}_1(\t) =( \Psi_{30,1}+ \Psi_{30,11}+ \Psi_{30,19}+ \Psi_{30,29})(\t) = q^{1/120}(1+q+q^3+q^7+O(q^{8}))
\ee
plays the role of homological blocks for the homology sphere $\Sigma(2,3,5)$.

Note that this false theta function admits the expression
\be
\Psi^{30+6,10,15}_1(\t) = \psi^{30+6,10,15}_1(q) ,~~  \psi^{30+6,10,15}_1(q)=  {q^{1\over 120}}  \left( 2- \sum_{n\geq 0}{(-1)^n q^{n(3n-1)\over 2} \over (q^{n+1};q)_{n}} \right)
  \ee
   for $|q|<1$.
As before, the series ${\psi}^{30+6,10,15}_1$ converges both for $|q|>1$ and $|q|<1$ and has the following relation to the optimal mock Jacobi form and order 5 mock theta function:
\begin{gather}
\begin{split}
 \psi^{30+6,10,15}_1(q^{-1}) &= {q^{-\frac{1}{120}}}\left( 2- \sum_{n\geq 0}{ q^{n} \over (q^{n+1};q)_{n}} \right) = -h^{30+6,10,15}_1(\tau) \\ &=  {q^{-\frac{1}{120}}}(2-\chi_0(q))  = -{q^{-\frac{1}{120}}} \left(-1+q+q^2+2q^3+O(q^4)\right)
 \end{split}
\end{gather}

It is for this case that the relation between false theta functions and WRT invariants was first discussed by Lawrence and Zagier in \cite{lawrence1999modular}, where they also noted the relation to the mock theta function $\chi_0$.

\subsubsection{Example: $\Sigma(2,3,7)$ and the order seven mock theta function $F_0$}

As we have discussed in \S\ref{subsec:brieskorn}, the  false theta function
\be
\Psi^{42+6,14,21}_1(\t) =( \Psi_{42,1}- \Psi_{42,13}- \Psi_{42,29}+ \Psi_{42,41})(\t) = q^{1/168}(1-q-q^5+O(q^{10}))
\ee
plays the role of homological blocks for the homology sphere $\Sigma(2,3,7)$.

Note that this false theta function admits the expression
  \be
\Psi^{42+6,14,21}_1(\t) = \psi^{42+6,14,21}_1(q) ,~~  \psi^{42+6,14,21}_1(q)=  {q^{1\over 168}}   \sum_{n\geq 0}{(-1)^n q^{n(n+1)\over 2} \over (q^{n+1};q)_{n}} 
  \ee
   for $|q|<1$.
As before, the series ${\psi}^{42+6,14,21}_1$ converges both for $|q|>1$ and $|q|<1$ and has the following relation to the optimal mock Jacobi form and order 7 mock theta function:
\begin{gather}
\begin{split}
 \psi^{42+6,14,21}_1(q^{-1}) &= {q^{-\frac{1}{168}}} \sum_{n\geq 0}{ q^{n^2} \over (q^{n+1};q)_{n}} = -H^{42+6,14,21}_1(\tau) \\ &=  {q^{-\frac{1}{168}}}\,F_0(q)  = -{q^{-\frac{1}{168}}} \left(1+q+q^3+q^4+O(q^5)\right)
 \end{split}
\end{gather}
{}From the argument in \S\ref{subsec:otherphysics}, we hence propose that the mock theta function is the homological block of the three-manifold obtained from $\Sigma(2,3,7)$ via the orientation reversal.

\subsubsection{The ambiguity}

The above examples illustrate an explicit relation between false and mock theta functions when going between the upper- and the lower-half plane. One should however be cautious about the applicability and the ambiguity of the treatment.

First, this treatment depends on the existence of an expression of false/mock theta function as a $q$-hypergeometric series. In many  cases interesting for us, such an expression is not available. Moreover, sometimes more than one such expressions exist and they might have different extension outside the unit disk.
Such examples abound. See for instance \cite{MR3141412} where the Rogers-Fine false theta functions are extended to the other side in a specific way which leads to mock forms that sometimes differ from what other methods discussed in \S\ref{sec:QMF}-\ref{subsec:Rad} give. 
We will now explain one explicit example in details to illustrate the ambiguities.

The relation between the order three mock theta function $f(q)$ and other mock theta function, inside and outside the unit disc, has been  studied in details in  \cite{BrFoRh}, which we follow here.
First we have seen in \eq{6p21}-\eq{6p22}  that the hypergeometric series ${\psi}^{6+2}_1$ satisfies
\be
{\psi}^{6+2}_1(q) ={\Psi}^{6+2}_1(q) ~{\rm and}~   {\psi}^{6+2}_1(q^{-1})  = {q^{-{1\over 24}}\over 2} f(q)
\ee
for $|q|<1$.

Now, define other two hypergeometric series
\begin{gather}
\begin{split}
{\psi}'(q) &={q^{1/24}\over 2}(1+ \sum_{n\geq 1}{q^n \over (-q;q)_n^2})\\
{\psi}''(q) &=q^{1/24} \sum_{n\geq0}{q^n \over (-q^2;q^2)_n}.
\end{split}
\end{gather}
One can easily check that they too are defined both inside and outside the unit disk.
It turns out that they are related to ${\psi}^{6+2}_1(q) $ in a very interesting way. To describe the relation, we need to introduce two more functions.
The first is a modular form given by
\be
T(\t) := {\eta^7(2\t)\over \eta^3(\t)\eta^3(4\t)} ,
\ee
and the second is the ratio of a false theta function and a modular form
\be
S(\t) := {1\over \eta^2(\t)} \Psi_{2,1}(\t) .
\ee
These functions are related via
\begin{gather}\label{ambiguity_example}\begin{split}
{\psi}^{6+2}_1(q) &= {\psi}'(q)  + {1\over 2} S(\t) ={\psi}''(q) \\
{\psi}^{6+2}_1(q^{-1}) &= {\psi}'(q^{-1})  ={\psi}''(q^{-1}) - {1\over 2}T(\t+1/2) \\
\end{split}
\end{gather}
for $|q|<1 \Leftrightarrow \tau\in \HH$.
In other words, the two $q$-hypergeometric series which are the same in the upper-half plane might extend to different functions in the lower-half plane, and vice versa.
In the following two subsections we will see a more systematic way of describing and understanding the relation between mock and false theta functions, as well as quantum modular forms.

\subsection{False, mock, and quantum}
\label{sec:QMF}

In \S\ref{sec:mod} and \S\ref{sec:examples} we have seen the role of false theta functions in describing the homological blocks associated to
certain three-manifolds.
In the previous subsections we have seen hints that, when considering the superconformal indices by venturing to the lower-half plane, mock theta functions are likely to play an important role for the related three-manifolds.
In fact, despite their very different appearances and modular behaviors, false and mock theta functions both share the structure of the so-called quantum modular forms \cite{MR2757599}. See also the Ch 21 of \cite{MR3729259} for a recent account.

We propose that the (strong) quantum modularity of the false and mock theta functions is in fact what makes them relevant for three-manifolds and homological blocks. Moreover, we propose that going to the other side of the plane in the current context turns a false theta into a mock theta, such that the false--mock pair corresponds to the same quantum modular form.

To explain these ideas, we will start by recalling the definitions of mock modular forms and quantum modular forms.
The definition for quantum modular forms is  purposely a little vague in order to encompass the different types of examples with slightly different properties \cite{MR2757599}. It states:
\begin{defn}
\label{def:qmf}
\cite{MR2757599}
A {\em quantum modular} form of weight $k$ and multiplier $\chi$ on $\Gamma$ is a function $Q$ on $\QQ$ such that for every $\gamma\in \Gamma$ the function $p_\gamma: \QQ\backslash\{
\gamma^{-1}\infty\}\to \CC$, defined by
\be \label{def:period}p_\gamma (x):=Q(x) - Q\lvert_{k,\chi} \gamma(x)\ee (the ``period function'') has some property of continuity or analyticity for every $\gamma\in \Gamma$. Moreover, we say that $Q$ is
 {\em strong quantum modular} if it has formal power series attached to each rational number so that \eq{def:period} holds as an identity between countable collations of formal power series.
\end{defn}

In the above we have used the {\em slash operator} for weight $k$ and multiplier $\chi$ on $\Gamma$, acting on the space of holomorphic functions on the upper-half plane and defined as
\be\label{def:slash}
f(\tau)\lvert_{k,\chi} \gamma = f\left({a\t+b\over c\t+d}\right)  \chi(\gamma) (c\t+d)^{-k}
\ee
where we wrote $\gamma=\left( \begin{smallmatrix} a&b\\c&d\end{smallmatrix}\right)\in \Gamma$.

In fact, the Eichler integrals we encountered in \S\ref{subsec:partial} are examples of quantum modular forms.
To explain this, define the non-holomorphic Eichler integral $\tilde g^{\ast}:  \HH^- \to \CC$
\be
\tilde g^{\ast} (z) := C \int_{\bar z}^{i\infty} g(z' ) (z' -z)^{w-2} dz'
\ee
of a weight $w$ cusp form $g$ with multiplier $\chi$, where the
constant $C$ is the same as in the definition of the Eichler integral \eq{def:Eichler_int1}.
For the purpose of the present article, we can restrict to the  cusp forms $g$ with real coefficients, namely $\overline{g(-\bar \t)} = g(\t).$
Note that $\tilde g^{\ast} (z) $ has nice transformation property while $\tilde g(\t) $ has nice Fourier expansions.
In \cite{lawrence1999modular,BR14} it was shown that $\tilde g^{\ast}$ and the Eichler integral \eq{def:eichler} $\tilde g$ agree to infinite order at any $x\in \QQ$, in the sense that
\be\label{asymp_false}
\tilde g(x+it) \sim \sum_{n\geq 0} \a_n t^n  ~{\rm and}~ \tilde g^{\ast} (x-it) \sim \sum_{n\geq 0} \a_n (-t)^n \ee for $t>0$.
See Figure \ref{eich_quantum} for an illustration.

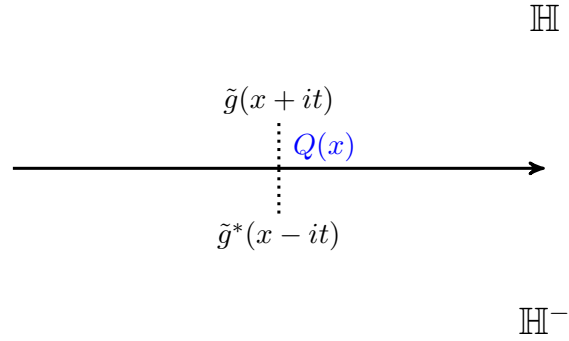
\begin{figure}[h]
\begin{center}
\begin{tikzpicture}[node distance=4cm]
\node (H) at (7, 2) {\Large $\HH$};
\node (Hm) at (7, -2) {\Large $\HH^-$};
\node (up) at (3.5, 0.9) {$\tilde g (x+it)$};
\node (down) at (3.5, -0.9) {$\tilde g^\ast (x-it)$};
\node (Q) at (4.1, 0.28) {\textcolor{blue}{$Q(x)$}};
\draw[->,line width=.4mm] (0, 0) -- (7,0) ;
\draw[dotted,line width=.4mm]  (3.5, .6) --  (3.5, 0);
\draw[dotted,line width=.4mm]  (3.5, -0.6) --  (3.5, 0);
\end{tikzpicture}
\caption{\label{eich_quantum} The upper- and lower-half planes and quantum modular forms.}
\end{center}
\end{figure}

Furthermore, it is easy to see that $\tilde g^{\ast}$ is nearly modular of weight $2-w$ in $\HH^-$, and the discrepancy is given precisely by the period function:
\be
\tilde g^{\ast}(z) - \tilde g^{\ast}\lvert_{2-w,\chi} \gamma (z) =C  \int_{\gamma^{-1}(i\infty)}^{i\infty} g(z' ) (z'-z)^{w-2} dw.
\ee
Combining the above two facts we are immediately led to the conclusion that {\bf $\til g$ is a quantum modular form of weight $2-w$} and multiplier system $\chi$. In the notation of Definition \ref{def:qmf}, the period function corresponding to $\tilde g$ is given by
\be\label{periodEich}
p_\gamma(x) = C  \int_{\gamma^{-1}(i\infty)}^{i\infty} g(z' ) (z'-x)^{w-2} dw
\ee
and is a smooth function on $\RR$ except for $x=\gamma^{-1}(i\infty)$ and has an analytic extension to $\{u+iv\lvert \, ~u>0\, ~{\rm or}\, v>0\}$ (cf. Lemma 3.3 in \cite{CLR}).
In particular, the false theta functions $\Psi_{m,r}$, arising from taking the cusp form $g$ to be given by the weight 3/2 unary theta functions $\theta^1_{m,r}$, are quantum modular forms of weight 1/2.

Soon we will see that mock modular forms produce examples of quantum modular forms.
In his last letter to Hardy in 1920, Ramanujan constructed 17 examples of what he called mock theta functions and claimed that they have a few striking properties regarding their behavior near the roots of unity. Ramanujan did not give a definition for mock theta functions, but stated that they should be a $q$-series that converges for $|q|<1$ that have the following properties
\begin{enumerate}\label{prop:mock1}
\item infinitely many roots of unity are exponential singularities,
\item for every root of unity $\xi$ there is a modular form $f_\xi(q)$ such that the difference $f-f_\xi$ is
bounded as $q\to \xi$ radially,\label{prop:mock3}
\item $f$ is not the sum of two functions, one of which is a modular form and the other a function
which is bounded radially toward all roots of unity.
\end{enumerate}

The long search for a definition of mock modular forms, which would place mock theta functions in the context of modular forms,  ended with the PhD thesis of Zwegers \cite{Zwegers}, where he gave mock modular forms a definition, which basically states that they can be viewed as the holomorphic part of certain harmonic Maass forms.
Moreover, the other, non-holomorphic, part of the harmonic Maass form is given by a modular form, called the {\em shadow} of the mock modular form. Since we have a specific application in mind and in order to simplify the discussion, in the following definition we restrict to mock modular forms whose shadows are cusp forms. The generalization is standard and straightforward.

\begin{defn}\label{def:mock}
We say that a holomorphic function $f$ on $\HH$ is a {\em mock modular form} of weight $k$ and multiplier $\chi$ on $\Gamma$, if and only if it exists a weight $2-k$ cusp form $g$ on $\Gamma$
such that the non-holomorphic {\em completion} of $f$, defined as
$$\hat f (\t) = f(\tau) - g^\ast (\tau) $$
satisfies $\hat f = \hat f\lvert_{k,\chi} \gamma$ for every $\gamma\in \Gamma$.
In the above, we defined the non-holomorphic Eichler integral
\be
g^\ast(\t) 
:=
 { C\int_{-\bar \t}^{i\infty} (\t' + \t)^{-k} \overline{g(-\bar\t')} \, d\t' }
\ee
for $\t\in \HH$.
\end{defn}
Note that there is no canonical normalization of the shadow and  we  choose ours to simplify the comparison between mock modular forms and Eichler integrals.
For convenience, we will denote by ${\mathbb M}_{k,\chi}(\Gamma)$, $M^{!}_{k,\chi}(\Gamma)$, $S_{k,\chi}(\Gamma)$, $Q_{k,\chi}(\Gamma)$ the spaces of mock modular, weakly holomorphic modular,  cusp and quantum modular forms respectively, of weight $k\in {1\over 2} \ZZ$ and multiplier $\chi$ for the group $\Gamma < \SL(2,\RR)$.  In the present article we will mainly encounter the cases $\Gamma=\SL(2,\ZZ)$ and  $\Gamma=\Gamma_0(N)$, the congruence subgroup of $\SL(2,\ZZ)$ with the congruence condition $N|c$.
We will also define the {\em shadow map} $\xi: {\mathbb M}_{k,\chi}(\Gamma) \to S_{2-k,\bar\chi}(\Gamma)$ by letting $\xi(f) = g$ in the notation of Definition \ref{def:mock}.

In what follows we will see a relation between the above modern definition of mock modular forms and the characterizations  Ramanujan gave in his letter, and how mock modular forms lead to quantum modular forms in a way that is closely related to the case of Eichler integrals discussed above. We will follow the work by Choi--Lim--Rhoades \cite{CLR} quite closely in this part of the discussion.  

To show that mock theta functions do have the above-mentioned properties that Ramanujan claimed, the following  was proven recently. 
\begin{theorem}\label{thm_mock_Ram} \cite{MR3065809, CLR}
If $f\in{\mathbb M}_{k,\chi}(\Gamma_0(N))$  such that it has non-vanishing shadow, and $\Gamma_0(N)$ has $t$ inequivalent cusps, $\{q_1,\dots, q_t\}\subset \QQ\cup \{i\infty\}$. Then\vspace{-5pt}
\begin{enumerate}
\item The function $f(\t)$ has exponential singularities at infinitely many rational numbers, \vspace{-7pt}
\item for every $G\in M^{!}_{k,\chi}(\Gamma_0(N))$, $f-G$ has exponential singularities at infinitely many rational numbers, \vspace{-7pt}
\item there is a collection $\{G_j\}_{j=1}^{t}$ of weakly holomorphic modular forms such that $f-G_j$ is bounded towards all cusps equivalent to $q_j$.
\end{enumerate}
\end{theorem}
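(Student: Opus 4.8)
The plan is to reduce all three claims to statements about the harmonic completion $\hat f = f - g^*$, a harmonic Maass form of weight $k$ and multiplier $\chi$ on $\Gamma_0(N)$ whose shadow $\xi(f)=g$ is, by hypothesis, a non-zero cusp form. The first observation is that subtracting a weakly holomorphic modular form leaves the shadow unchanged: for any $G\in M^{!}_{k,\chi}(\Gamma_0(N))$ one has $\widehat{f-G}=\hat f - G$ and $\xi(f-G)=\xi(f)=g\neq 0$. This merges claims (1) and (2) into the single assertion that $F:=f-G$ (with $G=0$ recovering claim (1)) has exponential singularities at infinitely many rationals whenever its shadow is non-zero. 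Throughout I would use the fact, established in \eqref{asymp_false}, that the non-holomorphic Eichler integral $g^*$ of a cusp form is bounded toward every cusp and admits a finite asymptotic expansion at every $x\in\QQ$; hence $g^*$ neither creates nor cancels exponential singularities, and all exponential growth of $F$ must come from the holomorphic (``principal-part'') data of $\hat F$.

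I would prove claim (3) first, since it is structural. At each cusp $q_j$ the harmonic Maass form $\hat f$ has, relative to a scaling matrix $\sigma_j$, a finite principal part in its holomorphic part. Invoking the realizability of prescribed principal parts by weakly holomorphic modular forms --- subject only to the finitely many duality constraints against cusp forms in $S_{2-k,\bar\chi}(\Gamma_0(N))$, which can be absorbed using the free choice of principal parts at the cusps \emph{not} equivalent to $q_j$ --- I would produce $G_j\in M^{!}_{k,\chi}(\Gamma_0(N))$ matching the principal part of $\hat f$ at every cusp equivalent to $q_j$. Then $\hat f-G_j=\widehat{f-G_j}$ has vanishing principal part and a bounded (cuspidal) non-holomorphic part toward every such cusp, hence is bounded there; adding back the bounded $g^*$ shows $f-G_j$ is bounded toward all cusps equivalent to $q_j$, which is claim (3).

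For claims (1) and (2) I would argue by contradiction through the shadow. Suppose $F=f-G$ had \emph{no} exponential singularity at any of the finitely many cusps $q_1,\dots,q_t$; since $g^*$ is bounded at cusps, $\hat F=F-g^*$ would then be a harmonic Maass form bounded toward every cusp, hence holomorphic and modular, and a holomorphic modular form has vanishing shadow --- contradicting $\xi(F)=g\neq 0$. Therefore $F$ must have an exponential singularity at some cusp $q_j$, i.e. a non-zero principal part there. Because $\hat F$ is $\Gamma_0(N)$-invariant and the automorphy factor $(c\tau+d)^{-k}$ is only polynomial, exponential growth at $q_j$ propagates to every rational in the infinite orbit $\{\delta(q_j):\delta\in\Gamma_0(N)\}$. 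This yields exponential singularities at infinitely many distinct rationals, establishing (1) (take $G=0$) and (2) simultaneously.

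The main obstacle is the analytic bookkeeping underlying the phrase ``exponential singularity at a rational'': one must verify that, after transporting to $i\infty$ by a scaling matrix, the exponential growth of $F$ is detected \emph{exactly} by the holomorphic principal part of $\hat F$, with the non-holomorphic piece $g^*$ contributing only the bounded, smooth asymptotic series of \eqref{asymp_false}. This separation --- together with the classical fact that a harmonic Maass form bounded at all cusps is a genuine holomorphic modular form --- is what converts ``non-vanishing shadow'' into ``exponential singularities,'' and the half-integral-weight and multiplier subtleties must be tracked consistently through the Poincaré-series construction used in (3).
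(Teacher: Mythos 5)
The paper does not actually prove Theorem \ref{thm_mock_Ram}: it quotes it from \cite{MR3065809, CLR}, so there is no in-paper argument to compare against. Your reconstruction follows the same harmonic-Maass-form route as those references (and as the paper's own proof of Lemma \ref{lem:asymp}, which likewise works with the completion $\hat f - G_x$): pass to $\hat f = f - g^\ast$, use that the non-holomorphic part of a harmonic form with cusp-form shadow decays toward every cusp, so that all exponential growth is carried by the holomorphic principal parts, and propagate growth at one cusp representative to the infinite $\Gamma_0(N)$-orbit of rationals. That outline is sound, and parts (1)--(2) can even be made cleaner than your ``bounded harmonic Maass form is holomorphic'' detour: the Bruinier--Funke pairing gives $\{\widehat{f-G}, g\} = (g,g) > 0$, and since this pairing is computed purely from principal-part coefficients, $f-G$ must carry a nonzero principal part at some cusp whenever the shadow is nonzero; this is the same mechanism you invoke, but stated in one step and with no appeal to an auxiliary classification of bounded harmonic forms.

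The one place where your argument has a genuine soft spot is part (3). Absorbing the duality obstruction ``using the free choice of principal parts at the cusps not equivalent to $q_j$'' requires two things you do not supply: (i) such cusps must exist, i.e. $t \ge 2$; and (ii) the functionals $g' \mapsto c^{(i)}_{g'}(n)$ (Fourier coefficients of $g' \in S_{2-k,\bar\chi}(\Gamma_0(N))$ expanded at a fixed cusp $q_i$, $n \ge 1$) must span the full dual of $S_{2-k,\bar\chi}(\Gamma_0(N))$; the latter follows because a nonzero cusp form cannot have identically vanishing expansion at any cusp, so the annihilator of these functionals is zero, but it needs to be said. Point (i) is not cosmetic: if $t=1$ your construction has nowhere to dump the obstruction, and in fact parts (2) and (3) of the theorem then contradict each other, so the statement is only meaningful when $\Gamma_0(N)$ has at least two cusp classes (true for every $N>1$, and for all groups arising in \cite{MR3065809, CLR}). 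With the hypothesis $t\ge 2$ made explicit and the spanning lemma recorded, your proof of (3) closes up, and the rest of the proposal stands.
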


A famous example of the above is the third order mock theta function of Ramanujan that we have encountered in
\eq{6p22}-\eq{6p22opt}. Ramanujan's observation, written in terms of the mock modular form $h^{6+2}_1(\tau)$, states  that
\be\label{6p2oddroot}
\lim_{\tau \to \zeta} h^{6+2}_1(\tau) = O(1)
\ee
for all roots of unity $\ex(\zeta)$ of odd order (such as $q\to 1$), and
\be\label{6p2evenroot}
\lim_{\tau\to \zeta} (h^{6+2}_1 + (-1)^k b(\tau)) = O(1)
\ee
for all order $2k$ roots of unity $\ex(\zeta)$, with the modular form subtraction  given by $b(\tau) = \frac{\eta^3(\t)}{\eta^2(2\t)}$.

Moreover, after the modular subtraction the asymptotic expansion of the mock modular form near  a specific cusp is the same (up to a minus sign) as that of the modular correction:
\begin{lemma}\label{lem:asymp}
In the notation of Theorem \ref{thm_mock_Ram}, we have the following equality among asymptotic series:
\be\label{asymp_mock}
(f-G_x)(x+it) \sim \sum_{n\geq 0}\beta_n t^n ~{\rm and}~g^\ast(x+it)\sim \sum_{n\geq 0}\beta_n t^n.
\ee
\end{lemma}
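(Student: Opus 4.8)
The plan is to exploit the modularity of the non-holomorphic completion $\hat f = f - g^*$ (Definition \ref{def:mock}) in order to transport the analysis near the rational cusp $x$ to the cusp at infinity, where a Fourier expansion is available, and then to read off the two asymptotic series simultaneously. First I would fix $x\in\QQ$, let $q_j$ be the cusp representative equivalent to it and $G_x=G_j$ the weakly holomorphic form supplied by Theorem \ref{thm_mock_Ram}, and choose a scaling matrix $\gamma=\left(\begin{smallmatrix}a&b\\c&d\end{smallmatrix}\right)$ with $\gamma(i\infty)=x$. Writing $\tau=x+it=\gamma(w)$ one has $\tau-x=-1/\big(c(cw+d)\big)$, so that $w\to i\infty$ as $t\to 0^+$, with $cw+d=i/(ct)$ and $\im(w)=1/(c^2t)$. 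Since $\hat f|_{k,\chi}\gamma=\hat f$ in the sense of the slash operator \eqref{def:slash}, this gives $\hat f(\tau)=\chi(\gamma)^{-1}(cw+d)^k\,\hat f(w)$.

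Next I would substitute $\hat f(w)=f(w)-g^*(w)$ and use that $G_x$, being weakly holomorphic modular of weight $k$, transforms cleanly, to obtain
\begin{align*}
(f-G_x)(\tau) &= \chi(\gamma)^{-1}(cw+d)^k\big(f(w)-G_x(w)\big) \\
&\quad - \chi(\gamma)^{-1}(cw+d)^k\,g^*(w) + g^*(\tau).
\end{align*}
Inserting the Fourier expansions of the holomorphic functions $f$ and $G_x$ at the cusp and using $(cw+d)^k\sim(i/c)^k t^{-k}$ together with $q^n=e^{2\pi i n w}$ and $\im(w)=1/(c^2t)$, each mode of index $n$ scales like $t^{-k}e^{-2\pi n/(c^2 t)}$: exponentially small for $n>0$, bounded-or-growing for $n\le 0$.

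I would then invoke boundedness. Because the shadow $g$ is a cusp form, $g^*(w)$ decays exponentially as $w\to i\infty$, so the middle term $t^{-k}g^*(w)$ is $o(t^n)$ for every $n$. The hypothesis that $f-G_x$ is bounded towards the cusp (property 3 of Theorem \ref{thm_mock_Ram}) forces the non-positive Fourier modes of $f(w)-G_x(w)$ to cancel: the principal part must vanish, and, since $k=\tfrac12>0$, removing the automorphy factor $t^{-k}$ shows the constant term must vanish as well. What survives of the first term is a sum over $n>0$, hence exponentially small. Therefore $(f-G_x)(x+it)=g^*(x+it)+O(e^{-c/t})$, so the two power series in $t$ agree term by term and the common coefficients are the $\beta_n$.

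The main obstacle I anticipate is the bookkeeping at cusps not $\Gamma_0(N)$-equivalent to $\infty$: one must work with the scaling matrix adapted to $q_j$, track the cusp width and the induced multiplier, and verify that the harmonic Maass form $\hat f$ has the expected Fourier shape there. The delicate analytic point is establishing uniform exponential decay of $g^*$ at each cusp and confirming that $t^{-k}$ times this decay, as well as the positive-index holomorphic tail, are negligible to all polynomial orders, so that the entire asymptotic series originates from the untransformed $g^*(\tau)$. I would also be careful to check that ``bounded'' in Theorem \ref{thm_mock_Ram}, once the factor $(cw+d)^k$ is stripped off, really is equivalent to the vanishing of the principal and constant Fourier terms; these steps are standard but require attention to the half-integral weight and multiplier conventions of Definition \ref{def:mock}.
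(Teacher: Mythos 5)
Your proposal is correct and takes essentially the same route as the paper: the paper's proof simply cites Lemma 3.1 of \cite{CLR}, whose method is exactly what you spell out — use that $\hat f - G_x$ is a (modular) harmonic Maass form, expand it near the cusp $x$, and combine the boundedness hypothesis with the exponential decay of the cusp-form shadow contribution to conclude that $(f-G_x)-g^\ast$ vanishes to infinite order, so the two asymptotic series coincide. The cusp bookkeeping you flag (cusps not $\Gamma_0(N)$-equivalent to $i\infty$, widths, induced multipliers) is precisely what "expand the harmonic Maass form at the cusp" handles in that reference, and is routine.
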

\begin{proof}
The equality among the limiting values is shown in the Lemma 3.1 of \cite{CLR} using  the fact that $\hat f-G_x$ is a harmonic Maass form and expand it near the cusp $\tau \to x$. The same method gives the above equality among the asymptotic series.
\end{proof}

Given a choice of $\{G_j\}_{j=1}^{t}$, one define $Q_f: \QQ\to \CC$ by setting
\[Q_f (x):=\lim_{t\to 0^+} (f-G_x)(x+it),\]
where we write $G_x = G_j$ when $x$ is equivalent to $q_j$ under the action of $\Gamma=\Gamma_0(N).$
The Lemma \ref{lem:asymp}, the analyticity property of the period function \eq{periodEich} associated to $\tilde g^\ast$ and the fact that $g^\ast(\t) = \tilde g^\ast(-\t)$ (in the cases we care about where $\overline{g(-\bar \t)} = g(\t)$) immediately shows that the mock modular form gives rise to a (strong) quantum modular form $Q_f$.

Note that the choice of the {\em modular subtraction} $\{G_j\}_{j=1}^{t}$ with which to carve out the singularities of the mock modular forms is {not unique}. At present, a satisfactory systematic study of the possibilities and their properties  is not yet available.
For a family of mock modular forms, namely those with known expressions in terms of the so-called
 universal mock modular forms $g_2$ and $g_3$, specific choices are given explicitly in \cite{MR3589294,MR3392042}.
Given this lack of uniqueness of the modular subtractions, it is important to note that
 the limiting value and the asymptotic expansion \eq{asymp_mock} is {independent} of the choices of the modular form $G_j$ as long as they do subtract the singularity.

To sum up, given a cusp form $g\in S_{2-k,\bar\chi}(\Gamma)$, if $f$ is a mock modular form $f\in{\mathbb M}_{k,\chi}(\Gamma)$ with shadow $\xi(f)=g$ and $\tilde g$ is its Eichler integral, then $f$ and $\tilde g$ have the same limiting value at $x\in \QQ$ in the sense that
\be\label{quantum_mock}
\lim_{t\to 0^+} (f-G_x)(x+it) = \lim_{t\to 0^+} \tilde g^\ast(-x+it).
\ee
Moreover, they also have the same asymptotic series; in terms of the asymptotic series \eq{asymp_false} and \eq{asymp_mock} we have $\a_x(n) =(-1)^n \beta_{-x}(n)$ and we have
\be
 (f-G_x)(-x+it) \sim  \sum_{n\geq 0} \alpha_{x}(n) (-t)^n ~{\rm and}~\tilde g(x+it)\sim \sum_{n\geq 0}
 \alpha_{x}(n) t^n.
 \ee
In particular, at cusp 0 we have the ``same'' asymptotic series, approaching from the upper- and lower-half plane, in the sense that:
\be \label{expansionzero}
 (f-G_0)(it) \sim  \sum_{n\geq 0} \alpha_{0}(n) (-t)^n ~{\rm and}~\tilde g(it)\sim \sum_{n\geq 0}
 \alpha_{0}(n) t^n .
\ee

Note that the relations  \eq{asymp_mock} among  the asymptotic expansion relations are precisely what we need to make contact with the homological blocks of the three-manifold: the former states that the limiting value at $\tau \to {1\over k}$ respectively $-{1\over k}$ coincide which is what we need to obtain the expected relations among the WRT invariants of $M_3$ and $-{M_3}$, and the latter gives the expected relation among Ohtsuki series.

We summarize the relation between these objects in Figure \ref{diagram_mockquantumfalse}.  Note that the $q\leftrightarrow q^{-1}$ line between mock and Eichler integral of its shadow is in dashed line, since the $q\leftrightarrow q^{-1}$ procedure is non-unique in both directions. This is clear from the fact that the asymptotic expansion only depends on the shadow of the mock modular form, and hence the map from mock to quantum modular forms is in fact a  linear injective map
\be\label{map:mockquantum}
\mu : {\mathbb M}_{k,\chi}(\Gamma)/M^{!}_{k,\chi}(\Gamma) \to Q_{k,\chi}(\Gamma),\ee given by $\mu(f) = Q_f$. Relatedly, it is insensitive to the choice of the  modular subtraction $\{G_j\}_{j=1}^{t}$.

Finally, from the discussions in \S\ref{sec:mod}--\ref{sec:resurgence} it is easy to see that in our context we need a vector-valued version of the above discussion, for the full modular group $\SL(2,\ZZ)$. It should be straightforward to generalize the existing discussion to the vector-valued situation and we leave the details for future work.

\tikzstyle{block} = [rectangle, draw,
    text width=9em, text centered, minimum height=4em]
    \tikzstyle{block2} = [rectangle,
    text width=15em, text centered, minimum height=4em]
\tikzstyle{line} = [draw, -latex']
\tikzstyle{smallblock} = [rectangle, draw,
    text width=6em, text centered, minimum height=4em]
\tikzstyle{cloud} = [draw, ellipse,fill=red!20, node distance=3cm,
    minimum height=2em]

\begin{figure}[h!]
\begin{tikzpicture}[node distance = 4cm, auto]
    \node [block] (mock) {{\small Mock Modular Form}\\ $f\in {\mathbb M}_{k,\chi}$};
    \node [smallblock, right of=mock, node distance=5cm] (shadow) {{\small Shadow}\\ $g\in S_{2-k,\bar \chi}$};
    \node [block, right of=shadow] (nonhol) {{\small Non-hol. Eichler Int.}\\ $\tilde g^\ast(z),\,z\in {\mathbb H}^-$};
    \node [block, below of=nonhol, node distance=3cm] (eichler) {{\small Eichler Int. (False $\theta$)}\\ $\tilde g(\t),\,\t\in {\mathbb H}^+$};
    \node [block, above of=nonhol, node distance=3cm] (shad) {{\small Modular Correction}\\{\small $ g^\ast(\t) ,\,\t\in {\mathbb H}^+$}};
    \node [block2, above of = shad,node distance=1.5cm] (tit) {{ \bf Quantum Modular Forms}};
    \path [line] (mock) --node {\footnotesize shadow map} (shadow);
    \path [line] (shadow) -- (eichler);
    \path [line] (shadow) -- (shad);
    \path [line] (shadow) -- (nonhol);
    \draw[<->]  (shad)  to node {$\footnotesize z=-\t$}  (nonhol);
      \draw[<->]  (nonhol)  to node {\footnotesize \begin{tabular}{c} same asymp. \\ \eq{asymp_false}\end{tabular}}  (eichler);
     \draw[<->, dashed]   (mock) --node [swap]{{$\footnotesize q\;\leftrightarrow\; q^{-1}$}
} (eichler);
\end{tikzpicture}
\caption{\small
The relation between the different modular objects involved. The dashed line is to denote that the relation is non-unique in both directions.}
\label{diagram_mockquantumfalse}
\end{figure}
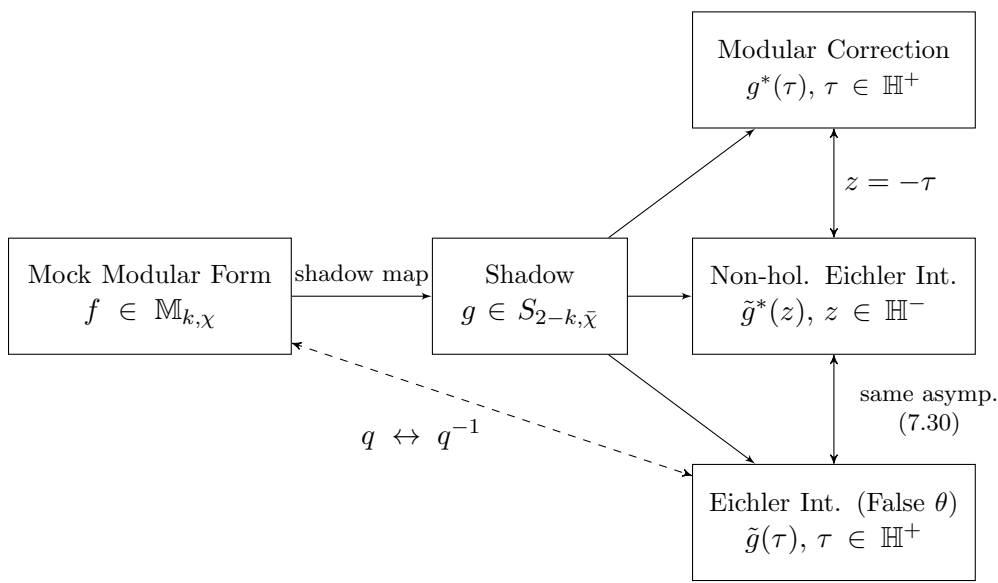

\subsection{Rademacher sums}
\label{subsec:Rad}

Apart from the $q$-hypergeometric relation discussed in \S\ref{subsec:qhyper}, another way to explicitly see how mock theta functions become false theta functions when going between upper- and lower-half planes, is via the method of Rademacher sums.
For the families of examples we are interested in in this paper, this approach is arguably  more systematic than that of the $q$-hypergeometric series.
We will explain this further in \S\ref{sec:optimal}.

As Poincar\'e pointed out, a simple way to construct modular forms is simply by averaging a quantity over its images under the modular group \cite{Poi_FtnMdlFtnFuc}.
Taking this quantity to be a monomial $q^\mu$, and for a given (compatible) multiplier system $\chi$ of weight $k$ for a group $\Gamma<\SL(2,\RR)$ that is commensurable with $\SL(2,\ZZ)$, we define the {\em Poincar\'e sum:}
\be
\label{poincaresum}
P^{[\mu]}_{\Gamma,k,\chi}(\t) := \sum_{\gamma\in \Gamma_\infty\backslash \Gamma} q^\mu \lvert_{k,\chi} \gamma,
\ee
where $\Gamma_\infty$ is the subgroup of $\Gamma$ that preserves the cusp $\{i\infty\}$, and is in general generated as $\Gamma_\infty = \langle T^h, -{\bf 1}\rangle$. The unique such positive integer $h$ is called the width of the cusp $i\infty$ of the group $\Gamma$.
A choice of $\mu$ is compatible if and only if (cf. \eq{consistency_rad})
\(
q^\mu \lvert_{k,\chi} \gamma= q^\mu\) for all  $\gamma\in \Gamma_\infty$.
We are mainly interested in the special case  $\Gamma = \SL(2,\ZZ)$. In this case $\Gamma_\infty=\langle T, -{\bf 1}\rangle$ and the choice of $\mu$ is consistent if and only if $\chi(T) \ex(\mu)=1$.
For $k>2$ the sum \eq{poincaresum} converges absolutely and $P^{[\mu]}_{\Gamma,k,\chi}$ is indeed a holomorphic function on $\HH$ which is moreover a modular form of weight $k$ and multiplier $\chi$ by construction.

For $k\leq 2$, which is the range of interest for us, the sum is no longer absolutely convergent and one needs to regularise the Poincar\'e sum, which leads to what is often known as the {\em Rademacher sums}.  See \cite{Cheng:2012qc} for a review.
First, the sum can no longer be taken over the full coset $\Gamma_\infty\backslash \Gamma$ and we consider instead the following subset
\begin{gather}
\Gamma_{K,K^2}=\left\{\begin{pmatrix}a&b\\c&d\end{pmatrix}\in\Gamma\Big\lvert |c|<K,\,|d|<K^2\right\}.
\end{gather}
Such an adjustment  of the range of sums is sufficient for $k=2$, but for $k<2$ we also need to introduce an additional regularisation factor
$$
\Re^{[\mu]}_k(\gamma, \tau):= \frac{\bar{\gamma}(1-w, 2\pi i n (\gamma \tau - \gamma \infty))}{\Gamma(1-w)}
$$
where $\bar{\gamma}$ denotes the lower incomplete gamma function
\be
\bar{\gamma}(s,x) = \int_0^x t^{s-1} e^{-t} dt.
\ee
Using the above, we define the Rademacher sum, associated to the data $(\Gamma, k,\chi,\mu)$ determining the group, the weight, the multiplier and the seed (or polar part when $\mu<0$) of the sum, to be\footnote{In the special case of $\chi(T)=1$, which we do not  encounter in the present work, an additional constant should be added to this sum. The same comment also applies to \eq{eqn:series:RadFou}. See \cite{Cheng:2012qc} for details. In what follows we will not consider such cases.}
\be
 \label{eqn:rad}
{R}^{[\mu]}_{\Gamma, k,\chi}(\t):=  \lim_{K\to \infty} \sum\limits_{\gamma \in \Gamma_{\infty}\backslash \Gamma_{K,K^2}} \Re^{[\mu]}_k(\gamma, \tau)  \left(q^\mu \lvert_{k,\chi} \gamma\right). \quad
\ee
Niebur proved that in the case of negative weight the above construction  gives rise to a conditionally convergent series, which he referred to as automorphic integral \cite{Niebur1}.
Clearly, after regularization there is no guarantee that the sum will still be a modular form.
It turns out that in general the Rademacher sum \eq{eqn:rad}  is a  mock modular form  with  a shadow given by a cusp form. Moreover, the shadow of the Rademacher sum  is itself a Rademacher sum:
\be\label{shadow_rad}
\xi\big({R}^{[\mu]}_{\Gamma, k,\chi}\big) = (-\mu)^{1-k} {R}^{[\mu]}_{\Gamma, 2-k,\bar\chi},
\ee
now with the dual weight and the conjugate multiplier system (cf. Definition \ref{def:mock}).

This technique was extended to differenet weights, multiplier systems and modular groups by  \cite{Knopp1, Knopp2, Niebur1, DunFre} and later to weight $1/2$ mock modular forms in \cite{Pribit,Cheng2011}. Further developments in the context of harmonic Maass forms are reported in \cite{{BO06},{BO10}}.
As such, Rademacher sum construction can be viewed as a useful tool to construct mock modular forms.
It can also be generalised to the vector-valued cases, which are relevant for the application discussed in the present paper, as was done in \cite{Whalen}.

After massaging the sum in \eq{eqn:rad}, one can recast the Rademacher sum as a $q$-series
\begin{gather}\label{eqn:series:RadFou}
R^{[\m]}_{\Gamma,k,\chi}(\t)=q^{\m}+\sum_{\substack{h(\nu-\mu)\in\ZZ\\\nu> 0}}c_{\Gamma,k,\chi}(\m,\nu)q^{\nu}
\end{gather}
 and obtain an explicit expression for its Fourier coefficients $c_{\Gamma,k,\chi}(\m,\nu)$, sometimes referred to as the Rademacher series.

To write down this expression, we define the subset
\begin{gather}\label{eqn:sums:GKdef}
\Gamma^{\times}_{K} = \left\{\begin{pmatrix}a&b\\c&d\end{pmatrix}\in\Gamma\mid 0<|c|<K\right\},
\end{gather}
and the functions
\begin{gather}
	K_{\gamma,\chi}(\m,\n)=\label{eqn:series:Kdef}
	\ex\left( \m\frac{a}{c}\right)\ex\left( \n\frac{d}{c}\right)\chi(\gamma),\\
	B_{\gamma,k}(\m,\n)=\label{eqn:series:Bdef}
	\begin{cases}
	\ex\left(-\frac{k}{4}\right)\sum_{ n\geq 0}\left(\frac{2\pi}{c}\right)^{2n+k}
	\frac{(-\m)^{n}}{n!}\frac{\n^{k+n-1}}{\Gamma(k+n)},&k\geq 1,\\
	\ex\left(-\frac{k}{4}\right)\sum_{n\geq 0}\left(\frac{2\pi}{c}\right)^{2n+2-k}
	\frac{(-\m)^{n+1-k}}{\Gamma(n+2-k)}\frac{\n^{n}}{n!},&k\leq 1.
	\end{cases}	
\end{gather}
In terms of these we have
\begin{gather}\label{eqn:series:cdef}
c_{\Gamma,k,\chi}(\m,\n)
=\frac{1}{h}\lim_{K\to \inf}\sum_{\gamma\in\Gamma_{\inf}\backslash\Gamma^{\times}_K/\Gamma_{\inf}}
K_{\gamma,\chi}(\m,\n)B_{\gamma,k}(\m,\n)
\end{gather}
defined for
\be\label{consistency_rad}
(\m,\n) \in \frac{1}{h}\ZZ\times\frac{1}{h}\ZZ-\left(\frac{\bar\nu}{h},\frac{\bar\nu}{h}\right)\ee
where $\chi(T^h) = \ex(\bar\nu)$. For later use we choose the branch $0<\bar\nu<1$.

For the case $\Gamma=\SL(2,\ZZ)$ and $\m<0$ we have
\be\label{radseries_bessel}
c_{\Gamma,k,\chi}(\m,\n)
	=\sum\limits_{c>0}\, \mathfrak{s}_{\Gamma,\chi}(\mu,\nu,c)\, \frac{2\pi}{c}\biggl( -\frac{\nu}{\mu}\biggr)^{\frac{k-1}{2}}I_{|1-k|}\Bigl(\frac{4\pi }{c}\sqrt{-\mu\nu}\Bigr)
\ee
where
\(\mathfrak{s}_{\Gamma,\chi}(\mu,\nu,c)\) is the Kloosterman sum
\be
\mathfrak{s}_{\Gamma,\chi}(\mu,\nu,c) = \sum_{\gamma \in \Gamma_{\infty}\backslash \Gamma / \Gamma_{\infty}} \hspace{-1.9mm}\ex\Bigl(\m\frac{a}{c} +\n\frac{d}{c}\Bigr)\chi(\gamma)
 \ee
where the sum is over the $\gamma = (\begin{smallmatrix}* &*\\c&*\end{smallmatrix})$ and we write the double coset representative as $\gamma = (\begin{smallmatrix} a & b\\c&d\end{smallmatrix})$. One can easily check that
the compatibility condition guarantees that the summand is independent of the choice of representative. More explicitly,  one can write the above as
\be
\mathfrak{s}_{\Gamma,\chi}(\mu,\nu,c) = \sum_{\substack{0\leq d < c\\(c,d)=1}} \ex\Bigl(\m\frac{a}{c} +\n\frac{d}{c}\Bigr)\chi((\begin{smallmatrix} a & b\\c&d\end{smallmatrix})),
\ee
where in each term in the sum we choose any $(a,b)$ such that $(\begin{smallmatrix} a & b\\c&d\end{smallmatrix})\in \SL(2,\ZZ)$, and the summand is again independent of the choice.

The first way to see the relation between Eichler integral of the shadow and the Rademacher sum performed in the lower-half plane, is by noting the following relations among the Rademacher series \cite{Cheng2011}.
The so-called  {\em Eichler duality} states
\begin{gather}\label{eqn:series:Eicdual}
	-\overline{c_{\Gamma,k,\chi}(-\m,-\n)\m^{1-k}}=c_{\Gamma,2-k,\bar\chi}(\m,\n)\n^{1-k}.
\end{gather}
Together with the so-called {\em Zagier duality} relating the Rademacher sums of dual weights
\be
\label{eqn:series:dual}
c_{\Gamma,2-k,\bar\chi}(-\nu,-\mu) = c_{\Gamma,\chi,k}(\m,\n),
\ee
we have the following expression for the Fourier coefficients of the mock modular form $f= {R}^{[\mu]}_{\Gamma, k,\chi}$, its shadow $g= \xi({R}^{[\mu]}_{\Gamma, k,\chi})$ \eq{shadow_rad}, and the Eichler integral $\tilde g$:
\begin{align}\label{fFour}
f(\t) & = q^{\m}+\sum_{\substack{h(\nu-\mu)\in\ZZ\\\nu> 0}}C(\m,\nu)\,q^{\nu} \\
g(\t) &=(-\mu)^{1-k} \Big(q^{-\m}+\sum_{\substack{h(\nu'+\mu)\in\ZZ\\\nu> 0}} C(-\nu',\mu)\,q^{\nu'} \Big)   \\
\label{EichFour}
\tilde g (\t) &= q^{-\m}+\sum_{\substack{h(\nu'+\mu)\in\ZZ\\\nu> 0}} \overline{C(\mu,-\nu')}\,q^{\nu'}
\end{align}
 in terms of the Rademacher series $C(\mu,\nu):=c_{\Gamma,\chi,k}(\m,\n)$ \cite{Cheng:2012qc}.
Compare \eq{fFour} and \eq{EichFour}, and focus on the case where the mock modular form (such as mock theta functions) have real coefficients $C(\mu,\nu)=\overline{C(\mu,\nu)}$,
we see that $f$ and $\tilde g$ can be viewed as being related by $q\leftrightarrow q^{-1}$, as depicted in  Fig \ref{diagram_mockquantumfalse}.

The above relation between the mock modular form $f$ and the Eichler integral $\tilde g$ via $q\leftrightarrow q^{-1}$ can be seen even more explicitly by manipulating the Rademacher sum itself.
The question of how to extend the Rademacher series to the lower-half  plane was first discussed by Rademacher in \cite{RademFourier}.
This analysis was later reviewed and extended to the context of harmonic Maass forms and mock modular forms by Rhoades \cite{Rhoades}, in the special case of weight $k=1/2$. In what follows we will follow his treatment and show that one can define a function convergent both in the upper and the lower half-plane which coincides with the mock modular form in the upper half-plane and the Eichler integral of its shadow in the lower half-plane.
Given the convergence of the weight 1/2 Rademacher sums proven in \cite{Cheng:2011ay}, we have the following theorem, generalising the result of \cite{Rhoades}.

\begin{theorem}\label{defined_bothupdown}
Let $f(\t)$ be a mock modular form of weight 1/2 defined by the Rademacher sum $R^{[\mu]}_{\Gamma,1/2,\chi}(\t)$, for $\Gamma=\Gamma_0(N)$ with some positive integer $N$. Then there exists a function $F(\tau)$ on $\HH$ and $\HH^-$, satisfying
\be
F(\t) =\begin{cases}f(\t) &{\rm{when}}~ \t\in \HH \\  \til g(-\t) &{\rm{when}}~ \t\in \HH^-.\end{cases}
\ee
in the notation of \eq{fFour}--\eq{EichFour}.
\end{theorem}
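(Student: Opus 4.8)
The plan is to follow the strategy of Rhoades \cite{Rhoades} and Rademacher \cite{RademFourier}, building the single function $F$ directly out of the regularized Rademacher sum \eqref{eqn:rad} and showing that this very sum converges on both $\HH$ and $\HH^-$. By the defining property of $f$ we have $R^{[\mu]}_{\Gamma,1/2,\chi}(\tau) = f(\tau)$ for $\tau\in\HH$, so the entire content of the theorem lies in the lower half-plane: I must show that the regularized sum continues to converge for $\tau\in\HH^-$ and that its value there is $\tilde g(-\tau)$. The essential analytic input will be the convergence of weight $1/2$ Rademacher sums established in \cite{Cheng:2011ay}, which I intend to apply in $\HH^-$ after a termwise analytic continuation across the real axis.

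First I would isolate the $c=0$ contribution. The cosets with $\gamma\in\Gamma_\infty$ produce the seed $q^\mu=\ex(\mu\tau)$ together with its regularization factor; since this monomial is entire and $\Re^{[\mu]}_{1/2}$ degenerates to $1$ on this piece, it contributes the polar part $q^\mu$ in both half-planes. This already matches the leading monomial of $f(\tau)$ in \eqref{fFour} and of $\tilde g(-\tau)$, the latter obtained from \eqref{EichFour} by $\tau\mapsto-\tau$ (i.e. $q\mapsto q^{-1}$), both of which begin with $q^\mu$. Next I would treat the cosets with $c\neq 0$: for each such $\gamma$ the summand $\Re^{[\mu]}_{1/2}(\gamma,\tau)\,(q^\mu\lvert_{1/2,\chi}\gamma)$ is holomorphic in $\tau$ away from the rational point $\gamma^{-1}\infty$, hence already defines a function on $\HH^-$, the incomplete gamma $\bar\gamma(1-w, 2\pi i n(\gamma\tau-\gamma\infty))$ being precisely the device that enforces convergence of the regularized sum.

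I would then show that the $K\to\infty$ limit in \eqref{eqn:rad} exists locally uniformly on compact subsets of $\HH^-$, reducing — exactly as in the upper half-plane — to the Kloosterman--Bessel estimates behind \eqref{radseries_bessel}, so that the convergence argument of \cite{Cheng:2011ay} applies after the substitution $\tau\mapsto-\tau$. Computing the resulting Fourier expansion and comparing coefficients via the Eichler duality \eqref{eqn:series:Eicdual} together with the Zagier duality \eqref{eqn:series:dual} would identify the limit with the series \eqref{EichFour} for $\tilde g$, evaluated at $-\tau$; this is where the shadow relation \eqref{shadow_rad} enters, guaranteeing that the coefficients produced are precisely $\overline{C(\mu,-\nu')}$.

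The main obstacle will be the convergence and uniformity in $\HH^-$ at weight $1/2$. Because $1/2<1$ the sum lies well outside the range of absolute convergence, and one must use the $k\le 1$ branch of the Bessel/incomplete-gamma regularization, where the series is only conditionally convergent in the sense of Niebur's automorphic integrals \cite{Niebur1}; controlling this conditionally convergent tail uniformly on compacta of $\HH^-$, and justifying the interchange of the regularization limit $K\to\infty$ with the termwise continuation across the real axis, is the delicate analytic heart of the argument. By contrast, upgrading Rhoades' scalar $\SL(2,\ZZ)$ statement to $\Gamma_0(N)$ with a general multiplier $\chi$ should be essentially bookkeeping: it amounts to carrying the cusp width $h$ of \eqref{consistency_rad} and the multiplier through the Kloosterman sums $\mathfrak{s}_{\Gamma,\chi}$, without altering the structure of the estimate.
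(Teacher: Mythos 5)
There is a genuine gap at the heart of your argument: you define $F$ on $\HH^-$ as the \emph{termwise} analytic continuation of the coset sum \eq{eqn:rad} and assert that ``the convergence argument of \cite{Cheng:2011ay} applies after the substitution $\tau\mapsto-\tau$.'' That substitution does not turn the lower-half-plane sum into any Rademacher sum covered by \cite{Cheng:2011ay}: replacing $\tau$ by $-\tau$ does not commute with the slash action (each summand $\Re^{[\mu]}_{1/2}(\gamma,\tau)\,q^\mu\lvert_{1/2,\chi}\gamma$ is carried to something of a different, non-Rademacher shape), and the conditional convergence proved there rests on Kloosterman-sum cancellations organized by the specific truncation $\Gamma_{K,K^2}$ in $\HH$, not on any estimate symmetric under reflection across $\RR$. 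More fundamentally, the target function $\tilde g(-\tau)$ on $\HH^-$ is \emph{not} an analytic continuation of $f$ across the real line (which is why the paper calls the existence of such an extension ``completely non-obvious''), so no purely termwise-continuation mechanism can produce the theorem: the two sides can only be tied together by a genuine rearrangement under which one formal expression resums to different $q$-series in the two half-planes. Your final step is also circular: comparing Fourier coefficients via \eq{eqn:series:Eicdual} and \eq{eqn:series:dual} presupposes that the lower-half-plane limit exists and has a computable expansion in $q^{-1}$, which is exactly what is missing.

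The paper's proof supplies precisely this missing device. Starting from the Fourier--Bessel form \eq{radseries_bessel}, one inserts the contour-integral representation \eq{Bessel_integral} of $I_{1/2}$, interchanges the absolutely convergent $m$-sum with the $s$-integral, and recognizes the sum over the Fourier index $\nu$ as a geometric series. Its closed form $f_{c,d}(\tau,s)$, a rational function of $q$, is the single expression that makes sense on both sides of $\RR$: for $\lvert q e^{-\mu s}\rvert<1$ it expands as $f_{c,d;+}$ and unfolds back to $f(\tau)$ as in \eq{fFour}, while for $\lvert q e^{-\mu s}\rvert>1$ it expands as $f_{c,d;-}$ and, after using \eq{Bessel_integral} once more, unfolds to the Eichler integral \eq{EichFour} evaluated at $-\tau$. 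Thus $F$ is defined by this resummed contour-integral expression, not by continuing \eq{eqn:rad} term by term. Your instincts about the relevant ingredients are not wrong --- the paper itself remarks that the manipulation is technically equivalent to the dualities \eq{eqn:series:Eicdual}--\eq{eqn:series:dual}, and your observation that passing from $\SL(2,\ZZ)$ to $\Gamma_0(N)$ is essentially bookkeeping matches the paper --- but the dualities are an output of the resummation, not a substitute for it.
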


\begin{proof}
To construct $F$, let us start with the Rademacher sum which defines the mock modular form $f$. For convenience we will focus on the case $\Gamma=\SL(2,\ZZ)$. The generalization to $\Gamma=\Gamma_0(N)$ with $N>1$ is straightforward.
Using  \eq{radseries_bessel} and the following integral expression for the Bessel function (cf. Lemma 3.1 of \cite{Rhoades})
\be\label{Bessel_integral}
t^{-1/4} I_{1/2}\big({4\p\over k}\sqrt{t}\big) = \oint_{|s|=\epsilon} {ds\over 2\pi i}\, e^{st}\sum_{m\geq 0} {({4\p\over k})^{2m+{1\over2}} \over \Gamma(m+{3\over 2})}{1\over s^{m+1}},
\ee
where $\epsilon$ may be taken to be arbitrarily small, we obtain the following expression for $f := {R}^{[\mu]}_{\SL(2,\ZZ), {1\over 2},\chi}$:
\begin{gather}
\begin{split}
f(\t) &=  q^\mu
+ {(-\m)^{-1/2} \over 2}  \sum_{\substack{\n-\m\in \ZZ\\\n >0}}  \sum_{c>0}\sum_{0\leq d < c} q^\nu \ex\big(\nu{d\over c}+\m \frac{a}{c}\big) \,\chi((\begin{smallmatrix} a & b\\c&d\end{smallmatrix}))\\& \qquad\qquad\times \oint_{|s|=\epsilon} {ds\over 2\pi i}\, e^{-s\mu\nu}\sum_{m\geq 0} {({4\p\over c})^{2m+{3\over2}} \over \Gamma(m+{3\over 2})}{1\over s^{m+1}}.
\end{split}
\end{gather}
The proof of the convergence of the above sum is the same as in \cite{Cheng:2011ay}. Since the sum over $m$ is absolutely convergent, we can switch the order of the integral and the sum and obtain the succinct expression
\be
f(\t)  = q^\mu  + \oint_{|s|=\epsilon} {ds\over 2\pi i}  \sum_{c>0}\sum_{0\leq d < c} f_{c,d;+}(\t,s) G_{c,d}(s)
\ee
where
\be
G_{c,d}(s) = {(-\m)^{-1/2} \over 2}  \ex\big(\m \frac{a}{c}\big) \,\chi((\begin{smallmatrix} a & b\\c&d\end{smallmatrix})) \sum_{m\geq 0} {({4\p\over c})^{2m+{3\over2}} \over \Gamma(m+{3\over 2})}{1\over s^{m+1}}
\ee
is a $\tau$-independent factor
and
\begin{align}
f_{c,d;+}(\t,s) =\sum_{\substack{\n-\m\in \ZZ\\\n >0}}   q^\nu \ex\big(\nu{d\over c}\big) e^{-s\mu\nu}
\end{align}
captures the summation over $\nu$.
Note that $f_{c,d;+}$ is a geometric series. Consequently, if we define
\be
f_{c,d}(\t,s)  = {q^{\bar \nu} \ex\big(\bar\nu{d\over c}\big)e^{-s\mu\bar\nu}\over {1-q \ex\big(\bar{d\over c}\big)e^{-s\mu}} }
\ee
where $\bar\nu$ is as in \eq{consistency_rad},
we have
\be
f_{c,d}(\t,s)  = \begin{cases} f_{c,d;+}(\t,s) ,& \mid q e^{ -\m s} \mid <1 \\f_{c,d;-}(\t,s) ,& \mid q e^{ -\m s} \mid >1 \end{cases}
\ee
where
\be
f_{c,d;-}(\t,s) =-\sum_{\substack{\n'+\m\in \ZZ\\\n' >0}}   q^{-\nu'} \ex\big(-\nu'{d\over c}\big) e^{s\mu\nu'} .
\ee

Using the above, one finally shows that
\be
F(\t)  := q^\mu  + \oint_{|s|=\epsilon} {ds\over 2\pi i}  \sum_{c>0}\sum_{0\leq d < c} f_{c,d}(\t,s) G_{c,d}(s)
\ee
converges both for $\t\in \HH$ and $\t\in \HH^-$ (cf. Theorem 1.1 of \cite{Rhoades}). Moreover, plugging in $f_{c,d}(\t,s) =f_{c,d;-}(\t,s)$ in the lower-half plane and again using the integral expression for the Bessel function \eq{Bessel_integral}, we obtain the key statement of Theorem \ref{defined_bothupdown}.
\end{proof}

The content of the above manipulation is technically equivalent to the relations \eq{eqn:series:Eicdual} and \eq{eqn:series:dual} among the Rademacher series, but further highlights the fact that Rademacher sums lead to a natural definition of functions defined both on the upper- and lower-half plane.

We finish this subsection with some remarks.
\begin{itemize}
\item Given a false theta function, the Rademacher sum formalism does not determine a unique mock modular form as its companion in the other side of the plane. This is because the shadow map has a large kernel: the addition of a modular form to a mock modular form does not change its shadow. Since the Eichler integral, arising as Rademacher sums performed on the other side of the plane, only depends on the shadow of the mock modular form, Rademacher sums with the same shadow are extended to the same function on the other side of the plane. In other words, there can be many different ways to write a false theta function as Rademacher sums, corresponding to distinct mock modular forms with the same shadow. A closely related fact is that they also give rise to the same quantum modular form, as we have discussed in
\eq{map:mockquantum}.

\item For the main part of the paper, including all the examples we discuss in \S\ref{sec:examples}, we are interested in the special cases where the weight of the mock modular form is $k={1\over 2}$ and the group is $\Gamma=\SL(2,\ZZ)$. Moreover, the multiplier is that obtained from Weil representations discussed in \S\ref{sec:WeilRep}.
In this family of cases, the mock modular forms can be conveniently described in terms of mock Jacobi forms and the results of \cite{Sko_Thesis} imply that these vector-valued mock modular forms enjoy the property that they are uniquely determined by their polar part, i.e. their behaviour near the cusp $\tau \to i\infty$ (cf. \cite{omjt}).
Moreover, generically the Rademacher sums in this context give rise to $q$-series with transcendental coefficients which cannot be relevant as quantum invariants since the coefficients are supposed to count BPS states.  This shows that despite ambiguity one should be able to use physical and topological criteria to search for the relevant mock modular forms.

\item
The Rademacher sums discussed here have a natural interpretation in the physical setup \eqref{BlockD2S1}
illustrated in Figure~\ref{fig:halfindex}.
Indeed, recall that $\tau ={1\over 2\pi i} \log q$ is the complex structure of the boundary torus $T^2 \cong \partial \big( D^2\times_q S^1 \big)$.
However, unlike other physical systems where the full modular group $\Gamma = \SL(2,\ZZ)$ acts on $\tau$,
in our setup 2d $\N=(0,2)$ boundary theory enjoys $\SL(2,\ZZ)$ symmetry,
whereas 3d $\N=2$ theory is only invariant under the subgroup $\Gamma_{\infty}$
because one of the 1-cycles of $T^2 \cong \partial \big( D^2\times_q S^1 \big)$
is contractible in the combined 3d-2d systems. This explains the origin of the coset $\Gamma_\infty\backslash \Gamma$.

Moreover, the leading term in the sum (with $c=0$) that corresponds to the contribution of the cusp $i\infty$
can be interpreted as the partition function of a 1d effective quantum mechanics obtained from our 3d-2d system
in the limit $\tau \to i \infty$, in which $T^2$ is effectively stretched to a product of a ``long'' circle
and a ``short'' circle, with ratio of radia $\text{Im} \tau$.
The contribution of the other terms, with $c \ne 0$, then can be understood as the sum over KK modes along the ``short'' circle.

\item
For 39 of such cases there are ``optimal" natural choices for their polar parts \cite{omjt} and they appear prominently also in the context of 3-manifold.
As a result, they serve as examples of how Rademacher sums give rise to nice $q$-series both in the upper- and lower-half planes and will be discussed separately in \S\ref{sec:optimal}.
\end{itemize}

\subsection{The ``optimal" examples}
\label{sec:optimal}

In the previous subsections we have discussed how mock modular forms and Eichler integrals are related via $q\leftrightarrow q^{-1}$, from the point of view of $q$-hypergeometric functions, quantum modular forms, and Rademacher sums, respectively.
In this subsection we will give explicit examples of such mock--false pairs with the following desirable properties, as alluded to at the end of the previous subsection:
\begin{itemize}
\item They can be obtained as Rademacher sums in a particularly simple way, making them a perfect illustration of the principle explained in \S \ref{subsec:Rad}.
\item They appear in the three-manifold context, as illustrated in Table \ref{tab:opt1}--\ref{tab:opt2}.
\item As we mentioned before, the modular subtractions of generic mock modular forms are not unique and a totally systematic treatment is not yet available.
From the relation to the perturbative Chern--Simons, or  the Ohtsuki series, we are particularly interested in mock modular forms which are finite at $q\to 1$ (cf. \eq{expansionzero}). Due to the simple structure of the poles of the functions in our example,   it is possible to
show that some of them are finite in the limit $q\to 1$ and hence  have the same asymptotic expansions at the cusp $\tau \to 0$ on the nose. As a result, these mock modular forms are readily candidates for the quantum invariants of  $-{M_3}$. 
\end{itemize}

\begin{table}[h!]
\begin{center}\scalebox{0.85}{
{\renewcommand{\arraystretch}{1.2}
\begin{tabular}{ccccc}\toprule
$m+K$&$\sigma^{m+K}$&$M_3$& $H_1(M_3)$ & $r \in \sigma^{m+K}$\\\midrule
2&$\{1\}$&$M(-2;1/2,1/2,1/2)$&$\ZZ_2\oplus \ZZ_2$ &$r=1$\\\hline
3&$\{1,2\}$&$M(-2;1/2,1/2,2/3)$&$\ZZ_4$ &$r=1,2$\\\hline
4&$\{1,2,3\}$&$M(-2;1/2,1/2,3/4)$&$\ZZ_2\oplus \ZZ_2$ &$r=1,3$\\\hline
5&$\{1,2,3,4\}$&$M(-2;1/2,1/2,4/5)$&$\ZZ_4$ &$r=1,4$\\\hline
6&$\{1,\ldots,5\}$&$M(-2;1/2,1/2,5/6)$&$\ZZ_2\oplus \ZZ_2$ &$r=1,5$\\\hline
6+3&$\{1,3\}$&$M(-2;1/2,2/3,2/3)$&$\ZZ_3$ &$r=1,3$\\\hline
7&$\{1,\ldots,6\}$&$M(-2;1/2,1/2,6/7)$&$\ZZ_4$ &$r=1,6$\\\hline
8&$\{1,\ldots,7\}$&$M(-2;1/2,1/2,7/8)$&$\ZZ_2\oplus \ZZ_2$ &$r=1,7$\\\hline
9&$\{1,\ldots, 8\}$&$M(-2;1/2,1/2,8/9)$&$\ZZ_4$ &$r=1,8$\\\hline
10&$\{1,\ldots,9\}$&$M(-2;1/2,1/2,9/10)$&$\ZZ_2\oplus \ZZ_2$ &$r=1,9$\\\hline
\multirow{2}*{10+5}&\multirow{2}*{$\{1,3,5\}$}&$M(-1;1/2,1/5,1/5)$&$\ZZ_3$ &$r=1,5$\\
&&$M(-4;1/2,1/2,1/2)$& $\ZZ_2 \oplus \ZZ_2 \oplus \ZZ_5$ &$r=1,3,5$\\\hline
12&$\{1,\ldots,11\}$&$M(-2;1/2,1/2,11/12)$&$\ZZ_2\oplus \ZZ_2$ &$r=1,11$\\\hline
12+4&$\{1,4,5\}$&$M(-1;1/2,2/3,3/4)$&$\ZZ_2$ &$r=1,5$\\\hline
13&$\{1,\ldots,12\}$&$M(-2;1/2,1/2,12/13)$&$\ZZ_4$ &$r=1,12$\\\hline
\multirow{2}*{14+7}&\multirow{2}*{$\{1,3,5,7\}$}&$M(-5;1/2,1/2,1/2)$& $\ZZ_2 \oplus \ZZ_2 \oplus \ZZ_7$ & $r=1,3,5,7$\\
&&$M(-1;1/2,1/7,2/7)$&$\ZZ_7$ &$r=3,7$\\\hline
16&$\{1,\ldots,15\}$&$M(-2;1/2,1/2,15/16)$&$\ZZ_2\oplus \ZZ_2$ &$r=1,15$\\\hline
18&$\{1,\ldots,17\}$&$M(-2;1/2,1/2,17/18)$&$\ZZ_2\oplus \ZZ_2$ &$r=1,17$\\\hline
\multirow{3}*{18+9}&\multirow{3}*{$\{1,3,5,7\}$}&$M(-1;1/2,1/3,1/9)$&$\ZZ_3$ &$r=1,5$\\
&&$M(-2;1/2,1/3,2/3)$& $\ZZ_9$ &$r=1,3,5,7$\\
&&$M(-6;1/2,1/2,1/2)$& $\ZZ_2 \oplus \ZZ_2 \oplus \ZZ_9$ &$r=1,3,5,7$\\\hline
25&$\{1,\ldots,24\}$&$M(-2;1/2,1/2,24/25)$&$\ZZ_4$ &$r=1,24$\\\hline
\multirow{2}*{22+11}&\multirow{2}*{$\{1,3,5,7,9,11\}$}&$M(-7;1/2,1/2,1/2)$& $\ZZ_2 \oplus \ZZ_2 \oplus \ZZ_{11}$ &$r=1,3,5,7,9,11$\\
&&$M(-1;1/2,1/11,4/11)$&$\ZZ_{11}$ &$r=7,11$\\\hline
30+6,10,15&$\{1,7\}$&$\Sigma(2,3,5)$& 0& $r=1$\\\hline
\multirow{2}*{30+15}&\multirow{2}*{$\{1,3,\ldots,15\}$}&$M(-9;1/2,1/2,1/2)$& $\ZZ_2\oplus\ZZ_2\oplus\ZZ_{15}$ &$r=1,3,\ldots,15$\\
&&$M(-1;1/2,2/5,1/15)$&$\ZZ_{5}$ &$r=7,11$\\\hline
46+23&$\{1,3,\ldots,23\}$&$M(-13;1/2,1/2,1/2)$& $\ZZ_2 \oplus \ZZ_2 \oplus \ZZ_{23}$ &$r=1,3,\ldots,23$\\
\bottomrule
\end{tabular}}}\caption{\label{tab:opt1}Optimal mock Jacobi thetas  of Niemeier type and examples of the relevant 3-manifolds. }
\end{center}
\end{table}

\begin{table}[h!]
\begin{center}\scalebox{0.85}{
{\renewcommand{\arraystretch}{1.2}
\begin{tabular}{ccccc}\toprule
$m+K$&$\sigma^{m+K}$&$M_3$& $H_1(M_3)$ & $r \in \sigma^{m+K}$\\\midrule
6+2&$\{1,2,4\}$&$M(-2;1/2,1/2,1/3)$&$\ZZ_8$ &$r=1,2,4$\\\hline
10+2&$\{1,2,3,4,6,8\}$&$M(-2;1/2,1/2,3/5)$&$\ZZ_8$ &$r=1,4,6$\\\hline
\multirow{2}*{12+3}&\multirow{2}*{$\{1,2,3,5,6,9\}$}&$M(-1;1/3,1/3,1/4)$&$\ZZ_3$ &$r=1,9$\\
&&$M(-2;1/2,1/2,1/4)$& $\ZZ_2 \oplus \ZZ_2 \oplus \ZZ_3$ &$r=1,3,5,9$\\\hline
\multirow{3}*{15+5}&\multirow{3}*{$\{1,2,4,5,7,10\}$}&$M(-1;1/2,1/3,1/10)$& $\ZZ_4$& $r=1,4$\\
&&$M(-1;1/3,1/5,2/5)$&$\ZZ_5$ &$r=4,10$\\
&&$M(-3;1/2,1/2,1/3)$& $\ZZ_{20}$ &$r=1,2,4,5,10$\\\hline
18+2&$\{1,\ldots,8,10,12,14,16\}$&$M(-2;1/2,1/2,7/9)$&$\ZZ_8$ &$r=1,8,10$\\\hline
20+4&$\{1,3,4,7,8,11\}$&$M(-1;1/2,1/4,1/5)$&$\ZZ_2$ &$r=1,11$\\\hline
21+3&$\{1,\ldots,6,8,9,11,12,15,18\}$&$M(-2;1/2,1/2,4/7)$ & $\ZZ_8$ &$r=1,6,8,15$\\\hline
24+8&$\{1,2,5,7,8,13\}$&$M(-1;1/2,1/3,1/8)$&$\ZZ_2$ &$r=1,7$\\\hline
\multirow{2}*{28+7}&$\{1,2,3,5,6,7,9,$
&\multirow{2}*{$M(-1;1/4,1/7,4/7)$}&\multirow{2}*{$\ZZ_7$} &\multirow{2}*{$r=13,21$}\\
&$10,13,14,17,21\}$&&\\\hline
30+3,5,15&$\{1,3,5,7,9,15\}$&& &\\\hline
\multirow{2}*{33+11}&$\{1,2,4,5,7,8,10,$
&$M(-5;1/2,1/2,1/3)$& $\ZZ_{44}$ &$r={\rm all}$\\
&$11,13,16,19,22\}$&$M(-1;1/3,1/11,6/11)$&$\ZZ_{11}$ &$r=16,22$\\\hline
\multirow{2}*{36+4}&$\{1,3,4,5,7,8,11,$
&& &\\&$12,15,16,19,23\}$\\\hline
42+6,14,21&$\{1,5,11\}$&$\Sigma(2,3,7)$& 0& $r=1$\\\hline
60+12,15,20&$\{1,2,7,11,13,14\}$&$\Sigma(3,4,5)$& 0& $r=13$\\\hline
70+10,14,35&$\{1,3,9,11,13,23\}$&$\Sigma(2,5,7)$& 0& $r=11$\\\hline
78+6,26,39&$\{1,5,7,11,17,23\}$&$\Sigma(2,3,13)$& 0& $r=7$\\\bottomrule\end{tabular}}}
\end{center}\caption{\label{tab:opt2}Optimal mock Jacobi thetas of non-Niemeier type and examples of the relevant 3-manifolds.}
\end{table}

This class of 39 examples is studied and classified in \cite{omjt} as the only {\em optimal mock Jacobi forms of weight one} with non-transcendental coefficients.
To explain what they are, recall that so far we always encounter  false theta functions that are Eichler integrals of weight 3/2 unary theta functions transforming according to Weil representations (cf. \S\ref{sec:WeilRep} and \S\ref{sec:examples}).
Following the quantum modular form analysis in \S\ref{sec:QMF},  on the other side of the plane they
  correspond to weight 1/2 mock modular forms which are vector-valued and transforming according to the dual Weil representations.
A succinct way to say this is they are mock Jacobi forms of weight one. (Everywhere in the present paper (mock) Jacobi forms refer to those transforming under the whole modular group $\SL(2,\ZZ)$, and not just   some proper subgroup of it.)  In other words, suppose the homological blocks of a three-manifold $M_3$ are given in terms of false theta functions of index $m$ (cf. \eq{def:partial1} and \eq{fouriercoeff_foldedpartial}), then from the analysis of the previous subsections we expect a certain index $m$ mock Jacobi form to be relevant for $-{M_3}$.

Given a  vector-valued mock modular form $h=(h_r)$, $r=1,\dots, m-1$, with completion $\hat h=(\hat h_r)$ (cf. Definition \ref{def:mock}), we say that its combination with the index $m$ theta functions  (cf. \eq{def:usualtheta})
  \be\label{def:mockJac}
  \psi (\t,z)= \sum_{r=1,\dots ,m-1} h_r(\t)  \left(\theta_{m,r}-\theta_{m,-r}\right) (\t,z)
  \ee
is a {\em mock Jacobi form} of index $m$ and weight one if its non-holomorphic completion
 \be
  \hat \psi (\t,z)= \sum_{r=1,\dots ,m-1}{ \hat h}_r(\t)  \left(\theta_{m,r}-\theta_{m,-r}\right) (\t,z)
  \ee
  transforms as a usual Jacobi form (of index $m$ and weight one). We refer to, for instance \cite{eichler_zagier} and \cite{Dabholkar:2012nd,omjt}, for background on  Jacobi forms and mock Jacobi forms repsectively. Note that the opposite sign in the theta function factor reflects the anti-invariance of Jacobi forms of odd weights under $z\leftrightarrow -z$.

  From a number theory point of view,  weight one mock Jacobi forms are rather special.
First, in a sense we will make precise shortly, almost all of them have transcendental coefficients \cite{MR2726107} and are therefore not related to any counting problem in topology and physics.

  Second, as we have seen in the previous subsections, an important property
of (mock) modular forms is their behavior at the cusps $\QQ\cup \{i\infty\}$. At weight one mock Jacobi form is in fact uniquely determined by its  poles \cite{Sko_Thesis,Dabholkar:2012nd}.  The {\em optimal} choice of the poles, for a given index $m$, is given by
  \be\label{def:optpoles}
  q^{1\over 4m}h_r=  O(1).
  \ee

In \cite{omjt} it was shown that the space of weight one mock Jacobi forms, of {any} index $m\in \ZZ_{>0}$ and having  1. the optimal poles \eq{def:optpoles}
and 2. non-transcendental coefficients, is surprisingly finite-dimensional (34-dimensional to be precise). Moreover, there are 39 special vectors in this 34-dimensional space, distinguished by their symmetries, which span the space. (Five of them are not linearly independent of the rest.)
They are labelled by the same pair $(m,K)$ that we used in \S \ref{sec:WeilRep} to define sub-representations of Weil representations.  Let's denote the corresponding mock forms by
\be\label{not:optjac} \psi^{m+K} = \sum_{r=1,\dots ,m-1} h_r^{m+K}(\t)  (\theta_{m,r}-\theta_{m,-r}).\ee Then the group $K$ dictates the symmetry of $\Psi^{m+K}$ that it is invariant under $\theta_{m,r}\mapsto \theta_{m,ra(n)}$ for every $n\in K$ (cf. \eq{Om_action} -- \eq{def:an}).   In particular, since $a(m)=-1$ we will never have a non-vanishing $\psi^{m+K}$ unless $m\not \in K$.
In fact, quite remarkably they are in one-to-one correspondence with the 39 pairs $(m,K)$ with $m\not \in K$ which define discrete subgroups $\Gamma^{m+K}$ of $\SL(2,\RR)$ with the property that $\Gamma^{m+K}\backslash \HH$ is a genus zero Riemann surface (minus finitely many points).  We refer to \cite{omjt} and \cite{Anagiannis:2018jqf} for the details.

From the above classification, we obtain 39 distinguished mock Jacobi forms $ \psi^{m+K}$, or equivalently 39 vector-valued mock modular forms $h^{m+K}=(h_r^{m+K})$, with independent components given by $r\in \sigma^{m+K}$ (cf. \S\ref{sec:WeilRep}). They have three further striking number theoretic properties \cite{omjt} of great importance to the problems at hand\footnote{Another noteworthy property, though not directly related to the present application, is the fact that all Ramanujan's mock theta functions (up to modular forms) can be expressed in terms of these 39 mock Jacobi forms.}: 
\vspace{-5pt}
\begin{enumerate}
\item Integral coefficients;\vspace{-6pt}
\item Rademacher sums;\vspace{-6pt}
\item Theta function shadows.
\end{enumerate}
\vspace{-5pt}
Although we only demanded the coefficients to be non-transcendental, with a suitable normalization they are in fact all integral! The first dozens of coefficients can be found in \cite{omjt} and \cite{MUM}.
The properties of the coefficients further divide the 39 cases into two groups: the forms in the first group, called the {\em Niemeier type}, have {\em nonnegative}  coefficients of $ h_r^{m+K}$ for all non-polar terms in the $q$-expansion, and are in one-to-one correspondence with the 23 Niemeier lattices and play the role of the graded dimensions of the finite group modules for umbral moonshine  \cite{UM,MUM}. The second group contains the other 16 cases which have both positive and negative Fourier coefficients. The corresponding $m+K$ of the two groups are tabulated in Table \ref{tab:opt1} and \ref{tab:opt2} respectively.
Notice that in the first group, not all of them correspond to irreducible Weil representations.

The second property states that they can be constructed as vector-valued Rade-macher sums, whose simpler, single-valued version we have reviewed in \S\ref{subsec:Rad} as a way to interpolate between upper- and lower-half plane. This means that the discussion \S\ref{subsec:Rad} is appliable for these functions, and they are related to the Eichler integral of their shadows, via the $q\leftrightarrow q^{-1}$ transformation  discussed in \S\ref{sec:QMF} and \S\ref{subsec:Rad}. The good news for us is then that for all these 39 mock Jacobi forms, the shadows are given by weight 3/2 unary theta functions of the form \eq{unary theta}, and their Eichler integrals are precisely the false theta functions $\Psi^{m+K}_r$ that we encounter. 
The precise form of the shadows can be found in \cite{omjt} and \cite{MUM}.
From the dominant role of the Eichler integrals $\Psi^{m+K}_r$ of them \eq{fouriercoeff_foldedpartial} in the homological blocks for Seifert manifolds with three singular fibres, as demonstrated in \S\ref{sec:examples}, we expect these 39 examples to be relevant for the same manifolds with reversed orientation.
Indeed, for almost all of them we can easily find three-manifolds for which the homological blocks are given by the corresponding $\Psi^{m+K}_r$ for some $r$. We   tabulate some of them in Table \ref{tab:opt1}--\ref{tab:opt2}.

Finally, we comment that
\be\label{list:finiteqto1}
\lim_{\tau \to 0}h^{m+K}_r(\tau) =O(1)
\ee
for the following $(m,K)$ and $r$:
\begin{itemize}
\item 6+2 , $r=1$
\item 10+2 , $r=1,3$
\item 18+2 , $r=1,3,5,7$,
\end{itemize}
which can easily be verified from the known behaviour  \eq{def:optpoles} of $h^{m+K}_r$ near $\tau \to i\infty$ and by computing the $S$-matrix \eq{Smatrix}.
For instance,
$h^{6+2}_1(\tau) = - q^{-{1\over 24}} f(q)$ is given by the order three mock theta function of Ramanujan $f(q)$, which as we have commented in \eq{6p2oddroot}--\eq{6p2evenroot} has a finite value at $q\to 1$.
As mentioned in the beginning of the subsection, this fact gives these mock modular forms the distinguished status that their asymptotic expansion near $\tau \to 0$ coincides with the corresponding Ohtsuki series on the nose.

\section{Beyond false}
\label{sec:4sing}

In this section, we study Seifert manifolds with four singular fibers.
It turns out that structure of the homological blocks are very analogous as the cases with three singular fibres. 
The novelty is that they have the following ``building blocks'', playing a similar role as the false theta functions $\Psi^{m+K}_r$ in the previous cases, are a mix between Eichler integrals of weight 1/2 and weight 3/2 theta functions (cf. \eq{weight1/2eich})
\be
\begin{gathered}\label{def:4singbuilding}
B_{m,r}(\tau) \equiv \frac{1}{2m}\bigg[ \Phi_{m,r}(\tau) - r \Psi_{m,r}(\tau) \bigg]\\
B^{m+K}_r(\tau) =2^{|K|-1} \sum_{r' \xmod 2m} P^{m+K}_{rr'}B_{m,r'}(\tau). \\
\end{gathered}
\ee
We provide a non-spherical example $M(-2;\frac{1}{2},\frac{2}{3},\frac{2}{5},\frac{2}{5})$ and compute its asymptotic expansion by exploiting its modular-like properties.

\subsubsection{The building blocks}

We proceed by analogy with 3-fiber examples to identify the ``building blocks.'' For Brieskorn homology spheres $\Sigma(p_1,p_2,p_3)$, their WRT invariants decompose into false theta functions \cite{hikami2011decomposition, hikami1}:
\be\label{eqn:WRT3sing}Z_{CS} (\Sigma(p_1,p_2,p_3)) = \frac{q^{-\phi/4}}{i\sqrt{8k}} \bigg[ \sum_{r=1}^{m-1} \chi^{(1,1,1)}_{2m}(r) \Psi_{m,r}(\tau) + H\Big(-1+\sum_{j=1}^3 \frac{1}{p_j}\Big)q^{1/120}\bigg],\ee
where $m = \prod_j p_j$, and $H$ is the heaviside step-function. The $2m$-periodic function $\chi^{\vec{l}}_{2m}(r)$ is defined from $n$-dimensional vectors $\vec{l} = (l_1, \cdots, l_n)$ and $\vec{p} = (p_1, \cdots, p_n)$ satisfying $0 < l_j < p_j$: 
$$\chi^{\vec{l}}_{2m}(r) = \begin{cases} -\displaystyle \prod_{j=1}^{n} \epsilon_j & \text{if} \ \ r \equiv m\bigg(1+\sum_j \frac{\epsilon_j l_j}{p_j} \bigg) \xmod 2m, \ \ \text{where} \ \ \epsilon_j = \pm1 \\ 0 & \text{otherwise.} \end{cases}$$
Thus, $\chi^{(1,1,1)}_{2m}(r)$ in \eq{eqn:WRT3sing} is given by $n=3$, $\vec{l} = (1,1,1)$, and $\vec{p} = (p_1,p_2,p_3)$. One can observe that partial theta functions play the role of basic building blocks for the WRT invariants of Brieskorn homology spheres, with the latter determined by $\chi^{(1,1,1)}_{2m}(r)$.

It was shown in \cite{Hikami4sing1, Hikami4sing2},  for four-singularly fibered Seifert homology spheres $\Sigma(p_1,p_2,p_3,p_4)$ the quantity $Z_{CS} (M_3)$ can be expressed in terms of partial theta functions and a weight $1/2$ Eichler integral. Then, one can similarly extract basic building blocks by pulling out $\chi_{2m}^{(p_1-1,1,1,1)}(r)$\footnote{Spherical Seifert manifolds are uniquely determined by the orders of their singular fibers. In our convention, their Euler characteristic is $-1/\prod_{j}p_j$: following \cite{NR}.}
\begin{multline}
Z_{CS} (\Sigma(p_1,p_2,p_3,p_4)) = \\
\frac{q^{-\phi/4}}{i\sqrt{8k}} \bigg[ \sum_{r = 1}^{m-1} \chi_{2m}^{(p_1-1,1,1,1)}(r) \frac{1}{2m}\Big( \Phi_{m,r}(\tau) -r \Psi_{m,r}(\tau) \Big) \\
+ H\Big(-1+\sum_j \frac{1}{p_j}\Big) \Psi_{m,(2m - \sum_{j} m/pj)}(\tau) \bigg],
\label{eqn:4singBuilding}
\end{multline}
where $\Phi_{m,r}(\tau)$ are the weight $1/2$ Eichler integrals \eq{def:eichler} of the weight 1/2 theta functions  (cf. \eq{unary theta})
$$\theta^0_{m,r}(\tau):= \theta_{m,r}(\t,z) \lvert_{z=0}\; =\displaystyle\sum_{\substack{\ell\in\ZZ\\\ell=r\xmod 2m}} \, q^{\ell^2/4m}.$$ 
Explicitly, we have 
\begin{gather}\label{weight1/2eich}
\begin{split}
\Phi_{m,r}(\tau) = \sum_{n \geq 0} n \psi'^{(r)}_{2m}(n) q^{n^2/4m}, \\
\text{where} \quad \psi'^{(r)}_{2m}(n) = \begin{cases} 1  & \text{if} \quad n \equiv \pm r \mod 2m \\ 0 & \text{otherwise.} \end{cases}.
\end{split}
\end{gather}
Similar to \eq{def:coeff_partial1}, in terms of projectors \eq{def1:proj} we simply have $\psi'^{(r)}_{2m}(n) =2 (P_m^+(m))_{r,n}$.

The expression \eq{eqn:4singBuilding} can be best understood in comparison with three-singular fiber cases:
\begin{equation}
\begin{aligned}
\chi_{2m}^{(p_1-1,1,1,1)}(r) &\longleftrightarrow \chi^{(1,1,1)}_{2m}(r)      \\
\frac{1}{2m}\Big( \Phi_{m,r}(\tau) - r \Psi_{m,r}(\tau) \Big) &\longleftrightarrow \Psi_{m,r}(\tau) \\
H\Big(-1+\sum_j \frac{1}{p_j}\Big) \Psi_{m,(2m - \sum_{j} m/pj)}(\tau)  &\longleftrightarrow H\Big(-1+\sum_j \frac{1}{p_j}\Big) q^{1/120}
\end{aligned}
\end{equation}

Consequently, we propose that WRT invariants of Seifert manifolds with four singular fibers (not necessarily integral homology spheres) decompose into the following building blocks:
$$B_{m,r}(\tau) \equiv \frac{1}{2m}\bigg[ \Phi_{m,r}(\tau) - r \Psi_{m,r}(\tau) \bigg].$$
Note that while $\Psi_{m,r} = -\Psi_{m,-r}$, we have $B_{m,r} = B_{m,-r}$. As a result, while $m\not\in K$ for the pair $m+K$ relevant for the examples in \S\ref{sec:examples}, in this case we must have $m\in K$.
The modular-like property of $B_{m,r}$ follows from those of $\Psi_{m,r}$ and $\Phi_{m,r}$:
\be \label{eqn:psiModularTransform}
\begin{gathered}
\frac{1}{\sqrt{k}}\Psi_{m,r}(1/k) + \frac{1}{\sqrt{i}}\sum_{r'=1}^{m-1}\sqrt{\frac{2}{m}} \sin \frac{rr' \pi}{m}  \Psi_{m,r'}(-k)  = \sum_{n \geq 0}\frac{c_n}{n!} \bigg( \frac{\pi i }{2m} \bigg)^n k^{-n-\tfrac{1}{2}}, \\
\Psi_{m,r}(-k) = \left( 1- \frac{r}{m} \right) e^{-2 \pi i k r^2/4m}, \quad \text{where} \quad \frac{\sinh (m-r)z }{\sinh mz } = \sum_{n = 0}^{\infty} \frac{c_n}{2n!} z^{2n}.
\end{gathered}
\ee
\begin{multline}
\frac{1}{\sqrt{k}}\Phi_{m,r}(1/k) + \frac{k}{\sqrt{i^3}}\sum_{r'=1}^{m-1}\sqrt{\frac{2}{m}} \frac{r'(m-r')}{m} \cos \frac{rr' \pi}{m} e^{- 2 \pi i k \frac{(r')^2}{4 m}} \\
= \frac{m k}{\pi i} \sum_{n=0}^{\infty} \frac{c'_n}{n!} \bigg( \frac{\pi i }{2m} \bigg)^n k^{-n-\tfrac{1}{2}}, \quad \text{where} \quad \frac{\partial}{\partial r} \frac{\sinh (m-r)z }{\sinh mz } = \sum_{n = 0}^{\infty} \frac{c'_n}{2n!} z^{2n}. \label{eqn:psiprime}
\end{multline}
The above property can be employed to compute the transseries expression of $Z_{CS} (M_3)$. 

Note that the relation between characters of singlet $(1,p)$ logarithmic vertex algebras and homological blocks of Seifert manifolds with three singular fibers persists to the present case. One observes a close relation between characters of singlet $(p_+,p_-)$ vertex algebras \cite{BriKasMil,MR3203897} and the homological blocks of Seifert manifolds with four singular fibers studied in this section, strengthening the observed connection between logarithmic algebras.

It would be also interesting to explicitly construct building blocks for generic number of fibers, and we leave it for future works.

\subsubsection{Example: $M(-2;\frac{1}{2},\frac{2}{3},\frac{2}{5},\frac{2}{5})$}

We apply the modularity dictionary to homological blocks of a non-spherical, four-singularly fibered Seifert manifold. The Seifert manifold $M(-2;\frac{1}{2},\frac{2}{3},\frac{2}{5},\frac{2}{5})$ has the following plumbing graph:

\begin{equation}
\begin{array}{cccc}
& \overset{\displaystyle{-2}}{\bullet} & & \\
& \vline & & \\
& \overset{\displaystyle{-2}}{\bullet} & & \\
& \vline & & \\
\overset{\displaystyle{-2}}{\bullet}
\frac{\phantom{xxx}}{\phantom{xxx}}
& \underset{\displaystyle{-2}}{\bullet} &
\frac{\phantom{xxx}}{\phantom{xxx}}
\overset{\displaystyle{-3}}{\bullet} &
\frac{\phantom{xxx}}{\phantom{xxx}}
\overset{\displaystyle{-2}}{\bullet} \\
& \vline & & \\
& \underset{\displaystyle{-3}}{\bullet} & & \\
& \vline & & \\
& \underset{\displaystyle{-2}}{\bullet} & & \\
\end{array}
\end{equation}
As before, we compute homological blocks and $S^{(A)}$:
\be
\begin{gathered}
\text{CS}(a) = \begin{pmatrix} 1 & \frac{1}{5} & \frac{9}{5}\end{pmatrix}, \quad S^{(A)} = \frac{1}{\sqrt{5}}
\begin{pmatrix}
1 & 1 & 1 \\
2 & \frac{-1-\sqrt{5}}{2} & \frac{-1+\sqrt{5}}{2} \\
2 & \frac{-1+\sqrt{5}}{2} & \frac{-1-\sqrt{5}}{2}
\end{pmatrix}, \\
\widehat{Z}_0(q) = q^{7/2}(1-q^{11}+q^{14}-q^{19}-q^{33}+q^{40}-q^{45}+2q^{53}+q^{74}+\cdots), \\
\widehat{Z}_1(q) = 0, \\
\widehat{Z}_2(q) = 2q^{93/10}(-1+q^{15}+q^{25}-q^{50}-2q^{65}+2q^{120}-2q^{165}-3q^{190}+\cdots) .
\end{gathered}
\ee
By the prescription of modularity dictionary, we can easily see that $m =30$. Then, homological blocks correspond to the Weil representation $\sigma = 30+5,6,30$, whose projector is explicitly given by
$$P^{30+5,6,30} = P^{+}_{30}(5) P^{+}_{30}(6) P^{-}_{30}(15),$$
which leads to, using \eq{def:4singbuilding}, 
\be \begin{aligned} 
B^{30+5,6,30}_7(\tau) &= \left(B_{30,7}-B_{30,13} + B_{30,17} - B_{30,23}\right)(\tau), \\
B^{30+5,6,30}_5(\tau) &= \left(B_{30,5}-B_{30,23}\right)(\tau).
\end{aligned}
\ee
In terms of these, the homological blocks read
\begin{equation}
\begin{aligned}
\widehat{Z}_0(q) &= q^{-109/120}\left( \Psi_{30,23}(\tau) - B^{30+5,6,30}_7(\tau) \right)  \\
\widehat{Z}_1(q) &= 0 \\
\widehat{Z}_2(q) &= 2q^{-109/120}B^{30+5,6,30}_5(\tau).
\end{aligned}
\label{eqn:Zhat_decomp}
\end{equation}
By the modular-like properties of the building blocks, we obtain the transseries summarized in Table~\ref{table:2355}, where an overall factor of $-i q^{-109/120}/2\sqrt{2}$ is omitted as usual.

\begin{table}[h]\begin{center}\scalebox{0.9}{
\begin{tabular}{c c c c}
CS action & stabilizer & type & transseries \vspace{3pt}
\\ \toprule
$0$ & $SU(2)$ & central & $e^{2 \pi i k \cdot 0} \bigg( \frac{4 \pi i}{5}k^{-3/2} + \mathcal{O}(k^{-5/2}) \bigg) $ \\[1.5ex]
$\frac{1}{5}$ & $U(1)$ & abelian & $e^{2 \pi i k \frac{1}{5}} \bigg( \frac{5-\sqrt{5}}{6} k^{-1/2} + \mathcal{O}(k^{-3/2}) \bigg) $ \\[1.5ex]
$\frac{9}{5}$ & $U(1)$ & abelian & $e^{2 \pi i k \frac{9}{5}} \bigg( \frac{5+\sqrt{5}}{6} k^{-1/2} + \mathcal{O}(k^{-3/2}) \bigg) $ \\[1.5ex]
$-\frac{1}{120}$ & $\pm1$ & non-abelian, real & $e^{-2 \pi i k \frac{1}{120}} \frac{e^{3 \pi i/4}}{150\sqrt{15}}(-25+5\sqrt{5}+24\sqrt{3}\cos\frac{\pi}{10})$ \\[1.5ex]
$-\frac{4}{120}$ & $\pm1$ & non-abelian, real & $-e^{-2 \pi i k \frac{4}{120}} \frac{e^{3 \pi i/4}}{4}(1+\sqrt{5})$ \\[1.5ex]
$-\frac{16}{120}$ & $\pm1$ & non-abelian, real & $e^{-2 \pi i k \frac{16}{120}} \frac{e^{3 \pi i/4}}{4}(1-\sqrt{5})$ \\[1.5ex]
$-\frac{25}{120}$ & $\pm1$ & non-abelian, real & $e^{-2 \pi i k \frac{25}{120}} \frac{4 e^{3 \pi i/4}}{5} \sqrt{1-\frac{2}{\sqrt{5}}} $ \\[1.5ex]
$-\frac{40}{120}$ & $\pm1$ & non-abelian, real & $e^{-2 \pi i k \frac{40}{120}} e^{3 \pi i/4}$\\[1.5ex]
$-\frac{49}{120}$ & $\pm1$ & non-abelian, real & $-e^{-2 \pi i k \frac{49}{120}} \frac{e^{3 \pi i/4}}{30 \sqrt{15}}(25+5\sqrt{5}+24 \sqrt{3}\cos\frac{3\pi}{10})$ \\[1.5ex]
$-\frac{73}{120}$ & $\pm1$ & non-abelian, real & $-e^{-2 \pi i k \frac{73}{120}} \frac{8 e^{3 \pi i/4}}{5\sqrt{5}}\cos\frac{3\pi}{10}$ \\[1.5ex]
$-\frac{76}{120}$ & $\pm1$ & non-abelian, real & $e^{-2 \pi i k \frac{76}{120}} \frac{e^{3 \pi i/4}}{4}(1-\sqrt{5})$ \\[1.5ex]
$-\frac{81}{120}$ & $\pm1$ & non-abelian, real & $e^{-2 \pi i k \frac{81}{120}} \frac{e^{3 \pi i/4}}{3\sqrt{3}}(1-\sqrt{5})$ \\[1.5ex]
$-\frac{97}{120}$ & $\pm1$ & non-abelian, real & $e^{-2 \pi i k \frac{97}{120}} \frac{8 e^{3 \pi i/4}}{5\sqrt{5}} \cos\frac{\pi}{10}$ \\[1.5ex]
$-\frac{105}{120}$ & $\pm1$ & non-abelian, real & $e^{-2 \pi i k \frac{105}{120}} \frac{4e^{3 \pi i/4}}{3\sqrt{3}}$ \\[1.5ex]
$-\frac{9}{120}$ & $\pm1$ & non-abelian, complex & $0$ \\[1.5ex]
$-\frac{64}{120}$ & $\pm1$ & non-abelian, complex & $0$ \\[1.5ex]
$-\frac{100}{120}$ & $\pm1$ & non-abelian, complex & $0$\vspace{2pt}\\ \bottomrule
\end{tabular}}
\caption{Transseries and classification of flat connections on $M(-2;\tfrac{1}{2},\tfrac{2}{3},\tfrac{2}{5},\tfrac{2}{5})$.}
\label{table:2355}
\end{center}
\end{table}

\section{Discussions and open questions}

In this paper we discussed the following surprising features of half-indices of certain ${\cal N}=2$ 3d supersymmetric quantum field theories, which are also
the homological blocks \cite{GPPV} of a family of three-manifolds.
In the first part of the paper we discussed three different $\SL(2,\ZZ)$ (projective) representations  we encountered in the problem and make use of them to compute topologically and physically interesting quantities.
Second we propose the relevance of the  false--mock pair for our problem, making use of their relation to quantum modular forms.

We will end the main part of the paper with a list of open questions and future directions:
\begin{itemize}

\item 
Though the relevance of the false--mock pair is manifest, there are still a few important puzzles remaining. Just purely from the number theory point of view there are two ambiguities in identifying the correct $\widehat Z_a$ on the mock side:
\begin{enumerate}\vspace{-5pt}
\item 
As summarised in
Figure \ref{diagram_mockquantumfalse}, the Eichler integral associated to a mock function only depends on its shadow and is therefore insensitive to the addition of a purely modular form. Relatedly, at the end of \S\ref{subsec:qhyper}, we have seen that from the $q$-hypergeometric perspective there are various ambiguities when going between upper- and lower-half planes: two $q$-hypergeometric series can define the same function on one side and different functions on the other.
This is related to the so-called ``expansion of zero'' described by Rademacher\cite{rad_exp_zero}. In our context of weight one Jacobi forms, the ambiguity is equivalent to the ambiguity of specifying the poles of the function (see \cite{Sko_Thesis} and \cite{omjt}).
\item Moreover, once a mock modular form $f$ is chosen, one still need to choose the modular subtraction $G_x$ at the cusp $x$ (cf. \eq{asymp_mock}). While for comparison with perturbative  Chern--Simons one only needs the subtraction for the cusp $\tau \to 0$, in order to literally compare with all the WRT invariants one would need the subtraction at all roots of unity of the form $\tau \to {1\over k}$. (Of course, this is the description of the ambiguity when considering the mock form $f$ as a single-valued mock modular form for a subgroup $\Gamma < \SL(2,\ZZ)$. When described in terms of vector-valued mock modular forms for $\SL(2,\ZZ)$, the corresponding ambiguity is that of specifying the  modular subtractions for all components of the vector-valued function.)
\end{enumerate}
Recall that, while one does not necessarily need to care about these ambiguities for the pupose of reproducing the perturbative Chern--Simons data, the actual $q$-series are physically very  meaningful! 
Clearly, in order to make general predictions for the homological blocks for general three-manifolds and in order to better understand the general modularity structure of 3d ${\cal N}=2$ theories, it is crucial to better understand the above ambiguities, and hopefully to find sufficiently powerful criteria from physics and topology to eliminate the ambiguities.
Also from the context of this work, the relation to Habiro ring \cite{MR2105705} appears to be an important lead.

\item
It would be interesting to compute, {\it e.g.} via resurgence \cite{Gukov:2016njj},
the $q$-series invariants $\widehat{Z}_a (M_3)$ for hyperbolic 3-manifolds and
test the conjecture in \S\ref{sec:logCFT}, namely whether in such cases $\widehat{Z}_a (M_3)$
are related to characters of logarithmic vertex algebras which are not $C_2$-cofinite.

\item
In this work we note the important role played by the Weil representations, labelled by a pair $m$ and $K\subset \Ex_m$, in our problem. While $m$ can phenomelogically be determined by the topological data \eq{eqn:mformula}, we do not know what the explicit relation between $K$ and 3-manifold topology is. 
More conceptually, it would be great to understand the origin of these Weil representations in our topological/physical problem. 
Furthermore, 
in \S\ref{sec:mod} we discussed three different $S$-matrices and three different ``$\SL(2,\ZZ)$ representations'' that play an important role in our story. 
We do expect these representations to be inter-related and one obvious open problem is to understand how exactly they are related.

\item In our story, mock modular forms and interesting modular structures
emerge from the physics of 3d $\N=2$ theories and BPS states.
In particular, our key players $\widehat{Z}_a (q)$ are ``counting'' BPS states.
It would be interesting to find relations (dualities) to other physics
problems where similar modular structures appeared,
{\it e.g.} \cite{Troost:2010ud,Alim:2010cf,Cheng:2014owa,Dabholkar:2012nd,Alexandrov:2018lgp}.

\item 
In this work we have mainly focused on examples of Seifert manifolds with $\ell$ singular fibers, where $\ell=3$ and the homological blocks are given by false thetas. These type of functions also appear as characters of modules of singlet $(1,p)$ logarithmic vertex algebras. In \S\ref{sec:4sing} we briefly discussed the case of $\ell=4$, where the homological blocks are composed of building blocks which contain false theta functions corresponding to a mix of quantum modular forms of weight 3/2 and weight 1/2. Moreover, the relation to logarithmic vertex algebras persists and this time the blocks appear to be related to characters of $(p_+,p_-)$ singlet vertex algebras. However, there is clearly much more to explore. One interesting question is to identify the building blocks of homological blocks for Seifert manifolds with general $\ell$, and the potential relation to logarithmic vertex algebras. Next, it would be intestering to explore the modularity structure of homological blocks for plumbed 3-manifolds which are non-Seifert. Third, it is very conceivable that there exists a nice relation between higher rank invariants and higher-depth quantum modular forms \cite{BriKasMil}, generalizing the $\SL(2,\CC)$ story which is the focus of the present paper. Finally, one may also investigate the modular-like properties of the half-indices arising from 3d theories that are not coming from three-manifolds. 

\item The examples given in \S\ref{sec:optimal} suggest the relevance of the 39 optimal mock Jacobi theta functions in our topological and physical problems. At the same time, these mock functions also play the role of the graded dimensions of finite group representations in the context of the still mysterious umbral moonshine \cite{UM,MUM}. One natural question is whether there is a relation between our setup and the moonshine finite groups. 

\end{itemize}

\section*{Acknowledgements}
We thank
D.~Adamovic, T.~Creutzig, T.~Dimofte, J. Duncan, P.~Etingof,
B.~Feigin, D.~Kazhdan, S.~L\"obrich, C.~Manolescu, D.~Pei, P.~Putrov,
L.~Rolen, C.~Schweigert, C.~Vafa and D.~Zagier
for helpful discussions.
The work of M.C. is supported by ERC starting grant H2020 ERC StG \#640159.
S.C. and S.G. are supported by the Walter Burke Institute for Theoretical Physics,
by the U.S. Department of Energy, Office of Science, Office of High Energy Physics, under Award No.\ DE{-}SC0011632,
and by the National Science Foundation under Grant No.~NSF DMS 1664240.
The work of S.C. is also supported in part by Samsung Scholarship.
The work of S.G. is also supported in part by Laboratory of Mirror Symmetry NRU HSE, RF Government grant, ag. No.~14.641.31.0001.
The work of F.F. is supported in part by the MIUR-SIR grant RBSI1471GJ ``Quantum Field Theories at Strong Coupling: Exact Computations and Applications''.
S.H. is supported by the National Science and Engineering Council of Canada,
an FRQNT new university researchers start-up grant, and the Canada Research Chairs program. M.C. would also like to thank LPTHE, Jussieu Paris, for hospitality during the final stage of this work.

\appendix

\section{Invariance of convergence under 3d Kirby moves}
\label{appendix:3dkirby}
In this section, we prove that 3d Kirby moves (Figure~\ref{fig:kirby}) preserve the domain of convergence of homological blocks.
\begin{figure}[htb]
\centering
\includegraphics{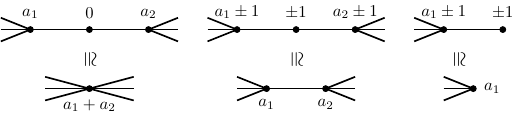}
\caption{3d Kirby moves for plumbed manifolds. The resulting plumbed manifolds $M$ and $M'$ are homeomorphic.}
\label{fig:kirby}
\end{figure}

Consider the bottom left graph of Figure~\ref{fig:kirby}. We may choose the basis in which the adjacency matrix $M$ has framing coefficient $a_1+a_2$ in the $(i,i)$-th coordinate.
By Lemma~\ref{thm:1-boundy}, homological blocks of the manifold plumbed along the bottom left graph would have the following asymptotic behavior for the $q$-exponents:
\begin{equation}
q^{-\frac{(\ell,M^{-1}\ell)}{4}} = q^{ -\frac{(M^{-1})_{ii}\ell_i^2}{4}+O(1)}, \quad \text{as} \quad |\ell| \rightarrow \infty 
\label{eqn:liGrowth}
\end{equation}
for the relevant terms of the sum. 
Next, we consider the adjacency matrix $M'$ of the top left graph, which has the following $[i-1,i+1] \times [i-1,i+1]$-submatrix:
$$\begin{pmatrix} a_1 & 1 & 0 \\ 1 & 0 & 1 \\ 0 & 1 & a_2 \end{pmatrix} \subset M'.$$
A simple linear algebra shows that its inverse has the following $[i-1,i+1]\times[i-1,i+1]$-submatrix:
$$\begin{pmatrix} \frac{(M^{-1})_{ii}}{2} & \cdots & -\frac{(M^{-1})_{ii}}{2} \\ \cdots & \cdots & \cdots \\ -\frac{(M^{-1})_{ii}}{2} & \cdots & \frac{(M^{-1})_{ii}}{2} \end{pmatrix} \subset (M')^{-1}.$$
Therefore, the homological blocks associated to the top left plumbing graph would have the $q$-exponents with the following asymptotic behavior:
\begin{equation}
q^{-\frac{(\ell, (M')^{-1}\ell)}{4}}=q^{ -\frac{(M^{-1})_{ii}(\ell_{i-1}-\ell_{i+1})^2}{8}+O(1)}, \quad \text{as} \quad |\ell| \rightarrow \infty
\label{eqn:lsplitGrowth}
\end{equation}
for the relevant terms of the sum. In particular, the asymptotic behavior depends only on $(\ell_{i-1}-\ell_{i+1})$, which is playing the role of $\ell_i$ in \eq{eqn:liGrowth}. Furthermore, the asymptotic behaviors of both \eq{eqn:liGrowth} and \eq{eqn:lsplitGrowth} are proportional to $(M^{-1})_{ii}$. Thus, we may conclude that the first Kirby move preserves the domain of convergence of homological blocks. We can analogously work out the diagonal elements of $(M')^{-1}$ for the two remaining Kirby moves and observe the invariance.

\section{Further examples}
In this section, we provide further examples of Seifert manifolds whose homological blocks are given in terms of the false theta functions $\Psi^{m+K}_r$ and whose data about the flat connections can be inferred from the modularity dictionary as discussed in \S\ref{sec:resurgence}.
\begin{table}[h]\begin{center}\scalebox{0.9}{
\begin{tabular}{c c c c}
CS action & stabilizer & type & transseries \vspace{3pt}
\\ \toprule
$0$ & $SU(2)$ & central & $e^{2 \pi i k \cdot 0} \bigg( \frac{4 \pi i}{3\sqrt{3}}k^{-3/2} +\frac{19\pi^2}{9\sqrt{3}}k^{-5/2}+ \mathcal{O}(k^{-7/2}) \bigg)$  \\[1.5ex]
$\frac{1}{3}$ & $U(1)$ & abelian & $e^{2 \pi i k \frac{1}{3}} \bigg( \sqrt{3}k^{-1/2} + \frac{5 \pi i}{12\sqrt{3}} k^{-3/2} + \mathcal{O}(k^{-5/2}) \bigg)$ \\[1.5ex]
$-\frac{1}{24}$ & $\pm1$ & irreducible, real & $e^{-2\pi ik\frac{1}{24}} e^{\frac{3 \pi i }{4}} \cdot (-2)  $ \vspace{2pt}\\ \bottomrule
\end{tabular}}
\caption{Transseries and classification of flat connections on $M( -2;\tfrac{1}{2},\tfrac{2}{3},\tfrac{2}{3}) =  S^3_{-3}(\mathbf{3}^{\ell}_1)$ up to an overall factor of $-iq^{-25/24}/2\sqrt{2}$. The corresponding Weil representation is $m+K = 6+3$.}
\end{center}
\end{table}

\begin{table}[h]
\begin{center}\scalebox{0.9}{
\begin{tabular}{c c c c}
CS action & stabilizer & type & transseries \vspace{3pt}
\\ \toprule
$0$ & $SU(2)$ & central & $e^{2 \pi i k \cdot 0} \bigg( \pi i \sqrt{2} k^{-3/2} + \frac{259 \pi^2}{20\sqrt{2}}k^{-5/2} + \mathcal{O}(k^{-7/2}) \bigg)$  \\[1.5ex]
$\frac{1}{2}$ & $SU(2)$ & central & $e^{2 \pi i k \frac{1}{2}} \bigg( \pi i \sqrt{2} k^{-3/2} + \frac{259 \pi^2}{20\sqrt{2}}k^{-5/2} + \mathcal{O}(k^{-7/2}) \bigg)$ \\[1.5ex]
$-\frac{9}{80}$ & $\pm1$ & non-abelian, real & $e^{-2\pi ik\frac{9}{80}} e^{\frac{3 \pi i }{4}} \cdot  \big( \frac{6+ 2\sqrt{5}}{5} \big)^{\frac{1}{4}} $ \\[1.5ex]
$-\frac{49}{80}$ & $\pm1$ & non-abelian, real & $e^{-2\pi ik\frac{49}{80}}e^{\frac{3 \pi i }{4}} \cdot  \big( \frac{6+ 2\sqrt{5}}{5} \big)^{\frac{1}{4}}$ \\[1.5ex]
$-\frac{1}{80}$ & $\pm1$ & non-abelian, complex & 0 \\[1.5ex]
$-\frac{121}{80}$ & $\pm1$ & non-abelian, complex & 0 \vspace{2pt}\\ \bottomrule
\end{tabular}}
\caption{Transseries and classification of flat connections on $M(-1;\frac{1}{2},\frac{1}{4},\frac{1}{5}) = S^3_{+2}(\mathbf{4_1})$ up to an overall factor of $-iq^{19/80}/2\sqrt{2}$. The corresponding Weil representation is $m+K = 20+4$.}
\end{center}
\end{table}

\begin{table}[h]\begin{center}\scalebox{0.9}{
\begin{tabular}{c c c c}
CS action & stabilizer & type & transseries \vspace{3pt}
\\ \toprule
$0$ & $SU(2)$ & central & $e^{2 \pi i k \cdot 0} \bigg( \frac{4 \pi i}{3\sqrt{3}} k^{-3/2} + \frac{103 \pi^2}{18\sqrt{3}}k^{-5/2} + \mathcal{O}(k^{-7/2}) \bigg)$  \\[1.5ex]
$\frac{2}{3}$ & $U(1)$ & abelian & $e^{2 \pi i k \frac{2}{3}} \bigg( \frac{\sqrt{3}}{2}k^{-1/2} - \frac{7 \pi i}{48\sqrt{3}} k^{-3/2} + \mathcal{O}(k^{-5/2}) \bigg)$ \\[1.5ex]
$-\frac{4}{48}$ & $\pm1$ & non-abelian, real & $e^{-2\pi ik\frac{4}{48}} e^{\frac{3 \pi i }{4}} \cdot \sqrt{2} $ \\[1.5ex]
$-\frac{25}{48}$ & $\pm1$ & non-abelian, real & $e^{-2\pi ik\frac{25}{48}}e^{\frac{3 \pi i }{4}} \cdot 1$ \\[1.5ex]
$-\frac{1}{48}$ & $\pm1$ & non-abelian, complex & 0 \vspace{2pt}\\ \bottomrule
\end{tabular}}
\caption{Transseries and classification of flat connections on $M(-1;\frac{1}{3},\frac{1}{3},\frac{1}{4})$ up to an overall factor of $-iq^{-1/48}/2\sqrt{2}$. The corresponding Weil representation is $m+K = 12+3$.}
\end{center}
\end{table}

\begin{table}[h]\begin{center}\scalebox{0.9}{
\begin{tabular}{c c c c}
CS action & stabilizer & type & transseries \vspace{3pt}
\\ \toprule
$0$ & $SU(2)$ & central & $e^{2 \pi i k \cdot 0} \bigg(\frac{4 \pi i }{5\sqrt{5}} k^{-3/2} + \frac{67 \pi^2}{25\sqrt{5}}k^{-5/2}+ \mathcal{O}(k^{-7/2}) \bigg)$  \\[1.5ex]
$\frac{2}{5}$ & $U(1)$ & abelian & $e^{2 \pi i k \frac{2}{5}} \bigg(\frac{1+\sqrt{5}}{2}k^{-1/2} - \frac{(125+61\sqrt{5}) \pi i}{200}k^{-3/2}  + \mathcal{O}(k^{-5/2}) \bigg)$ \\[1.5ex]
$\frac{3}{5}$ & $U(1)$ & abelian & $e^{2 \pi i k \frac{3}{5}} \bigg( \frac{-1+\sqrt{5}}{2}k^{-1/2} +\frac{(125-61\sqrt{5}) \pi i}{200}k^{-3/2}   + \mathcal{O}(k^{-5/2}) \bigg)$ \\[1.5ex]
$-\frac{9}{40}$ & $\pm1$ & non-abelian, real & $e^{-2\pi ik\frac{9}{40}} e^{3\pi i/4}\cdot(1+\frac{1}{\sqrt{5}}) $ \\[1.5ex]
$-\frac{25}{40}$ & $\pm1$ & non-abelian, real & $e^{-2\pi ik\frac{25}{40}} e^{3\pi i/4}\cdot\frac{4}{\sqrt{5}} $ \\[1.5ex]
$-\frac{1}{40}$ & $\pm1$ & non-abelian, complex & $0$ \vspace{2pt}\\ \bottomrule
\end{tabular}}
\caption{Transseries and classification of flat connections on $M( -1;\tfrac{1}{2},\tfrac{1}{5},\tfrac{1}{5})$ up to an overall factor of $-iq^{-19/40}/2\sqrt{2}$. The corresponding Weil representation is $m+K = 10+5$.}\end{center}
\end{table}

\begin{table}[h]\begin{center}\scalebox{0.9}{
\begin{tabular}{c c c c}
CS action & stabilizer & type & transseries \vspace{3pt}
\\ \toprule
$0$ & $SU(2)$ & central & $e^{2 \pi i k \cdot 0} \bigg( \frac{4 \pi i}{27} k^{-3/2} + \mathcal{O}(k^{-5/2}) \bigg)$  \\[1.5ex]
$0$ & $U(1)$ & abelian & $e^{2 \pi i k \cdot 0} \bigg( -k^{-1/2} +  \mathcal{O}(k^{-3/2}) \bigg)$ \\[1.5ex]
$\frac{2}{9}$ & $U(1)$ & abelian & $e^{2 \pi i k \frac{2}{9}} \bigg( (\frac{4}{3}\cos\frac{\pi}{9}-\frac{2}{3})k^{-1/2} + \mathcal{O}(k^{-3/2}) \bigg)$ \\[1.5ex]
$\frac{5}{9}$ & $U(1)$ & abelian & $e^{2 \pi i k \frac{5}{9}} \bigg( -(\frac{4}{3}\sin\frac{\pi}{18}+\frac{2}{3})k^{-1/2} + \mathcal{O}(k^{-3/2}) \bigg)$ \\[1.5ex]
$\frac{8}{9}$ & $U(1)$ & abelian & $e^{2 \pi i k \frac{8}{9}} \bigg( -(\frac{4}{3}\cos\frac{2\pi}{18}+\frac{2}{3})k^{-1/2} + \mathcal{O}(k^{-3/2}) \bigg)$ \\[1.5ex]
$-\frac{9}{72}$ & $\pm1$ & non-abelian, real & $e^{-2\pi ik \frac{9}{72}} e^{3 \pi i/4}\cdot(-2)$\vspace{2pt}\\ \bottomrule
\end{tabular}}
\caption{Transseries and classification of flat connections on $M( -2;\tfrac{1}{2},\tfrac{1}{3},\tfrac{2}{3})$ up to an overall factor of $-iq^{-45/72}/2\sqrt{2}$. The corresponding Weil representation is $m+K = 18+9$.}
\end{center}
\end{table}

\clearpage
\addcontentsline{toc}{section}{References}

\newpage

\newpage{\pagestyle{empty}\cleardoublepage}

\end{document}